\newenvironment{lyxcode}
	{\par\begin{list}{}{
		\setlength{\rightmargin}{\leftmargin}
		\setlength{\listparindent}{0pt}
		\raggedright
		\setlength{\itemsep}{0pt}
		\setlength{\parsep}{0pt}
		\normalfont\ttfamily}%
	 \item[]}
	{\end{list}}
\theoremstyle{definition}
 \newtheorem{example}{\protect\examplename}
\theoremstyle{definition}
\newtheorem{defn}{\protect\definitionname}
\theoremstyle{plain}
\newtheorem{assumption}{\protect\assumptionname}
\theoremstyle{plain}
\newtheorem{thm}{\protect\theoremname}
\theoremstyle{plain}
\newtheorem{lem}{\protect\lemmaname}
\theoremstyle{plain}
\newtheorem{cor}{\protect\corollaryname}
\theoremstyle{plain}
\newtheorem{prop}{\protect\propositionname}
\theoremstyle{remark}
\newtheorem{rem}{\protect\remarkname}
\theoremstyle{remark}
\newtheorem{claim}{\protect\claimname}
\definecolor{ucb}{RGB}{0, 50, 98}
\definecolor{booth}{RGB}{139, 0, 0}
\providecommand{\assumptionname}{Assumption}
\providecommand{\claimname}{Claim}
\providecommand{\corollaryname}{Corollary}
\providecommand{\definitionname}{Definition}
\providecommand{\examplename}{Example}
\providecommand{\lemmaname}{Lemma}
\providecommand{\propositionname}{Proposition}
\providecommand{\remarkname}{Remark}
\providecommand{\theoremname}{Theorem}
\begin{document}
\title{Expert Incentives under Partially Contractible States}
\date{\today}
\author{Zizhe Xia\thanks{\protect\href{http://mailto:zizhe-xia@chicagobooth.edu}{zizhe-xia@chicagobooth.edu}
I am indebted to Doron Ravid for his invaluable guidance and advices.
I thank Eric Budish, Kailin Chen, Alex Frankel, Emir Kamenica, Andrew
McClellan, Daniel Rappoport, Lars Stole, Kun Zhang for their valuable
comments and suggestions. All errors are my own.}}\maketitle

\begin{abstract}
A principal hires an agent to acquire costly information about the
state, but it is not possible to pay the agent based on the realized
states. Instead, the principal has access to an  experiment about
the state, and can pay bonuses based on its realization. The agent
is risk neutral and protected by limited liability. I  characterize
what the principal can incentivize the agent to learn, and how to
design contracts to minimize the costs to provide such incentives.
I then study which experiment is always better at incentive provision.
This gives rise to a novel order on information. In the binary-binary
case, this order is characterized by differences in the likelihood
ratios of the two realizations.

\end{abstract}

\thispagestyle{empty}

\newpage{}
\begin{lyxcode}
\setcounter{page}{1}
\end{lyxcode}

\section{Introduction}

A central question in principal-agent models is how information
shapes the payoffs. In a typical moral hazard problem, a principal
hires an agent to produce certain outcomes. The agent exerts some
(typically unidimensional) effort that stochastically affects the
outcomes. The principal does not observed the effort, and can only
contract with the agent based on some noisy information about the
effort. The comparisons of information have been widely studied in
the production moral hazard setting. This literature builds on the
informativeness principle \citet{holmstrom1979moral}, and has been
developed further in various directions.\footnote{Among those, \citet{gjesdal1982information} first show that better
information in moral hazard models does not have to be Blackwell more
informative. \citet{kim1995efficiency} provides a characterization
based on mean preserving spread. \citet{chen2025experiments} further
generalizes this to the linear Blackwell order. \citet{xia2025comparisons}
implements a geometric perspective based on the state dependent utilities
the principal can generate and studies the column space, the conic
span, and the zonotope orders, each corresponding to the comparisons
of information in a different class of moral hazard problems. } 

In many economic applications, however, the agent is hired not to
produce outcomes directly, but to act as an expert who acquires costly
information about some unknown states of the world. This is fundamentally
different from production because information never changes the marginal
distribution of states, while production does change the marginal
distribution of outcomes. As an example, a firm hires a consultant
to predict the long-term profitability of a new product. The consultant
can choose to acquire any information at some cost. The underlying
states are the product's profitability, but they are not realized
for years and cannot be contracted upon. The firm has to collect
noisy information about the states to contract with the consultant.
This can include early data on sales, growth, market penetration,
and consumer satisfaction after the product launch. Despite the prevalence
of these situations, several important economic questions remain unexplored.\footnote{Several papers have considered the problem of incentivizing information
acquisition under different contractibility assumptions. \citet{rappoport2017incentivizing}
and \citet{yoder2022designing} consider contracting on the realizations
of the agent's experiment. \citet{whitmeyer2022buying}, \citet{li2022optimization},
\citet{bloedel2023costly}, and \citet{sharma2024procuring} consider
contracting on the underlying states. However, none of them studies
the comparisons of the principal's contractible information, which
is the main focus of this paper.} To what extent does the noise limit the principal's ability to provide
incentives? What contractible information is better for the principal?
How does the answer compare to the production moral hazard models? 

This paper studies the comparisons of contractible information in
moral hazard settings for information acquisition. I analyze whether
and which expert incentives can be provided at what cost when the
states of the world are non-contractible, but there is some noisy
observation about the states that can be contracted upon. A risk
neutral principal hires an agent to acquire some costly information
about the states, but it is not possible to pay the agent based on
the realized states. Instead, the principal has access to some noisy
(Blackwell) experiment about the states, and can pay bonuses to the
agent based on its realization. The agent is risk neutral and is protected
by limited liability. He can flexibly acquire a conditionally independent
experiment about the states at some cost, but what he learns is neither
observable nor contractible. The agent's information cost is smooth
and Blackwell monotone. The principal has to design a contract to
provide the correct incentives for the agent to acquire the desired
information and report honestly its realization. 

I  characterize what incentives can be provided given the noisy contractible
experiment in Theorem \ref{thm:imp}. Roughly speaking, the agent
can be incentivized to acquire some information if and only if the
principal's information allows her to provide the correct marginal
benefits to the agent. This implementability condition is summarized
in terms of the column space of the principal's experiment.\footnote{A finite Blackwell experiment can be represented as a matrix. Each
row represents a state, each column represents a realization, and
each element specifies the conditional probability of sending a realization
in a state. The column space of the principal's experiment is therefore
the column space of the matrix that represents it. The column space
of a matrix $A\in\mathbb{R}^{N\times M}$ is defined as $\operatorname{Col}A=\{Av:v\in\mathbb{R}^{M}\}$.
It is the set of all linear combinations of its columns.} I also characterize the set of contracts that provides the correct
incentives in Theorem \ref{thm:imp-contract-decomposition}. All such
contracts can be decomposed into three parts, the first part provides
the correct incentives, the second part consists of bonuses that depend
only on the realization of the principal's experiment, and the last
part is a collection of side bets with zero expected value conditional
on each state. The last two parts do not affect the agent's learning
incentives, but play an important role in the principal's cost minimization
problem. 

An immediate consequence of Theorem \ref{thm:imp} is that, when the
principal's experiment has full row rank, she can incentivize any
feasible learning even if her information is arbitrarily noisy (Corollary
\ref{cor:full-imp}).\footnote{If the principal's information is arbitrarily noisy, the cost of
implementation can be arbitrarily high.} The full row rank condition is similar to the identifiability condition
in repeated games \citep{fudenberg1994folk}. It says that the principal
can identify any state distribution if she can observe infinitely
many draws from her experiment. This gives her the ability to create
any state-dependent utilities for the agent. The condition is related
to, though distinct from, the full rank condition in \citet{cremer1988full}.
They prove that an auctioneer can extract all surplus in dominant
strategies using side bets, provided that bidders have linearly independent
interim beliefs. Their full rank condition is similar to mine in the
sense that both allow the principal to construct any state-dependent
utility. Yet, neither condition implies the other, and full surplus
extraction is not possible here because of limited liability.

The analysis also motivates comparing experiments based on their column
spaces. The column space of an experiment is the set of all state
dependent utilities the principal can generate. Say that an experiment
dominates another in the column space order if the column space of
the former contains that of the latter.\footnote{This order is first introduced by \citet{azrieli2022elicitability}
under a different name in the context of elicitation, and later studied
by \citet{xia2025comparisons} in moral hazard problems, who refers
to it as the column space order.} Proposition \ref{prop:imp-comparison} says that an experiment has
a larger column space than another if and only if it can always implement
a larger set of learning, regardless of the agent's cost function.

Next, I study the principal's cost minimization problem, where the
principal seeks the cheapest contract to incentivize the agent to
learn a given experiment -- this minimum cost defines the principal's
indirect cost of information. When the principal's experiment has
full row ranks, I characterize this indirect cost function and the
associated cost-minimizing contracts in Proposition \ref{prop:LL-opt-contract}.
This indirect cost function is neither posterior separable nor Blackwell
monotone. When the principal's experiment has deficient row ranks,
I provide simple linear programs to determine the optimal contracts.

I then characterize what contractible information always has a smaller
indirect cost of information, regardless of the information the agent
is asked to acquire and the cost function the agent has. This defines
a novel indirect cost order that is different from other information
orders in the literature. 

I fully characterize the indirect cost order in the binary-binary
case in Proposition \ref{prop:binary-LL-characterization}.\footnote{That is, the state space is binary, and focus on contractible experiments
with only two realizations.} In this case, an experiment dominates the other in the indirect cost
order if and only if it has larger differences between the likelihood
ratios of the two realizations. This difference can be interpreted
as the marginal productivity of incentives from a contractible experiment.
Incentives to acquire information only depend on how much the agent
gains from making the correct report relative to no information. That
is, only the differences, rather than the levels, of the agent's state
dependent utilities following different reports matter. The likelihood
ratios capture how efficient each realization can be used to generate
state dependent utilities for the agent. The difference between them
then captures how efficient the experiment can be used to generate
information acquisition incentives, which depends on the difference
in the state dependent utilities following different report. 

More generally, a sufficient but not necessary condition for indirect
cost dominance is that an experiment dominates another in the conic
span order (Proposition \ref{prop:LL-cost-comparison}). The conic
span of an experiment is the set of all state dependent utilities
the principal can generate using non-negative payments.\footnote{Mathematically, the conic span of a matrix $A\in\mathbb{R}^{N\times M}$
is defined as $\operatorname{Cone}A=\{Av:v\in\mathbb{R}_{+}^{M}\}$.
It is the set of all non-negative linear combinations of its columns.} Say that an experiment dominates another in the conic span order
if the conic span of the former contains that of the latter. Proposition
\ref{prop:LL-cost-comparison} says that if an experiment has a larger
conic span than another, then it can implement any experiment at a
lower cost.

The remainder of the paper is organized as follows. Section \ref{subsec:lit-review}
reviews the related literature. Section \ref{subsec:example} walks
through a simple example of two Blackwell non-comparable experiments
where one is always better at incentive provision than the other.
Section \ref{sec:model} describes the information acquisition model.
Section \ref{sec:implementability} presents the results of the implementability
problem and comparisons by the implementable sets. Section \ref{sec:IR-cost-min}
solves the cost minimization problem and compares experiments based
on the implementation costs. Section \ref{sec:conclusion} concludes.

\subsection{Related Literature\label{subsec:lit-review}}

This paper studies a problem of information acquisition and elicitation.
A number of papers have also studied the incentive problem in information
acquisition under different assumptions of contractibility. Some papers
assume the realization of the expert's experiment is verifiable and
contractible \citep{rappoport2017incentivizing,yoder2022designing}.
They show that there is no loss to the principal from contracting
on realizations rather than the experiment itself under risk neutrality.
Other papers assume that the expert's learning is not observed, but
the underlying states are contractible \citep{whitmeyer2022buying,bloedel2023costly,muller2024rational,sharma2024procuring}.
\citet{bloedel2023costly} study the behavior of this problem when
the number of agents tends to infinity. \citet{azrieli2021monitoring}
studies a similar problem with non-contractible states but the principal
can hire multiple agents and compare their reports.

Among those, the closest paper is \citet{sharma2024procuring}. They
study a similar problem where a principal hires an expert with posterior
separable costs to acquire information, but the states are contractible.
They find that the principal can incentivize the expert to acquire
any feasible information. I take one step further and show that the
full implementability holds as long as the contractible information
reveals even a little bit about every state. Moreover, my framework
allows for comparing the contractible information, and shreds light
on what information is better at incentive provision. I also extend
this framework beyond posterior separable cost functions. 

The literature on proper scoring rules also studies such contracting
problems with a focus on information elicitation (for example, \citealp{glenn1950verification},
\citealp{mccarthy1956measures}, \citealp{savage1971elicitation},
and more recent works by \citealp{chen2021optimal} and \citealp{li2022optimization}).
This line of work explores how to elicit an expert's true belief about
a probabilistic state by designing rewards, or ``scores'', that
depends on both the expert's reported belief and the realized state.
There, truthtelling is guaranteed if the expert's value function is
convex in the reported belief. \citet{liu2023surrogate} and \citet{azrieli2022elicitability}
consider settings where the principal can only contract on a noisy
experiment about the states. \citet{liu2023surrogate} restrict attention
to binary states, and their focus is how to cross validate reports
from multiple experts. \citet{azrieli2022elicitability} characterizes
the contract for eliciting the expert's belief given the principal's
noisy experiment. They also develop an order that compares which experiment
allows the principal to elicit a larger set of beliefs. This order
coincides with the column space order, which also characterizes the
comparisons of implementability studied in this paper. The same order
applies because in both problems incentives are governed by the agent’s
state dependent utility. The column space precisely characterizes
which such utilities can be implemented.

This paper differs from the scoring rule papers mentioned above in
that the expert's information is acquired endogenously and flexibly.
In addition to elicitation, the contract also has to provide the correct
learning incentives.\footnote{\citet{li2022optimization} also concerns the expert's incentives
to acquire information. However, the expert in their model only has
a binary choice of whether to acquire information. The expert in my
model can flexibly acquire any information. He can choose not only
how much to learn, but also what to learn about. } Moreover, in my model, information elicitation does not come with
additional costs if correct incentives are provided for information
acquisition due to the convexity of the agent's value function. 

At a higher level, this paper also belongs to the literature on the
comparison of information, pioneered by Blackwell \citeyearpar{blackwell1951comparison,blackwell1953equivalent}.
In Blackwell's papers, he compares information across all decision
problems and characterizes the Blackwell order using garbling. The
Blackwell order turns out to be too restrictive because it has many
non-comparabilities. A large literature is devoted to comparison of
information by restricting attention to decision problems that are
monotone \citep{lehmann2011comparing,kim2023comparing}, satisfy a
single crossing property \citep{persico2000information}, have the
interval dominance property \citep{quah2009comparative}.\footnote{Other papers take a route different from restricting the set of decision
problems. For example, \citet{brooks2024comparisons} consider robustness
to additional information. \citet{mu2021blackwell} study multiple
independent draws from the same experiment.} Focusing on production moral hazard problems, \citet{holmstrom1979moral}
proposes the informativeness principle as one way to compare information,
which is then extended by \citet{kim1995efficiency}, \citet{chen2025experiments}
and \citet{xia2025comparisons}. On the other hand, my paper compares
information in moral hazard problems where the agent's task is acquiring
information. 

Among those, the closest is \citet{xia2025comparisons} who uses a
geometric approach to compare experiments in a flexible production
model based on the agent's state dependent utilities that can be generated.
He finds that the column space order characterizes the comparison
of implementability, and the conic span order characterizes the cost
comparison when the agent is risk neutral and protected by limited
liability. A surprising finding in this paper is that the cost comparison
is different when the agent's task is to acquire information. The
conic span order is sufficient but not necessary. This is because
incentives are provided via different channels in the two models.
When the agent is hired to produce an outcome, what matters for incentives
is his state dependent utility following each outcome. When the agent
is hired to acquire information, any state specific bonus to the agent
provides no incentive as the agent does not control the states. Incentives
to acquire information are provided through making information valuable
by designing the differences in the state dependent utilities following
different reports.

\subsection{An Example \label{subsec:example}}

I now present a simple example in which two experiments are Blackwell
non-comparable, yet one always yields better expert incentives than
the other. I focus on the case of a binary state space and experiments
with binary realizations. This binary-binary setting highlights the
difference between the comparison of information developed in this
paper and those studied in the existing literature.\footnote{The Blackwell order is known to be too strong for comparing information
in moral hazard problems. \citet{chen2025experiments} and \citet{xia2025comparisons}
propose the zonotope order as the correct criterion in production
moral hazard settings when the agent can be arbitrarily risk averse.
\citet{xia2025comparisons} further shows that, with a risk-neutral
agent under limited liability, the conic span order is the correct
order to compare information. In the binary-binary case, however,
these orders coincide with the Blackwell order. This example therefore
illustrates that when the agent's task is to acquire information,
the relevant incentive problem is different from production, and the
information order is also different.} 

The setup is as follows. A principal hires an agent to acquire some
information about some unknown binary state $\omega\in\Omega=\{\omega_{1},\omega_{2}\}$.
For example, a firm hires a consultant to assess the long-term profitability
of some new product. The agent is risk neutral, protected by limited
liability, and can flexibly acquire information at some cost, but
the acquisition process is unobservable to the principal. 

To provide incentives, the principal observes only the outcome of
a noisy experiment about the state,\footnote{The agent's information is assumed to be conditionally independent
of the principal's experiment.} since the state itself is not contractible - it can take years for
the long-term profitability to be realized, and the consultant must
be paid earlier. Examples of such noisy experiments include early
data on sales, growth, or customer satisfaction. The principal asks
the agent to acquire and report information, and pays him based on
his report and the realization of the noisy experiment. Suppose the
principal can choose between two such experiments shown in Example
\ref{exa:EPIR-not-Blackwell}, and wishes to know which makes incentives
cheaper to provide.
\begin{example}
\label{exa:EPIR-not-Blackwell} Consider the following two experiments
in the matrix form with each row representing a state and each column
a realization,
\begin{align*}
\mathcal{E}_{1}=\begin{array}{c}
\omega_{1}\\
\omega_{2}
\end{array}\overset{\begin{array}{cc}
y_{1} & y_{2}\end{array}}{\begin{bmatrix}0.7 & 0.3\\
0.3 & 0.7
\end{bmatrix}}, & \;\mathcal{E}_{2}=\begin{array}{c}
\omega_{1}\\
\omega_{2}
\end{array}\overset{\begin{array}{cc}
y_{1}' & y_{2}'\end{array}}{\begin{bmatrix}0.5 & 0.5\\
0.2 & 0.8
\end{bmatrix}}.
\end{align*}
Under the uniform prior $\mu_{0}:=\Pr(\omega_{2})=0.5$, $\mathcal{E}_{1}$
induces posteriors $y_{1}=0.3$ and $y_{2}=0.7$, and $\mathcal{E}_{2}$
induces posteriors $y_{1}'=0.29$ and $y_{2}'=0.61$.

\end{example}
The experiments $\mathcal{E}_{1}$ and $\mathcal{E}_{2}$ are clearly
Blackwell non-comparable. Yet, when used to provide expert incentives,
$\mathcal{E}_{1}$ is always better than $\mathcal{E}_{2}$, regardless
of the agent's information cost or the specific experiment the principal
wishes the agent to learn. 

The remainder of this section develops the intuition behind the example.
For simplicity, I restrict attention to the simple case where the
principal wants to incentivize to agent to acquire some binary experiment
$\mathcal{E}_{A}$ that reveals whether the state is more likely to
be $\omega_{1}$ or $\omega_{2}$. The result holds for arbitrary
$\mathcal{E}_{A}$. In this setting, the principal asks the agent
to report $x_{1}$ or $x_{2}$, which stand for state $\omega_{1}$
or $\omega_{2}$ being more likely, and compensates him based on the
report and the realization of the her own experiment.

Under limited liability, providing incentives at minimum cost reduces
to considering pay-if-correct contracts: The agent is paid only if
his report and the principal's experiment realization point in the
same direction. Such contracts leave no excessive rents on the table
and are optimal. Suppose, for now, that the principal contracts using
$\mathcal{E}_{1}$. A pay-if-correct contract pays the agent $t_{1}$
if he reports $\omega_{1}$ as more likely and $y_{1}$ is realized,
$t_{2}$ if he reports $\omega_{2}$ as more likely and $y_{2}$ is
realized, and zero otherwise.

Next, I turn to what matters for providing incentives. In the context
of information acquisition, incentives are created by making information
more valuable to the agent. The value of information is captured by
the agent's value function, which maps his posterior beliefs about
the state to the maximum expected payment attainable under the contract. 

To illustrate, Figure \ref{fig:value-func} Panel A plots the agent's
expected payment as a function of his belief about the realization
of the principal's experiment $\mathcal{E}_{1}$, depending on his
report. The agent's value function corresponds to only a subset of
this plot. The reason is that, even if the agent learns with certainty
that the state is $\omega_{1}$ or $\omega_{2}$, the probabilities
of $y_{1}$ or $y_{2}$ remain less than one because $\mathcal{E}_{1}$
is noisy. Panel B plots the expected payment as a function of the
agent's belief about the state on the horizontal axis, which is a
re-scaling of Panel A.\footnote{More specifically, Panel B is a re-scaling of Panel A, restricted
to the range of probabilities consistent with a feasible posterior
over the state.} The agent's value function is simply the upper envelope of the two
lines in Panel B. 

\begin{figure}[th]
\begin{centering}
\caption{Incentives for Information Acquisition \label{fig:value-func}}
\par\end{centering}
\begin{centering}
\medskip{}
\par\end{centering}
\begin{centering}
\begin{minipage}[t]{0.48\columnwidth}%
\begin{center}
{\small Panel A: Expected Payment from a Contract}{\small\par}
\par\end{center}
\begin{center}
\includegraphics[width=1\columnwidth]{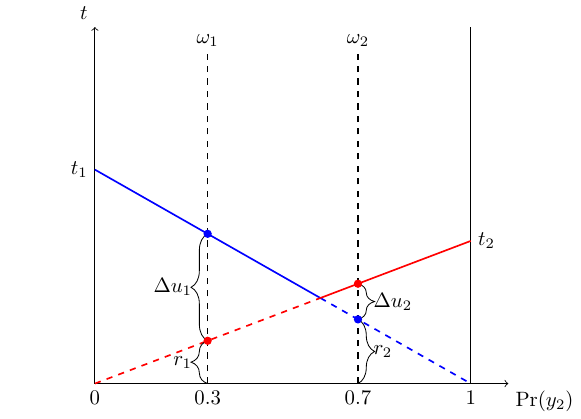}
\par\end{center}%
\end{minipage}\quad{}%
\begin{minipage}[t]{0.48\columnwidth}%
\begin{center}
{\small Panel B: Value Function}{\small\par}
\par\end{center}
\begin{center}
\includegraphics[width=1\columnwidth]{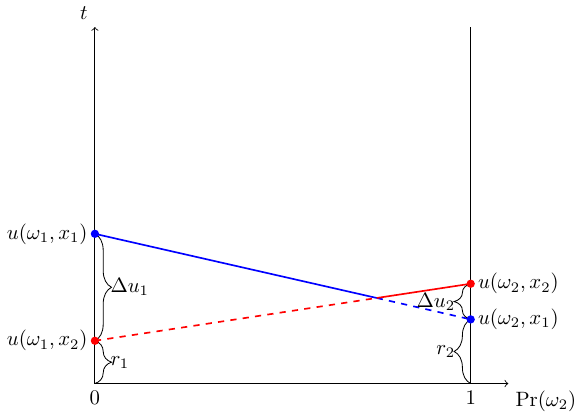}
\par\end{center}%
\end{minipage}
\par\end{centering}
\centering{}%
\begin{minipage}[t]{0.9\columnwidth}%
\begin{singlespace}
{\scriptsize Notes: This figure depicts the agent's expected payment
and value function from a pay-if-correct contract. Panel A plots the
agent's expected payment from a pay-if-correct contract as a function
of his belief about the realization of the principal's experiment,
$\Pr(y_{2})$. The blue line is the agent's expected payment when
he reports $x_{1}$ ($\omega_{1}$ is more likely), and the red line
is the agent's expected payment when he reports $x_{2}$ ($\omega_{2}$
is more likely). The agent always reports the $x$ that maximizes
his expected payment. The two dashed vertical lines are the agent's
belief about the realization $\Pr(y_{2})$ when he is certain about
the state. Since the principal's experiment is noisy, $\Pr(y_{2})$
has to reside in $[0.3,0.7]$. Panel B plots the agent's value function.
This is his expected payment from the contract as a function of his
belief about the state, $\Pr(\omega_{2})$. This is simply Panel A
zoomed into $[0.3,0.7]$ with the horizontal axis redefined. }{\scriptsize\par}
\end{singlespace}

\end{minipage}
\end{figure}

I now read off the agent's value function his information acquisition
incentives and rents. Formally, let $u(\omega,x)$ be the agent's
expected payment when he reports $x$ and the realized state is $\omega$.\footnote{Given the state $\omega$, the principal's experiment pins down a
distribution over $y_{1}$ and $y_{2}$. The term $u(\omega,x)$ is
then the expected payment with respect to this distribution.} Incentives to acquire information depend only on how much the agent
gains from learning the true state relative to having no information.
That is, they are determined by the differences in the agent's state
dependent utility, namely, 
\[
\Delta u_{1}:=u(\omega_{1},x_{1})-u(\omega_{1},x_{2}),\:\Delta u_{2}:=u(\omega_{2},x_{2})-u(\omega_{2},x_{1}).
\]
Here, $\Delta u_{1}$ ($\Delta u_{2}$) is measures the additional
payment the agent can obtain when the state is $\omega_{1}$ ($\omega_{2}$)
and he makes the correct report, relative to the wrong report. Call
$(\Delta u_{1},\Delta u_{2})$ an incentive profile. 

By contrast, any payment that depends solely on the realized state
- independent of the agent's report - provides no incentives. In this
case, 
\[
r_{1}:=u(\omega_{1},x_{2}),\;r_{2}:=u(\omega_{2},x_{1}),
\]
are rents to the agent. They are payments the agent can guarantee
even without knowing the state. In Figure \ref{fig:value-func}, incentives
and rents are also labeled accordingly.

\begin{figure}[th]
\begin{centering}
\caption{Optimal Pay-if-Correct Contracts \label{fig:opt-contracts}}
\par\end{centering}
\begin{centering}
\medskip{}
\par\end{centering}
\begin{centering}
\begin{minipage}[t]{0.48\columnwidth}%
\begin{center}
{\small Panel A: Provide $(\Delta u_{1},0)$ with $\mathcal{E}_{1}$}{\small\par}
\par\end{center}
\begin{center}
\includegraphics[width=1\columnwidth]{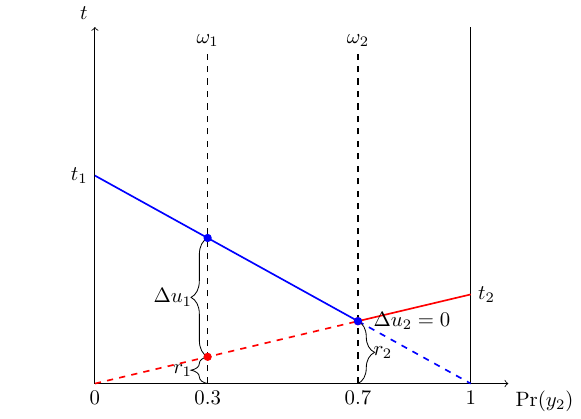}
\par\end{center}%
\end{minipage}\quad{}%
\begin{minipage}[t]{0.48\columnwidth}%
\begin{center}
{\small Panel B: Provide} $(0,\Delta u_{2})${\small{} with $\mathcal{E}_{1}$}{\small\par}
\par\end{center}
\begin{center}
\includegraphics[width=1\columnwidth]{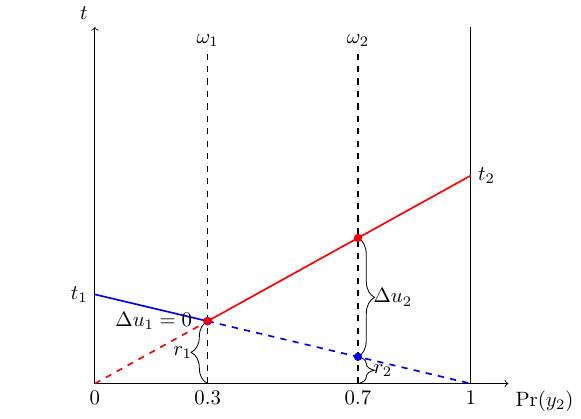}
\par\end{center}%
\end{minipage}
\par\end{centering}
\begin{centering}
\bigskip{}
\par\end{centering}
\begin{centering}
\begin{minipage}[t]{0.48\columnwidth}%
\begin{center}
{\small Panel A: Provide $(\Delta u_{1},0)$ with $\mathcal{E}_{2}$}{\small\par}
\par\end{center}
\begin{center}
\includegraphics[width=1\columnwidth]{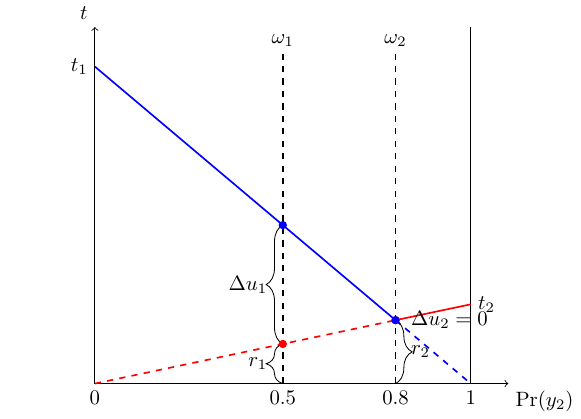}
\par\end{center}%
\end{minipage}\quad{}%
\begin{minipage}[t]{0.48\columnwidth}%
\begin{center}
{\small Panel B: Provide} $(0,\Delta u_{2})${\small{} with $\mathcal{E}_{2}$}{\small\par}
\par\end{center}
\begin{center}
\includegraphics[width=1\columnwidth]{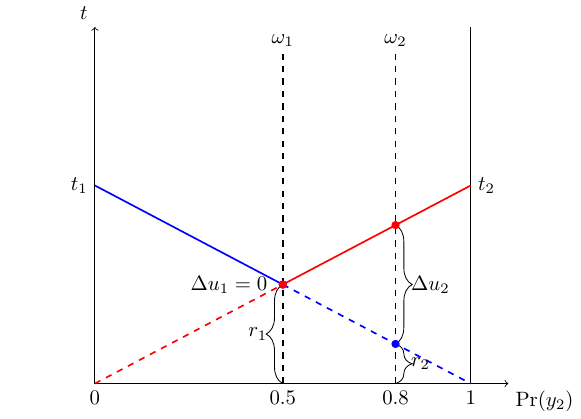}
\par\end{center}%
\end{minipage}
\par\end{centering}
\centering{}%
\begin{minipage}[t]{0.9\columnwidth}%
\begin{singlespace}
{\scriptsize Notes: This figure depicts the optimal contracts to provide
certain incentive profiles. Panels A and B plot the pay-if-correct
contract for $(\Delta u_{1},0)$ and $(0,\Delta u_{2})$ under $\mathcal{E}_{1}$.
Panels C and D plot the pay-if-correct contract for $(\Delta u_{1},0)$
and $(0,\Delta u_{2})$ under $\mathcal{E}_{2}$. Together, they show
that $\mathcal{E}_{1}$ can always provide cheaper expert incentives
than $\mathcal{E}_{2}$, because the rents $r_{1}$ and $r_{2}$ in
Panels A and B are lower than the corresponding rents in Panels C
and D. }{\scriptsize\par}
\end{singlespace}

\end{minipage}
\end{figure}

To show that $\mathcal{E}_{1}$ provides cheaper incentives than $\mathcal{E}_{2}$
in all circumstances, it suffices to verify that, for any incentive
profile $(\Delta u_{1},\Delta u_{2})$, the rents paid to the agent
are lower under $\mathcal{E}_{1}$ than $\mathcal{E}_{2}$. This comparison
does not depend on the agent’s information cost function, since the
principal’s information does not affect the agent’s cost.\footnote{See Section \ref{sec:model} for details on the cost function.} 

In the binary-binary setting, incentives are linear in the pay-if-correct
contracts. This means we can analyze the simpler incentive profiles
$(\Delta u_{1},0)$ and $(0,\Delta u_{2})$ separately. Linearity
then implies that the contract for $(\Delta u_{1},\Delta u_{2})$
is obtained by adding the corresponding pay-if-correct subcontracts,
and the total rents are the sum of the rents from the subcontracts. 

Consider first the contract that provides $(\Delta u_{1},0)$. The
goal is to determine how much rents $r_{1}$ and $r_{2}$ are needed
to generate a unit of $\Delta u_{1}$ while keeping $\Delta u_{2}=0$.
In other words, we are interested in the marginal cost of incentives.
The corresponding contract is shown in Figure \ref{fig:opt-contracts}
Panel A. The relation between $r_{1},r_{2}$ and $\Delta u_{1}$ follows
from similar triangles, 
\[
\dfrac{r_{2}}{r_{1}}=\dfrac{0.7}{0.3},\;\dfrac{r_{2}}{\Delta u_{1}+r_{1}}=\dfrac{0.3}{0.7}.
\]
Solving gives that producing a unit of $\Delta u_{2}$ requires $0.225$
units of $r_{1}$ and $0.525$ units of $r_{2}$. Similarly, we can
read off Panel B of Figure \ref{fig:opt-contracts} to conclude that
producing a unit of $\Delta u_{2}$ requires $0.525$ units of $r_{1}$
and $0.225$ units of $r_{2}$. 

On the other hand, we can do the same calculations for experiment
$\mathcal{E}_{2}$, as shown in Panels C and D. Every unit of $\Delta u_{1}$
requires $0.33$ units of $r_{1}$ and $0.53$ units of $r_{2}$.
Every unit of $\Delta u_{2}$ requires $0.83$ units of $r_{1}$ and
$0.33$ units of $r_{2}$. 

Taking together, the rents required to provide a unit of incentives
is always cheaper under $\mathcal{E}_{1}$ than $\mathcal{E}_{2}$.
Thus, $\mathcal{E}_{1}$ always outperforms $\mathcal{E}_{2}$ for
expert incentives despite that they are Blackwell non-comparable.
Section \ref{sec:IR-cost-min} generalizes this comparison to all
binary-binary experiments, providing a characterization of expert-incentive
comparisons based on the differences in likelihood ratios.

\section{Model\label{sec:model}}

\subsection{Preliminaries}

I adopt the following notations throughout the paper: Matrices are
denoted by uppercase letters, e.g., $\mathcal{E},T$; vectors are
denoted by lowercase letters, e.g., $x,\mu_{0}$; inequality $T\geq0$
means every entry of $T$ is above zero; $\boldsymbol{1}$ and $\boldsymbol{0}$
are vectors or matrices of ones and zeros of conformable shape, and
the shape will be indicated as a subscript when necessary, e.g., $\boldsymbol{1}_{M}$
or $\boldsymbol{0}_{M\times N}$. 

\subsubsection*{States}

There is a set of $N$ states $\Omega=\{\omega_{n}\}_{n=1}^{N}$ with
an interior prior $\mu_{0}\in\operatorname{int}\Delta\Omega$.\footnote{I use $\operatorname{int}\Delta\Omega$ to denote the interior of
$\Delta\Omega$. This is the set of all interior beliefs.} The prior $\mu_{0}$ is viewed as a vector in $\mathbb{R}^{N}$.
Let $\mu_{0}^{n}$ denote the probability of $\omega_{n}$ at $\mu_{0}$.

\subsubsection*{Information}

Information is modeled as (Blackwell) experiments. I restrict attention
to finite experiments. They can be represented by row stochastic matrices,
where each row represents a state $\omega_{n}\in\Omega$, each column
represents a realization $x$, and each entry is the conditional probability
$\Pr\left(x\mid\omega_{n}\right)$ of realization $x$ in state $\omega_{n}$.
\begin{defn}
[Blackwell Experiments] An experiment $\mathcal{E}$ with $M$ realizations
is an $N\times M$ row stochastic matrix, that is, every entry is
weakly positive $\mathcal{E}\geq0$, and every row of $\mathcal{E}$
sums to one $\mathcal{E}\bm{1}_{M\times1}=\bm{1}_{N\times1}$. Let
$\mathcal{E}(x\mid\omega_{n})$ denote the conditional probability
of realization $x$ in state $\omega_{n}$, and $\mathcal{E}(x)$
be the unconditional probability of realization $x$. Let $E^{M}$
be the set of all experiments with $M$ realizations. Let $E=\cup_{M=1}^{\infty}E^{M}$
be the set of all finite experiments.
\end{defn}
 Let $\left\langle \mathcal{E}\mid\mu_{0}\right\rangle $ denote
the distribution of realizations of $\mathcal{E}$ when the prior
is $\mu_{0}$. When there is no ambiguity, I drop the dependency on
the prior and also use $\mathcal{E}$ to represent this distribution
for brevity. Given the prior $\mu_{0}$, each realization $x$ of
$\mathcal{E}$ induces a posterior belief. Use $x$ to also represent
this posterior belief. $\mathcal{E}$ thus represents a distribution
of posterior beliefs. It is well known from \citet{kamenica2011bayesian}
that a distribution of posterior beliefs can be induced by an experiment
if and only if it is Bayes plausible, that is, it averages back to
the prior. Due to this result, I will use $E^{M}$ to also represent
the set of Bayes plausible posterior distributions supported on $M$
posteriors, and $E$ the set of all Bayes plausible posterior distributions.

Write $\mathcal{E}\geq_{\text{B}}\mathcal{E}'$ if experiment $\mathcal{E}$
Blackwell dominates $\mathcal{E}'$. \citet{blackwell1951comparison}'s
theorem says that $\mathcal{E}\geq_{B}\mathcal{E}'$ if and only if
$\mathcal{E}'$ is a garbling of $\mathcal{E}$, or equivalently,
$\mathcal{E}'=\mathcal{E}G$ for some row stochastic matrix $G$ with
positive entries $G\geq0$ and each row summing to one $G\bm{1}=\bm{1}$.
In this case, $G$ is called a garbling matrix since it garbles the
information in $\mathcal{E}$ to produce $\mathcal{E}'$.

\subsection{Contracting for Information Acquisition}

\subsubsection*{The Principal}

A risk neutral principal (she) hires an agent (he) to acquire some
(exogenously given) experiment $\mathcal{E}_{A}\in E^{K}$ that induces
the set of posteriors $X:=\{x_{k}\}_{k=1}^{K}$.\footnote{For example, the principal needs $\mathcal{E}_{A}$ for some (unmodeled)
decision problem. } The principal does not observe and cannot contract directly on the
agent's learning process. Moreover, the states are not contractible.
Instead, the principal exogenously has access to some experiment $\mathcal{E}_{P}\in E^{M}$
that induces the set of posteriors $Y:=\{y_{m}\}_{m=1}^{M}$.  $\mathcal{E}_{P}$
is publicly known and its realization is contractible. This setting
nests contractible states as a special case when $\mathcal{E}_{P}$
fully reveals the states.

To provide incentives, the principal can offer a contract to the agent
with bonuses determined by both the realization of $\mathcal{E}_{P}$
and what the agent claims to have learned. More specifically, a contract
under $\mathcal{E}_{P}$ is a pair $(\mathcal{M},T)$. $\mathcal{M}$
is a finite space of messages large enough for the agent to report
his findings after learning. For example, $\mathcal{M}$ can be the
set of possible consulting reports the agent may write. $T:Y\times\mathcal{M}\to\mathbb{R}$
specifies the payment to the agent given the realization of $\mathcal{E}_{P}$
and the agent's reported message. 

Without loss, we can always set $\mathcal{M}=X$ and use the payment
rule $T$ to represent the contract. This comes from a revelation-like
argument. That is, the principal asks the agent ``which realization
of $\mathcal{E}_{A}$ did you observe'' and requires the agent's
answer to be a realization of $\mathcal{E}_{A}$, even though the
agent may acquire experiments other than $\mathcal{E}_{A}$. From
now on, I drop the dependency on $\mathcal{M}$ and use $T$ to represent
the contract.
\begin{defn}
[Contracts] A contract under $\mathcal{E}_{P}$ is a function $T:Y\times X\to\mathbb{R}$
where $T(y_{m},x_{k})$ specifies how much the principal has to pay
the agent when the realization of $\mathcal{E}_{P}$ is $y_{m}$ and
the agent reports $x_{k}$. 
\end{defn}
For notational convenience, the contract $T$ will be viewed as an
$M\times K$ matrix of payments with rows representing realizations
of $\mathcal{E}_{P}$ and columns representing the agent's reports.
Let $T_{k}:=T(\cdot,x_{k})$ denote the $k$-th column of this matrix.
$T_{k}$ represents the vector of payments for each realization of
$\mathcal{E}_{P}$ when the agent reports $x_{k}$.

\subsubsection*{The Agent}

The agent is also risk neutral and does not intrinsically care about
the states.\footnote{In reality, the agent may have material preferences about the states
either due to his private benefits or reputational concerns. My model
can be extended to consider the agent's material preferences. } He has an outside option with its payoff normalized to zero. 

The agent can acquire an experiment about the states at a cost. His
information cost is commonly known and is described by some function
$C:\Delta\Delta\Omega\to\mathbb{R}_{+}\cup\{+\infty\}$, where an
infinite cost means the experiment is not feasible. It suffices to
consider the set of feasible experiments. $C$ can take any form as
long as it is smooth and Blackwell monotone, and satisfies some mild
technical assumptions. I will come back to the assumptions on $C$
later in Section \ref{subsec:info-cost}. 

I assume that any experiment the agent may acquire is conditionally
independent with the principal's experiment $\mathcal{E}_{P}$. This
means the agent is not able to learn about the realization of $\mathcal{E}_{P}$
directly. He can only do so indirectly by learning about the states.
This assumption is likely to hold if the principal and the agent do
not know each other's sources of information. 

Given a contract $T$, the agent decides whether to accept it, and
if so, chooses what to learn and what to report to maximize the expected
payment minus the information cost. Formally, once the agent has accepted
the contract, he chooses an experiment $\mathcal{E}$ and a reporting
rule $s:\Delta\Omega\to\Delta X$ to solve
\begin{align*}
\max_{\mathcal{E}\in E,s} & \:\mathbb{E}_{\mu\sim\left\langle \mathcal{E}\mid\mu_{0}\right\rangle ,y_{m}\sim\left\langle \mathcal{E}_{P}\mid\mu\right\rangle }\left[T(y_{m},s(\mu))\right]-C(\mathcal{E}),
\end{align*}
where the first term is the expected payment with $T(y_{m},\cdot)$
extended to include probablistic reports and the expected value taken
over the agent's posterior belief $\mu$ and the distribution of $\mathcal{E}_{P}$'s
realizations conditional on $\mu$, and the second term is the information
cost.\footnote{It is worth mentioning that $\mathcal{E}_{P}$'s realizations are
distributed according to $\left\langle \mathcal{E}_{P}\mid\mu\right\rangle $
rather than $\left\langle \mathcal{E}_{P}\mid\mu_{0}\right\rangle $.
This is because when the agent acquires information and updates his
belief about the states, he also updates his belief about the $\mathcal{E}_{P}$'s
realizations. The agent should evaluate the expected payment using
the updated belief. } When indifferent, assume the agent always breaks ties in favor of
the principal.

The problem can be simplified as the reporting strategy $s$ does
not play a role. This again comes from a revelation-like argument.
It is without loss to focus on direct recommendation experiments $\mathcal{E}\in E^{K}$
with $K$ realizations, each corresponding to recommending a report
$x_{k}\in X$, together with a reporting rule $s^{*}$ that always
follows the recommendation. Using a direct recommendation experiment,
the agent's problem can be simplified as 
\begin{equation}
\max_{\mathcal{E}\in E^{K}}\;\mathbb{E}_{\mu_{k}\sim\mathcal{E}}\left[\mu_{k}\cdot\mathcal{E}_{P}T_{k}\right]-C(\mathcal{E}),\label{eq:A-problem}
\end{equation}
where $\mu_{k}$ is the posterior belief induced by $\mathcal{E}$
at which the agent reports the $k$-th message $x_{k}$; $\mathcal{E}_{P}T_{k}$
is the product of the matrix $\mathcal{E}_{P}$ and the vector $T_{k}$,
and this product represents a vector of the agent's expected payment
every state when he reports $x_{k}$; the dot product $\mu_{k}\cdot\mathcal{E}_{P}T_{k}$
is the agent's interim expected payment when he holds belief $\mu_{k}$
and reports the $k$-th message $x_{k}$. 

The agent's problem (\ref{eq:A-problem}) is therefore a standard
problem of rational inattention. $\mathcal{E}_{P}T$ specifies the
agent's state-dependent utility for every pair of report and realized
state, and the agent optimally acquires information about the state
to decide on which $x_{k}$ to report. 

\subsubsection*{Timing}

The timing of the model is as follows. Nature draws $\omega$ which
is not observed by the principal or the agent. The principal offers
a contract $T$ to the agent before learning the realization of $\mathcal{E}_{P}$.
The agent chooses whether to accept the contract, and if so, what
experiment $\mathcal{E}$ to learn. Nature draws a realization of
$\mathcal{E}$ given $\omega$, and the agent chooses what to report
given this realization. Nature draws a realization of $\mathcal{E}_{P}$
conditionally independent of $\mathcal{E}$ given $\omega$, and the
principal pays the agent based on his report and the realization of
$\mathcal{E}_{P}$. 

\subsubsection*{The Contracting Problem}

The principal's objective is minimize the expected cost of contract
$T$ subject to the agent's incentive constraint (IC), participation
constraint (PC), and limited liability (LL). Formally, the principal
solves
\begin{align}
\kappa^{C,\mu_{0}}(\mathcal{E}_{A},\mathcal{E}_{P}):= & \min_{t\in\mathbb{R}^{M\times K}}\;\mathbb{E}_{x_{k}\sim\left\langle \mathcal{E}_{A}\mid\mu_{0}\right\rangle ,y_{m}\sim\left\langle \mathcal{E}_{P}\mid x_{k}\right\rangle }\left[T(y_{m},x_{k})\right],\label{eq:P-problem}\\
\text{s.t.(IC) } & \mathcal{E}_{A}\in\underset{\mathcal{E}\in E^{K}}{\operatorname{argmax}}\:\mathbb{E}_{\mu_{k}\sim\mathcal{E}}\left[\mu_{k}\cdot\mathcal{E}_{P}T_{k}\right]-C(\mathcal{E}),\label{eq:agent-IC}\\
\text{(PC) } & \max_{\mathcal{E}\in E^{K}}\;\mathbb{E}_{\mu_{k}\sim\mathcal{E}}\left[\mu_{k}\cdot\mathcal{E}_{P}T_{k}\right]-C(\mathcal{E})\geq0\label{eq:agent-PC-EAIR}\\
\text{(LL) } & T\geq0,\label{eq:limited-liability}
\end{align}

The agent's incentive constraint (\ref{eq:agent-IC}) requires that
it is optimal for the agent to learn the desired experiment $\mathcal{E}_{A}$.
While the problem features both moral hazard and asymmetric information
at the same time, it turns out that the latter does not play a role.\footnote{Recall that the principal observes neither what experiment $\mathcal{E}$
the agent acquires (moral hazard) nor what realization of $\mathcal{E}$
the agent observes (asymmetric information).} This is because, as argued above, it suffices for the agent to consider
direct recommendation experiments $\mathcal{E}\in E^{K}$ and always
follow the recommendations, which is to reveals the realization of
$\mathcal{E}$ to the principal. As a result, truthtelling comes at
no additional cost, and the principal's problem is purely a moral
hazard problem. This incentive constraint will be studied extensively
in Section (\ref{sec:implementability}). 

Limited liability (\ref{eq:limited-liability}) requires that the
payment to the agent must be non-negative. This is standard in most
contracting problems as it reflects realistic institutional constraints
that the agent cannot be forced to make payments to the principal. 

The agent's participation constraint (\ref{eq:agent-PC-EAIR}) says
that the agent must do weakly better than his outside option by accepting
the contract. Under limited liability, the participation constraint
never binds because the agent can always guarantee himself a non-negative
payoff by acquiring no information.\footnote{More precisely, the agent can accept the contract, choose the uninformative
experiment $\underline{\mathcal{E}}$ with $C(\underline{\mathcal{E}})=0$,
and report a fixed $x_{k}$. Due to limited liability, his payoff
must be non-negative. The agent can potentially do better by acquiring
information and reporting optimally. This means the left hand side
of (\ref{eq:agent-PC-EAIR}) is always non-negative and PC is slack.} I include this constraint for completeness. In Section \ref{subsec:solve-cost-min},
I will consider a benchmark case without limited liability. There,
the participation constraint will bind. 

The value of the principal's problem, denoted $\kappa^{C,\mu_{0}}(\mathcal{E}_{A},\mathcal{E}_{P})$,
is the principal's indirect cost of acquiring information through
the agent. Given an $\mathcal{E}_{P}$, the principal's problem may
not be feasible for certain $\mathcal{E}_{A}$.\footnote{For example, if $\mathcal{E}_{P}$ is uninformative, the principal
is not able to incentivize the agent to acquire any informative $\mathcal{E}_{A}$.} In this case, no contract can implement the acquisition of $\mathcal{E}_{A}$
and I write $\kappa^{C,\mu_{0}}(\mathcal{E}_{A},\mathcal{E}_{P})=+\infty$
by convention. Section \ref{sec:implementability} is devoted to the
feasibility problem. Section \ref{sec:IR-cost-min} concerns how the
indirect cost changes with the principal's contractible information
$\mathcal{E}_{P}$.

\subsection{Information Cost\label{subsec:info-cost}}

The agent's information cost $C$ is assumed to belong to a general
class of smooth functions. Appendix \ref{appsec:smooth-costs} presents
the required assumptions. These assumptions are satisfied by most
cost functions studied in the literature, including mutual information
costs and log-likelihood ratio costs. 

For ease of exposition, I focus in the main text on the case where
the cost function $C$ is posterior separable, that is, $C(\mathcal{E})=\int c(\mu)\mathcal{E}(d\mu)$
for some function $c:\Delta\Omega\to\bar{\mathbb{R}}$ with $\bar{\mathbb{R}}:=\mathbb{R}\cup\{\pm\infty\}$.
$c$ is called the posterior cost function. All results remain valid
under the more general framework in Appendix \ref{appsec:smooth-costs}.
Intuitively, posterior separability says that there is a price $c(\mu)$
attached to each posterior $\mu$, and the cost of acquiring some
$\mathcal{E}$ is the probability-weighted sum of prices paid for
each posterior.

I make the following assumptions about $c$. 

\begin{assumption}
\label{assu:tech-post-sep}$c$ is lower semi-continuous and convex,
and $c(\mu_{0})=0$.
\end{assumption}
Assumption \ref{assu:tech-post-sep} is the standard technical assumption
for the agent's problem to be well-behaved. Lower semi-continuity
ensures that the agent's problem admits a maximum. Convexity implies
that $C$ is Blackwell monotone, that is, $\mathcal{E}\geq_{\text{B}}\mathcal{E}'\Rightarrow C(\mathcal{E})\geq C(\mathcal{E}')$.
In words, the more the agent learns, the higher cost he incurs. $c(\mu_{0})=0$
implies that $C(\underline{\mathcal{E}})=0$ where $\underline{\mathcal{E}}$
is an uninformative experiment, that is, learning nothing should incur
no cost.

\begin{assumption}
\label{assu:smooth-post-sep}$c$ is differentiable with derivative
$\nabla_{c}:\Delta\Omega\to\bar{\mathbb{R}}^{N}$.
\end{assumption}
Assumption \ref{assu:smooth-post-sep} defines the marginal cost of
learning $\nabla_{c}$. To see this connection, recall that $c(\mu)$
is the price for posterior $\mu$. Consider a marginal change to $\mathcal{E}$
by perturbing some $\mu\in\operatorname{Supp}\mathcal{E}$. The derivative
$\nabla_{c}(\mu)$ then describes how the price changes when the posterior
$\mu$ changes. To be specific, $\nabla_{c}(\mu)$ is an $N$-dimensional
column vector where the $n$-th entry is the marginal effect on the
price when the probability of state $\omega_{n}$ changes. As a result,
$\nabla_{c}(\mu)$ represents the marginal cost of changing the information
at posterior $\mu$.

Lastly, sometimes I assume that the marginal cost to rule out any
state is infinity. 
\begin{assumption}
\label{assu:inf-slope-post-sep} $\lim_{\mu\to\mu'}\left\Vert \nabla_{c}(\mu)\right\Vert =+\infty$
for all $\mu'\in\partial\Delta\Omega$.\footnote{I use $\partial\Delta\Omega$ to denote the boundary of $\Delta\Omega$.
This is the set of beliefs that do not have full support.}
\end{assumption}
Roughly speaking, Assumption \ref{assu:inf-slope-post-sep} says that
the posterior cost function $c$ has an infinite slope at the boundary.
It assumes away corner solutions and allows me to focus on experiments
that only induce posteriors that are fully supported. It is not required
for results in Section \ref{subsec:imp-comparison}. 

Lastly, let $\mathcal{C}_{0}$ denote the set of all cost functions
that satisfy Assumptions \ref{assu:tech-post-sep} and \ref{assu:smooth-post-sep}.
Let $\mathcal{C}_{1}$ denote the set of all cost functions that satisfy
Assumptions \ref{assu:tech-post-sep}-\ref{assu:inf-slope-post-sep}.

\section{Implementability \label{sec:implementability}}

This section studies implementability, that is, whether the principal's
problem (\ref{eq:P-problem}) has a feasible solution, and what contractible
experiment can implement a larger set of experiments for the agent
to acquire. Section \ref{subsec:imp-characterization} provides a
complete characterization of implementability. Theorem \ref{thm:imp}
gives a necessary and sufficient condition for an experiment to be
implementable. Section \ref{subsec:imp-comparison} applies this result
to compare different contractible experiments in terms of their implementable
sets. Corollary \ref{cor:full-imp} identifies a full-rank condition
under which anything feasible is implementable. Proposition \ref{prop:imp-comparison}
introduces a more general comparison result based on the column space
order: A contractible experiment has a larger implementable set than
the other if and only if its column space is larger. 

\subsection{Characterization of Implementable Sets \label{subsec:imp-characterization}}

The principal's problem (\ref{eq:P-problem}) may be infeasible because
of the agent's incentive constraint. The limited liability constraint
and the participation constraint do not pose a problem as the principal
can pay out a large constant bonus without affecting incentives.\footnote{The limited liability constraint will become relevant later in Section
\ref{sec:IR-cost-min} when I analyze the cost minimization problem.} This observation motivates the following definition of implementability. 
\begin{defn}
[Implementability] Say that a contract $T$ under $\mathcal{E}_{P}$
implements $\mathcal{E}_{A}$ if it satisfies the agent's incentive
constraint (\ref{eq:agent-IC}). Say that $\mathcal{E}_{A}$ is implementable
under $\mathcal{E}_{P}$ if there exists some contract $T$ under
$\mathcal{E}_{P}$ that implements $\mathcal{E}_{A}$.\footnote{The implementability notion defined here is weak: $\mathcal{E}_{A}$
has to be an optimal solution to the agent's problem (\ref{eq:A-problem}),
but not necessarily unique. This suffices since the agent is assumed
to breaks ties in favor of the principal. Appendix \ref{appsec:imp-strict}
defines and characterizes unique implementability which requires $\mathcal{E}_{A}$
to be the unique optimal solution. Roughly, this holds when the cost
function is strictly convex and the posteriors induced by $\mathcal{E}_{A}$
are linearly independent. This extends the unique implementability
result in \citet{matvejka2015rational}, \citet{lipnowski2022predicting}
and \citet{sharma2024procuring}. } 
\end{defn}
We are now ready to characterize which $\mathcal{E}_{A}\in E^{K}$
is implementable under a given $\mathcal{E}_{P}\in E^{M}$, fixing
some interior prior $\mu_{0}$ and cost function $C$ that satisfies
Assumptions \ref{assu:tech-post-sep}-\ref{assu:inf-slope-post-sep}.
 To this end, use $\operatorname{Col}\mathcal{E}_{P}$ to denote
the column space of $\mathcal{E}_{P}$.\footnote{The column space of a matrix $A\in\mathbb{R}^{N\times M}$ is defined
as $\operatorname{Col}A=\{Av:v\in\mathbb{R}^{M}\}$. It is the set
of all linear combinations of its columns.} 
\begin{thm}
[Characterization of Implementability]\label{thm:imp} An experiment
$\mathcal{E}_{A}\in E^{K}$ with posteriors $\left\{ x_{k}\right\} _{k=1}^{K}$
is implementable under $\mathcal{E}_{P}$ if and only if $C(\mathcal{E}_{A})<+\infty$
and 
\begin{equation}
\nabla_{k}-\nabla_{k'}\in\operatorname{Col}\mathcal{E}_{P},\forall1\leq k,k'\leq K,\label{eq:imp-cond}
\end{equation}
where $\nabla_{k}:=\nabla_{x_{k}}c$ is the marginal cost vector at
posterior $x_{k}$.
\end{thm}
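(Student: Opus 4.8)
The plan is to reformulate the agent's rational inattention problem \eqref{eq:A-problem} via its first-order conditions and then read off implementability. First I would note that the agent's problem is a concave maximization over the Bayes-plausible distributions $\mathcal{E}\in E^K$: the gross payoff $\mathbb{E}_{\mu_k\sim\mathcal{E}}[\mu_k\cdot\mathcal{E}_P T_k]$ is linear in the distribution of posteriors, and $-C$ is concave by convexity of $c$ (Assumption \ref{assu:tech-post-sep}). Hence $\mathcal{E}_A$ solves the agent's problem if and only if a suitable first-order / supporting-hyperplane condition holds. Concretely, define for each report $k$ the ``gross utility of posterior $\mu$ under report $k$'' as $g_k(\mu):=\mu\cdot\mathcal{E}_P T_k$, a linear function of $\mu$, and let $V(\mu):=\max_k g_k(\mu)$ be the agent's indirect value of belief $\mu$. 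Using the concavification characterization of posterior-separable rational inattention (as in \citet{caplin2013behavioral} / \citet{matejka2015rational}), $\mathcal{E}_A$ with posteriors $\{x_k\}$, where $x_k$ triggers report $k$, is optimal if and only if (i) at each $x_k$ the maximizing report is $k$, i.e.\ $g_k(x_k)-c(x_k)\ge g_{k'}(x_k)-c(x_k)$ — but this is about report choice, not learning — and, more importantly, (ii) the function $\mu\mapsto \max_k g_k(\mu)-c(\mu)$ is ``concavifiable'' with the support of $\mathcal{E}_A$ lying on the concave envelope, which by Assumption \ref{assu:inf-slope-post-sep} (infinite slope at the boundary, so all $x_k$ interior) and differentiability of $c$ reduces to the tangency condition that at each $x_k$ there is a common supergradient, i.e.\ the gradient of the selected linear piece equals $\nabla_k:=\nabla_{x_k}c$ plus a common affine term. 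Since the gradient of $g_k$ is exactly $\mathcal{E}_P T_k\in\operatorname{Col}\mathcal{E}_P$, the tangency condition says $\mathcal{E}_P T_k - \nabla_k$ is the same vector (call it $\lambda$) for every $k$ — equivalently $\mathcal{E}_P T_k = \nabla_k + \lambda$.

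From there the argument is linear algebra. \emph{Necessity:} if an implementing $T$ exists, then $\mathcal{E}_P T_k - \mathcal{E}_P T_{k'} = \nabla_k - \nabla_{k'}$, and the left side lies in $\operatorname{Col}\mathcal{E}_P$, giving \eqref{eq:imp-cond}; finiteness $C(\mathcal{E}_A)<+\infty$ is needed for the agent's objective to be finite-valued at $\mathcal{E}_A$ (otherwise $\mathcal{E}_A$ cannot be a maximizer with finite payoff, or the statement is vacuous). \emph{Sufficiency:} suppose \eqref{eq:imp-cond} holds. Pick any $v_1\in\mathbb{R}^M$ and set $\lambda:=\mathcal{E}_P v_1 - \nabla_1$; for each $k$, since $\nabla_k-\nabla_1\in\operatorname{Col}\mathcal{E}_P$ there is $v_k$ with $\mathcal{E}_P v_k = \mathcal{E}_P v_1 + (\nabla_k-\nabla_1) = \nabla_k+\lambda$. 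Take $T_k:=v_k$. One then verifies that with this $T$, the affine functions $\mu\mapsto g_k(\mu)-\lambda\cdot\mu = \mu\cdot\mathcal{E}_P T_k-\lambda\cdot\mu$ all have gradient $\nabla_k$, so $\mu\mapsto g_k(\mu)-c(\mu)$ is tangent to a common concave function at $x_k$ for each $k$ (after adding constants so the pieces agree in value at the relevant points, which is a free normalization since adding a constant to a column of $T$ is absorbed into the affine shift and does not change the gradient condition — though I will need to check the value-matching, see below). Concavity of $c$ then makes $\mathcal{E}_A$ a maximizer, so $T$ implements $\mathcal{E}_A$. Limited liability and participation are irrelevant here since one can add a large constant to all of $T$ without changing incentives.

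The main obstacle I anticipate is handling the \textbf{value-matching across reports} and the \textbf{global (not just local) optimality} of $\mathcal{E}_A$ carefully. The tangency/gradient condition \eqref{eq:imp-cond} pins down the \emph{differences} $\nabla_k-\nabla_{k'}$, which controls the relative slopes of the linear pieces $g_k$, but to make $\mathcal{E}_A$ genuinely optimal one must also choose the \emph{levels} (the constants in each column $T_k$, which do not affect gradients but do shift $g_k$ up or down) so that: each $x_k$ indeed selects report $k$ over all $k'\ne k$, and no out-of-support posterior does better. Because $c$ is convex and the $g_k$ are affine with the matched gradients, the concave-envelope argument should go through — the function $\max_k g_k - c$ will have $\{x_k\}$ on its concavification precisely when the levels are set so that the upper envelope of the affine pieces $\{g_{k}\}$ agrees with a supporting hyperplane of $c$ at each $x_k$ — but writing this cleanly requires invoking the standard concavification lemma for posterior-separable costs and checking that Assumption \ref{assu:inf-slope-post-sep} rules out boundary posteriors so that differentiability of $c$ applies at every $x_k$. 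I would structure the write-up as: (1) restate the agent's problem and the concavification optimality condition; (2) derive the equivalent ``common supergradient'' condition $\mathcal{E}_P T_k=\nabla_k+\lambda$; (3) do the two linear-algebra directions above; (4) confirm feasibility of level choices and cite the concavification lemma for global optimality.
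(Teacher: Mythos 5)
Your proposal is correct, and its backbone coincides with the paper's: both arguments reduce implementability to the existence of a common multiplier $\lambda\in\mathbb{R}^{N}$ with $\mathcal{E}_{P}T_{k}=\nabla_{k}+\lambda$ for all $k$ (the paper's Lemma \ref{lem:imp-matrix-eqn}, obtained there by putting a Lagrange multiplier on Bayes plausibility; your concavification/common-supergradient route is the same condition in different clothing). Where you genuinely diverge is in the linear-algebra step. The paper treats $\mathcal{E}_{P}T=\nabla+\lambda\bm{1}_{1\times K}$ as a matrix equation in the two unknowns $T$ and $\lambda$ and invokes \citet{baksalary1979matrix}'s solvability criterion for $AX-YB=C$, then massages the resulting projection identity into condition (\ref{eq:imp-cond}). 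You instead argue directly: necessity is immediate because $\nabla_{k}-\nabla_{k'}=\mathcal{E}_{P}(T_{k}-T_{k'})\in\operatorname{Col}\mathcal{E}_{P}$, and sufficiency follows by fixing $v_{1}$, setting $\lambda:=\mathcal{E}_{P}v_{1}-\nabla_{1}$, and solving $\mathcal{E}_{P}v_{k}=\nabla_{k}+\lambda$ column by column using (\ref{eq:imp-cond}). This is more elementary and self-contained; the paper's route buys the explicit general-solution formula for $(T,\lambda)$, which it reuses to prove Theorem \ref{thm:imp-contract-decomposition}, so the heavier machinery is not wasted there.

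The one obstacle you flag --- value-matching across reports and global (not merely local) optimality --- dissolves once you use the paper's normalization of the derivative (Definition \ref{def:MC-post-sep}): $\nabla_{k}$ is pinned down by $x_{k}\cdot\nabla_{k}=c(x_{k})$, so it encodes the intercept of the supporting hyperplane as well as its slope. With $\mathcal{E}_{P}T_{k}=\nabla_{k}+\lambda$ and convexity of $c$, one gets for every belief $\mu$ and every report $k'$ that $\mu\cdot\mathcal{E}_{P}T_{k'}-c(\mu)\le\mu\cdot\nabla_{k'}+\mu\cdot\lambda-c(\mu)\le\mu\cdot\lambda$, with equality at $\mu=x_{k'}$; hence the agent's payoff from any Bayes-plausible distribution is at most $\mu_{0}\cdot\lambda$ and $\mathcal{E}_{A}$ attains it. No separate adjustment of the levels of the $T_{k}$ is needed, and obedience at each $x_{k}$ comes for free from the same inequality. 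Make sure to state explicitly, as the paper does, that Assumption \ref{assu:inf-slope-post-sep} is what forces the first-order condition to hold with equality (interior posteriors) in the necessity direction; without it the condition weakens to the inequality version in Appendix \ref{appsec:imp-corner-solutions}.
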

The theorem says that the column space of $\mathcal{E}_{P}$ determines
what is implementable. To see why, recall that the agent solves a
standard rational inattention problem (\ref{eq:A-problem}). As a
result, his incentives to acquire information are completely determined
by his state dependent utility. $\operatorname{Col}\mathcal{E}_{P}$
is exactly the set of state-dependent utilities the principal can
generate. To see this, note that the agent's state dependent utility
from reporting some $x_{k}$ is given by $\mathcal{E}_{P}T_{k}$,
where $T_{k}\in\mathbb{R}^{M}$ specifies the payment to the agent
for each realization of $\mathcal{E}_{P}$ if the agent reports $x_{k}$.
The set of all possible $\mathcal{E}_{P}T_{k}$ is indeed the column
space of $\mathcal{E}_{P}$. 

In words, Condition (\ref{eq:imp-cond}) says that $\mathcal{E}_{A}$
is implementable if and only if the principal's contractible experiment
$\mathcal{E}_{P}$ allows her to provide the correct marginal incentives
for the agent. It is the first order condition in the agent's problem
(\ref{eq:A-problem}). To see this, consider the following perturbational
argument. Suppose the agent currently learns $\mathcal{E}_{A}$ and
considers perturbing it by moving some posterior $\mu_{k}$ to $\mu_{k}+d\mu$.
Because of Bayes plausibility, he has to also change another $\mu_{k'}$
in the opposite direction. This perturbation comes at a total marginal
cost proportional to $d\mu\cdot\left(\nabla_{k}-\nabla_{k'}\right)$.
On the other hand, since the expected payment is linear in the agent's
belief, the marginal benefit of this perturbation is proportional
to $d\mu\cdot\left(\mathcal{E}_{P}T_{k}-\mathcal{E}_{P}T_{k'}\right)$.
Optimality of $\mathcal{E}_{A}$ requires that no $d\mu$ perturbation
can increase the agent's payoff. To achieve this, the principal should
be able to equate the marginal cost to the marginal benefit by designing
the contract $T$. That is, $\nabla_{k}-\nabla_{k'}$ must be included
in $\operatorname{Col}\mathcal{E}_{P}$.  

It should be emphasized that only the difference in state dependent
utilities $\mathcal{E}_{P}T_{k}-\mathcal{E}_{P}T_{k'}$, rather than
the levels, matters for incentives. An intuitive way to see this is,
when the agent acquires more information and reports some $x_{k}$
more often, he must also report some $x_{k'}$ less often so that
the probabilities sum to one. A more fundamental reason is that the
agent's task is acquiring information. Information does not change
the marginal distribution of states. Therefore, incentives come from
how different reports lead to different payments. Any bonus that does
not depend on the agent's report, but potentially varies with state,
does not provide any incentive.

I now discuss the roles of the assumptions. Assumptions \ref{assu:tech-post-sep}
and \ref{assu:smooth-post-sep} ensure that the agent's problem is
convex and admits a solution. Under these assumptions, the first order
condition is sufficient for optimality, which proves the ``if''
direction of Theorem \ref{thm:imp}. To establish the ``only if''
direction, I also invoke Assumption \ref{assu:inf-slope-post-sep}.
This assumption ensures that the first order condition must be satisfied
as an equality.\footnote{Under Assumption \ref{assu:inf-slope-post-sep}, a $\mathcal{E}$
with a non-fully-supported posterior can never be optimal. Moving
this posterior into $\operatorname{int}\Delta\Omega$ is always an
improvement because the posterior cost function $c$ has an infinite
slope at the boundary. } Without Assumption \ref{assu:inf-slope-post-sep}, the result does
not change if every posterior induced by $\mathcal{E}_{A}$ is fully
supported.\footnote{In matrix form, this means $\mathcal{E}_{A}$ is a $N\times K$ row
matrix with no zero entries.} It matters only for $\mathcal{E}_{A}$'s with non-fully-supported
posterior. One can only perturb a non-fully-supported posterior $\mu_{k}$
in certain directions because probabilities must stay positive. This
means the first order condition may hold as an inequality. For completeness,
I present the implementability result without Assumption \ref{assu:inf-slope-post-sep}
in Appendix \ref{appsec:imp-corner-solutions}. 

I briefly sketch the proof of Theorem \ref{thm:imp}. Readers less
interested in proofs may skip to Section \ref{subsec:imp-comparison}.
All omitted proofs are presented in Appendix \ref{appsec:proof-appendix}.
The proof uses a Lagrangian approach, followed by a linear algebra
argument. Viewing $\mathcal{E}$ as a posterior distribution, the
agent solves
\begin{align}
\max_{\mathcal{E}\in E^{K}} & \:\mathbb{E}_{\mu_{k}\sim\mathcal{E}}\left[\mu_{k}\cdot\mathcal{E}_{P}T_{k}-c(\mu_{k})\right]\label{eq:A-problem-post-sep}\\
\text{s.t. } & \mathbb{E}_{\mu_{k}\sim\mathcal{E}}\left[\mu_{k}\right]=\mu_{0},\label{eq:Bayes-plausibility}
\end{align}
where (\ref{eq:Bayes-plausibility}) is the Bayes plausibility constraint.
Let $\lambda\in\mathbb{R}^{N}$ be the multipliers on the the Bayes
plausibility constraint for state $\omega_{n}$.\footnote{The interpretation of $\lambda$ is that its component $\lambda_{n}$
represents the agent's expected payoff in state $\omega_{n}$.} Let $\nabla=\begin{bmatrix}\nabla_{1} & \nabla_{2} & \cdots & \nabla_{K}\end{bmatrix}$
be the matrix of marginal cost vectors at posteriors of $\mathcal{E}_{A}$.
The first order condition requires that marginal costs equal marginal
benefits at every posterior $x_{k}$. In the matrix form, this is,
\begin{equation}
\mathcal{E}_{P}T=\nabla+\lambda\bm{1}_{1\times K}.\label{eq:imp-matrix-eqn}
\end{equation}
This is both necessary and sufficient because the problem is convex. 
\begin{lem}
\label{lem:imp-matrix-eqn} A contract $T$ under $\mathcal{E}_{P}$
implements $\mathcal{E}_{A}$ if and only if there exists $\lambda\in\mathbb{R}^{N}$
such that (\ref{eq:imp-matrix-eqn}) holds.
\end{lem}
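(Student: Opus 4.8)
The plan is to follow the Lagrangian route flagged above. Attach a multiplier $\lambda\in\mathbb{R}^{N}$ to the Bayes-plausibility constraint (\ref{eq:Bayes-plausibility}) and form, for $\mathcal{E}$ with posteriors $\mu_{k}$ and weights $p_{k}$, the Lagrangian $\mathcal{L}(\mathcal{E},\lambda)=\sum_{k}p_{k}\bigl(\mu_{k}\cdot\mathcal{E}_{P}T_{k}-c(\mu_{k})-\lambda\cdot\mu_{k}\bigr)+\lambda\cdot\mu_{0}$. The relaxed problem $\max_{\mathcal{E}}\mathcal{L}(\mathcal{E},\lambda)$ separates posterior by posterior, so it is solved by concentrating mass where $\mu\mapsto\mu\cdot\mathcal{E}_{P}T_{k}-c(\mu)-\lambda\cdot\mu$ is largest over $(\mu,k)$. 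Because the agent's program, once rewritten as a choice of a posterior distribution, is linear in the distribution (equivalently, after lifting to unnormalized measures $\xi_{k}=p_{k}\mu_{k}$ the cost becomes a sum of perspectives of $c$ and the whole program becomes concave), Lagrangian duality has no gap: $T$ implements $\mathcal{E}_{A}$ if and only if there is $\lambda$ such that $\mathcal{E}_{A}$ is feasible and maximizes $\mathcal{L}(\cdot,\lambda)$. The latter unpacks into two conditions at each $x_{k}$ in the support of $\mathcal{E}_{A}$: (i) $x_{k}$ maximizes $\mu\mapsto\mu\cdot(\mathcal{E}_{P}T_{k}-\lambda)-c(\mu)$; and (ii) reporting $x_{k}$ is a best report at $x_{k}$, i.e.\ $x_{k}\cdot\mathcal{E}_{P}T_{k}=\max_{j}x_{k}\cdot\mathcal{E}_{P}T_{j}$, so that obeying the recommendation is optimal.

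For necessity, suppose $T$ implements $\mathcal{E}_{A}$ and take $\lambda$ as above. By Assumption \ref{assu:inf-slope-post-sep} each $x_{k}$ is interior, so the interior first-order condition of (i) at $x_{k}$ reads $\mathcal{E}_{P}T_{k}-\lambda-\nabla_{k}=0$; since $\lambda$ does not depend on $k$, this is precisely $\mathcal{E}_{P}T=\nabla+\lambda\bm{1}_{1\times K}$, i.e.\ (\ref{eq:imp-matrix-eqn}). (The same conclusion follows from the perturbation argument sketched before the lemma: shifting $\mu_{k}$ while compensating $\mu_{k'}$ to preserve Bayes-plausibility equates $\nabla_{k}-\nabla_{k'}$ with $\mathcal{E}_{P}T_{k}-\mathcal{E}_{P}T_{k'}$ for all $k,k'$.)

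For sufficiency, assume (\ref{eq:imp-matrix-eqn}). Then $\mu\cdot(\mathcal{E}_{P}T_{k}-\lambda)-c(\mu)=\mu\cdot\nabla_{k}-c(\mu)$ is concave and stationary at the interior point $x_{k}$ (its gradient there is $\nabla_{k}-\nabla c(x_{k})=0$), hence $x_{k}$ is a global maximizer, giving (i). For (ii), work with the positively homogeneous convex extension $\tilde c$ of $c$, for which $\nabla_{k}=\nabla\tilde c(x_{k})$ so that Euler's identity gives $x_{k}\cdot\nabla_{k}=\tilde c(x_{k})=c(x_{k})$: the subgradient inequality for $\tilde c$ at $x_{j}$ yields $c(x_{k})=\tilde c(x_{k})\ge\nabla_{j}\cdot x_{k}$, whence $x_{k}\cdot\mathcal{E}_{P}T_{k}=x_{k}\cdot(\nabla_{k}+\lambda)=c(x_{k})+x_{k}\cdot\lambda\ge\nabla_{j}\cdot x_{k}+x_{k}\cdot\lambda=x_{k}\cdot\mathcal{E}_{P}T_{j}$, which is (ii). Since, again by Euler, $\max_{\mu}[\mu\cdot\nabla_{k}-c(\mu)]=x_{k}\cdot\nabla_{k}-c(x_{k})=0$ for every $k$, we get $\mathcal{L}(\mathcal{E}_{A},\lambda)=\lambda\cdot\mu_{0}=\max_{\mathcal{E}}\mathcal{L}(\mathcal{E},\lambda)$; so $\mathcal{E}_{A}$ is feasible and maximizes $\mathcal{L}(\cdot,\lambda)$, and $T$ implements $\mathcal{E}_{A}$.

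The step I expect to be the main obstacle is making ``the problem is convex $\Rightarrow$ the first-order condition is sufficient'' honest: in the raw $(\mu_{k},p_{k})$ parametrization the objective is bilinear, not concave, so one must pass to the distributional (or unnormalized-measure) formulation and invoke Lagrangian duality without a gap. A second point that needs care is the choice of representative for the gauge-ambiguous gradient $\nabla_{k}=\nabla_{x_{k}}c$: the homogeneous-extension representative, for which $x_{k}\cdot\nabla_{k}=c(x_{k})$, is exactly what forces condition (ii) and makes $\max_{\mu}[\mu\cdot\nabla_{k}-c(\mu)]$ independent of $k$, so that the single multiplier $\lambda$ in (\ref{eq:imp-matrix-eqn}) certifies global optimality rather than mere stationarity; the residual direction-$\bm{1}$ freedom is harmless because it is absorbed into $\lambda$.
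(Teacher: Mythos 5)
Your proof is correct and follows essentially the same route as the paper: attach a multiplier $\lambda$ to the Bayes-plausibility constraint, decompose the relaxed problem posterior by posterior, read off the interior first-order condition $\mathcal{E}_{P}T_{k}=\nabla_{k}+\lambda$ (necessity via Assumption \ref{assu:inf-slope-post-sep}, sufficiency via convexity). The extra care you take — lifting to unnormalized measures to justify the absence of a duality gap, and using the normalization $x_{k}\cdot\nabla_{k}=c(x_{k})$ from Definition \ref{def:MC-post-sep} to verify obedience and to show all pointwise maxima equal zero — fills in steps the paper leaves implicit, but does not change the argument.
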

The implementability problem is thus converted into a solvability
problem. $\mathcal{E}_{A}$ is implementable if and only if there
exists $(T,\lambda)$ that satisfies Equation (\ref{eq:imp-matrix-eqn}).
Determining the solvability of (\ref{eq:imp-matrix-eqn}) is non-trivial
since the unknowns $T$ and $\lambda$ appear on different side of
the matrix product. Luckily, \citet{baksalary1979matrix} study matrix
equations of this form and provide an equivalent condition for its
solvability. Following their result (presented in Appendix (\ref{appsec:proof-appendix})),
I show that the solvability of (\ref{eq:imp-matrix-eqn}) is equivalent
to Condition (\ref{eq:imp-cond}) in Theorem \ref{thm:imp}.

\subsection{Comparison by Implementable Sets \label{subsec:imp-comparison}}

I now apply Theorem \ref{thm:imp} and study which $\mathcal{E}_{P}$
can implement a larger set of $\mathcal{E}_{A}$'s. I will maintain
only Assumptions \ref{assu:tech-post-sep} and \ref{assu:smooth-post-sep}
throughout the results in this section. 

\subsubsection{Full Row Rank Case: Full Implementability \label{subsec:imp-full-rank}}

An important consequence of Theorem \ref{thm:imp} is that full implementability
can be achieved if $\mathcal{E}_{P}$ has full row rank.\footnote{A matrix $A\in\mathbb{R}^{N\times M}$ has full row rank if its rows
are linearly independent.} Any feasible $\mathcal{E}_{A}$ can be implemented in this case,
regardless of the prior and the cost function. This result does not
rely on Assumption \ref{assu:inf-slope-post-sep}.\footnote{Assumption \ref{assu:inf-slope-post-sep} is only used to prove the
``only if'' direction of Theorem \ref{thm:imp}. Corollary \ref{cor:full-imp}
only uses the ``if'' direction of Theorem \ref{thm:imp} and hence
does not require Assumption \ref{assu:inf-slope-post-sep}.} 
\begin{cor}
[Full Implementability]\label{cor:full-imp} If $\mathcal{E}_{P}$
has full row rank, then any feasible $\mathcal{E}_{A}$ is implementable
under it.
\end{cor}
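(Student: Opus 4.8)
The plan is to read the corollary off the ``if'' direction of Theorem \ref{thm:imp}, reducing everything to one observation in linear algebra. First I would note that if the $N\times M$ matrix $\mathcal{E}_{P}$ has full row rank, then its column space --- a subspace of $\mathbb{R}^{N}$ --- has dimension $\operatorname{rank}\mathcal{E}_{P}=N$, hence $\operatorname{Col}\mathcal{E}_{P}=\mathbb{R}^{N}$. Therefore, whatever the prior $\mu_{0}$ and the cost function $C$ are, and hence whatever the marginal-cost vectors $\{\nabla_{k}\}_{k=1}^{K}\subseteq\mathbb{R}^{N}$ induced by $\mathcal{E}_{A}$ turn out to be, the differences $\nabla_{k}-\nabla_{k'}$ lie in $\operatorname{Col}\mathcal{E}_{P}$ automatically. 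Since ``$\mathcal{E}_{A}$ feasible'' is by definition ``$C(\mathcal{E}_{A})<+\infty$'', both hypotheses of Theorem \ref{thm:imp} are met, so $\mathcal{E}_{A}$ is implementable; the argument is uniform in $(\mu_{0},C)$, as claimed in the surrounding text.

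I would also record the explicit construction behind this, via Lemma \ref{lem:imp-matrix-eqn}: full row rank gives $\mathcal{E}_{P}$ a right inverse $\mathcal{E}_{P}^{+}$ (so that $\mathcal{E}_{P}\mathcal{E}_{P}^{+}$ is the $N\times N$ identity), and then $\lambda=\bm{0}$ together with $T:=\mathcal{E}_{P}^{+}\nabla$ solves the matrix equation (\ref{eq:imp-matrix-eqn}), $\mathcal{E}_{P}T=\nabla+\lambda\bm{1}_{1\times K}$; Lemma \ref{lem:imp-matrix-eqn} then certifies that $T$ implements $\mathcal{E}_{A}$. Note that only the ``if'' direction of Theorem \ref{thm:imp} (equivalently, only sufficiency in Lemma \ref{lem:imp-matrix-eqn}) is used, which is why, as footnoted, Assumption \ref{assu:inf-slope-post-sep} is not needed here.

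I do not expect a real obstacle; the only point needing care is the status of non-fully-supported posteriors. When every posterior induced by $\mathcal{E}_{A}$ is interior the displayed argument is literal, since $\nabla$ then has finite entries and Theorem \ref{thm:imp} applies as stated. For $\mathcal{E}_{A}$ inducing boundary posteriors I would instead invoke the boundary version of the implementability characterization in Appendix \ref{appsec:imp-corner-solutions}: there the first-order condition at a boundary posterior is weakened to an inequality along the feasible perturbation directions, and because $\operatorname{Col}\mathcal{E}_{P}=\mathbb{R}^{N}$ contains all of those directions the weakened condition again holds for free.
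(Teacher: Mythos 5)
Your proposal is correct and follows essentially the same route as the paper: both reduce the claim to the observation that full row rank forces $\operatorname{Col}\mathcal{E}_{P}=\mathbb{R}^{N}$, so the implementability condition holds vacuously, and both defer to the corner-solution characterization of Appendix \ref{appsec:imp-corner-solutions} to cover non-fully-supported posteriors (the paper simply applies Theorem \ref{thm:imp-corner} uniformly rather than splitting into interior and boundary cases). Your added explicit construction $T=\mathcal{E}_{P}^{+}\nabla$ with $\lambda=\bm{0}$ is a harmless, correct supplement.
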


The full row rank condition allows the principal to always provide
the correct incentives. For any state dependent utility $u_{k}\in\mathbb{R}^{N},$the
principal can always find a set of payments $T_{k}\in\mathbb{R}^{M}$
so that $u_{k}=\mathcal{E}_{P}T_{k}$ due to the full rank property.
Therefore, the principal can always equate the marginal benefit to
the marginal cost. This is true even without Assumption \ref{assu:inf-slope-post-sep},
in which case the first order condition only has to hold as an inequality.

Corollary \ref{cor:full-imp} points out that full implementability
relies only on the full row rank property of $\mathcal{E}_{P}$ rather
than its informativeness.\footnote{Even though the informativeness of $\mathcal{E}_{P}$ is irrelevant
for implementability, it will matter for cost minimization under limited
liability. I will focus on the cost minimization problem in Section
\ref{sec:IR-cost-min}. Roughly speaking, when $\mathcal{E}_{P}$
becomes less informative, the payments must become more dispersed
to create the same utility, and the principal has to pay larger bonuses
overall. } This extends and steamlines the full implementability result in \citet{whitmeyer2022buying}
and \citet{sharma2024procuring}. They study a similar problem but
with contractible states and show that full implementability can be
achieved.\footnote{In the language of my model, they study a problem where $\mathcal{E}_{P}$
is fixed to be the experiment that perfectly reveals the state.} In contrast, Corollary \ref{cor:full-imp} says, full implementability
can be achieved even though $\mathcal{E}_{P}$ is barely informative
about the states, provided that it has full row rank.\footnote{One way to see the connection to \citet{whitmeyer2022buying} and
\citet{sharma2024procuring} is as follows. When $\mathcal{E}_{P}$
has full row rank, there exist payment rules that lead to state-dependent
utility profiles $\bm{e}_{n}$, which provides an expected utility
of one in state $\omega_{n}$ and zero in other states. The principal
can then use $\bm{e}_{n}$'s as the building blocks to construct the
contract as if the states are contractible.} 

The full row rank condition in Corollary \ref{cor:full-imp} is closely
related to other full rank conditions in the literature. It is an
identifiability condition similar to \citet{fudenberg1994folk}. They
study repeated games with imperfect public monitoring. Their individual
full rank condition ensures that every unilateral deviation can be
identified from the distribution of the public signal.\footnote{More precisely, the individual full rank condition requires, for every
player, holding the other players strategies fixed, the distribution
of public signals resulting from different action by that player must
be linearly independent.} My full row rank condition does the same thing, except the agent's
action is to choose a distribution of posteriors: The principal can
identify any belief the agent may have using the realization of $\mathcal{E}_{P}$.
To see this, the full row rank property says that the distributions
of realizations sent in every state are linearly independent. As a
result, no two posterior distributions over the states can result
in the same distribution of realizations. As a result, the principal
can deter any deviation from the desired experiment $\mathcal{E}_{A}$
by carefully designing the contract. 

It is also related to, though different from, the full rank condition
in \citet{cremer1988full}. They study an auction design problem and
show that full surplus extraction can be achieved if every bidder
has linearly independent interim beliefs across their types.\footnote{In \citeauthor{cremer1988full}, full surplus extraction is possible
because the agent's participation constraint is imposed at the interim
stage. It is not possible in my model because of limited liability
which is a form of ex post participation constraint. This will become
clear in Section \ref{sec:IR-cost-min}. } Their full rank condition is similar to mine in the sense that both
allow the principal to construct any state dependent utility. Yet,
neither of the two conditions implies the other. In the language of
my model, their full rank condition requires the agent's interim beliefs
about the realization of the principal's experiment to be linearly
independent across agent types.\footnote{Formally, they require $\mathcal{E}_{P}^{\intercal}\mathcal{E}_{A}$
to have full column rank. The symbol $\intercal$ stands for transpose.
To see this, given $\mathcal{E}_{A}$, the agent's posterior beliefs
about the states are $x_{k}$'s which are proportional to the columns
of $\mathcal{E}_{A}$. The agent's beliefs about the realization of
$\mathcal{E}_{P}$ are therefore $\mathcal{E}_{P}^{\intercal}x_{k}$.
\citet{cremer1988full}'s full rank condition requires linear independence
of $\mathcal{E}_{P}^{\intercal}x_{k}$'s, which corresponds to $\mathcal{E}_{P}^{\intercal}\mathcal{E}_{A}$
having full column rank.} My full row rank condition is only about the principal's information
$\mathcal{E}_{P}$. It places no restriction on $\mathcal{E}_{A}$.

Next, I provide several sufficient conditions for full implementability.
The next corollary says that if the state space is binary, any informative
$\mathcal{E}_{P}$ has full row rank and achieves full implementability.
The proof is immediate. 
\begin{cor}
[Binary State]\label{cor:binary-state} If $N=2$ and $\mathcal{E}_{P}$
is informative, then any feasible $\mathcal{E}_{A}$ is implementable.
\end{cor}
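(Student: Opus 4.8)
The plan is to deduce the statement from Corollary \ref{cor:full-imp} by showing that, when $N=2$, informativeness of $\mathcal{E}_P$ is equivalent to $\mathcal{E}_P$ having full row rank. Write $\mathcal{E}_P$ as a $2\times M$ row-stochastic matrix with rows $r_1,r_2\in\mathbb{R}^M$, where $r_1\cdot\bm 1_{M}=r_2\cdot\bm 1_{M}=1$. Suppose for contradiction that the rows are linearly dependent; since neither row is the zero vector (each sums to one), we must have $r_2=\alpha r_1$ for some scalar $\alpha$. Summing the entries on both sides and using the row-sum condition gives $1=\alpha\cdot 1$, so $\alpha=1$ and hence $r_1=r_2$. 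But then $\langle\mathcal{E}_P\mid\omega_1\rangle=\langle\mathcal{E}_P\mid\omega_2\rangle$, so every realization of $\mathcal{E}_P$ leaves the prior unchanged and $\mathcal{E}_P$ is uninformative, contradicting the hypothesis. Therefore $r_1$ and $r_2$ are linearly independent, i.e., $\mathcal{E}_P$ has full row rank.

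Having established full row rank, I would invoke Corollary \ref{cor:full-imp} directly: it says that whenever $\mathcal{E}_P$ has full row rank, every feasible $\mathcal{E}_A$ is implementable under it, for any prior $\mu_0$ and any cost function satisfying Assumptions \ref{assu:tech-post-sep} and \ref{assu:smooth-post-sep} (in particular, Assumption \ref{assu:inf-slope-post-sep} is not needed, as noted there). This yields the claim.

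There is essentially no obstacle here; the proof is immediate. The only points requiring a word of care are (i) spelling out the elementary fact that two linearly dependent, nonzero vectors that each sum to one must coincide, and (ii) making explicit the identification of ``$\mathcal{E}_P$ informative'' with ``the two conditional realization distributions differ,'' so that the contradiction is clean. Everything else is a direct appeal to Corollary \ref{cor:full-imp}.
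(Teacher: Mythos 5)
Your proof is correct and follows essentially the same route as the paper: you show that linear dependence of the two rows of a $2\times M$ row-stochastic matrix forces them to coincide (hence $\mathcal{E}_P$ uninformative), so any informative $\mathcal{E}_P$ has full row rank, and then you invoke Corollary \ref{cor:full-imp}. The only difference is that you spell out the scalar-multiple argument slightly more explicitly than the paper does; no gap either way.
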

The next result concerns uniform random noisy. Define the following
class of experiments. 
\begin{defn}
\label{def:random-noise} Say that $\mathcal{E}_{P}$ has uniform
random noise if, for every state $\omega_{n}$, there is a unique
and distinct realization $y_{m}$ with the largest conditional probability,
and all other realizations have the same conditional probability.
\end{defn}
The definition says that $\mathcal{E}_{P}$ is constructed by adding
state-dependent uniform random noise to a fully revealing experiment.
Such experiments must have at least $N$ realizations. The next corollary
says that if $\mathcal{E}_{P}$ has uniform random noise, then it
has full row rank and achieves full implementability. 
\begin{cor}
[Random Noise]\label{cor:random-noise} If $\mathcal{E}_{P}$ has
uniform random noise, then any feasible $\mathcal{E}_{A}$ is implementable.
\end{cor}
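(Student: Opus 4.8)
The plan is to invoke Corollary \ref{cor:full-imp}: it suffices to show that any $\mathcal{E}_P$ with uniform random noise has full row rank. So the entire argument reduces to a linear-algebra fact about the structure of such matrices.

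First I would set up notation. Let $\mathcal{E}_P$ be $N \times M$ with uniform random noise. By Definition \ref{def:random-noise}, for each state $\omega_n$ there is a distinct ``peak'' realization $y_{\sigma(n)}$ with conditional probability $p_n$, and every other realization has conditional probability $q_n$, where $p_n > q_n \geq 0$ and row-stochasticity forces $p_n + (M-1) q_n = 1$. The map $n \mapsto \sigma(n)$ is injective (the peaks are distinct across states), so after permuting columns we may assume $\sigma(n) = n$, i.e. the peak of row $n$ sits in column $n$. Then row $n$ of $\mathcal{E}_P$ equals $q_n \boldsymbol{1}_{1\times M} + (p_n - q_n)\, \boldsymbol{e}_n^{\intercal}$, where $\boldsymbol{e}_n$ is the $n$-th standard basis vector in $\mathbb{R}^M$ (recall $M \geq N$). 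In block form, $\mathcal{E}_P = \big[\, D + r\boldsymbol{1}_{1\times N} \ \mid\ r \boldsymbol{1}_{1\times(M-N)} \,\big]$ where $D = \operatorname{diag}(p_1 - q_1, \dots, p_N - q_N)$ and $r = (q_1, \dots, q_N)^{\intercal}$.

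Next I would prove linear independence of the rows directly. Suppose $a^{\intercal} \mathcal{E}_P = \boldsymbol{0}_{1\times M}$ for some $a = (a_1,\dots,a_N)^{\intercal} \in \mathbb{R}^N$. Looking at column $k$ for $1 \leq k \leq N$ gives $a_k (p_k - q_k) + \sum_n a_n q_n = 0$, and looking at any column beyond the $N$-th (if $M > N$) gives $\sum_n a_n q_n = 0$; substituting back yields $a_k(p_k - q_k) = 0$ for all $k$, hence $a_k = 0$ since $p_k - q_k > 0$. If $M = N$ instead, sum the $N$ equations $a_k(p_k-q_k) + \sum_n a_n q_n = 0$ to get $\sum_k a_k(p_k - q_k) + N \sum_n a_n q_n = 0$; combined with row-stochasticity one still extracts $\sum_n a_n q_n = 0$ after a short manipulation (alternatively, note $\det D \neq 0$ and $r$ has a rank-one correction, so the Sherman–Morrison / matrix-determinant lemma gives invertibility of $D + r\boldsymbol{1}^{\intercal}$ whenever $1 + \boldsymbol{1}^{\intercal} D^{-1} r \neq 0$, which holds because all $q_n \geq 0$ and $p_n - q_n > 0$ make that quantity strictly positive). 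Either way $a = \boldsymbol{0}$, so the rows of $\mathcal{E}_P$ are linearly independent and $\mathcal{E}_P$ has full row rank. Corollary \ref{cor:full-imp} then delivers the claim, and since that corollary does not use Assumption \ref{assu:inf-slope-post-sep}, neither does this one.

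I do not expect a serious obstacle here; the only mild subtlety is handling the square case $M = N$ versus $M > N$ cleanly, and making sure the ``distinct peaks'' clause is used (it is exactly what makes $\sigma$ injective, hence what lets the peaks occupy $N$ different columns — without it the matrix could be rank-deficient, e.g. two identical rows). I would write the Sherman–Morrison observation as the clean unified argument: $\mathcal{E}_P$ restricted to the $N$ peak columns is $D + r \boldsymbol{1}_{1\times N}^{\intercal}$ with $D$ invertible, and $\det(D + r\boldsymbol{1}^{\intercal}) = (\det D)(1 + \boldsymbol{1}^{\intercal} D^{-1} r) > 0$, so already an $N\times N$ submatrix of $\mathcal{E}_P$ is invertible, giving full row rank immediately.
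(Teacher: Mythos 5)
Your proof is correct and follows essentially the same route as the paper: reduce to showing full row rank, restrict to the $N$ peak columns to get a diagonal-plus-rank-one matrix, and invoke the Sherman--Morrison / matrix-determinant lemma to conclude invertibility. Your parametrization as $D + r\boldsymbol{1}^{\intercal}$ with $D=\operatorname{diag}(p_n-q_n)$ is in fact slightly cleaner than the paper's row-normalization (which implicitly divides by $q_n$ and so quietly assumes $q_n>0$), since it handles $q_n=0$ without a separate case.
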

It should be noted that the uniformness of the noise is important.
The following alone does not imply full row rank: For every state
$\omega_{n}$, there is a unique and distinct realization $y_{m}$
with the largest conditional probability. Below is a counterexample. 
\begin{example}
Both experiments $\mathcal{E}_{1}$ and $\mathcal{E}_{2}$ have a
unique and distinct realization $y_{m}$ with the largest conditional
probability for each state. Experiment $\mathcal{E}_{1}$ has uniform
random noise but $\mathcal{E}_{2}$ does not. $\mathcal{E}_{1}$ has
full row rank, but $\mathcal{E}_{2}$ does not and one can verify
this by computing the determinant of $\mathcal{E}_{2}$.\footnote{A square matrix has full row rank if and only if it is invertible,
which in turn is equivalent to having a non-zero determinant. The
determinant of $\mathcal{E}_{2}$ is zero.}
\[
\mathcal{E}_{1}=\begin{array}{c}
\frac{1}{6}\\
\frac{1}{7}\\
\frac{1}{8}
\end{array}\begin{bmatrix}3 & 1 & 1 & 1\\
1 & 2 & 1 & 1\\
1 & 1 & 4 & 1
\end{bmatrix};\mathcal{E}_{2}=\begin{array}{c}
\frac{1}{2+\sqrt{2}}\\
\frac{1}{2+2\sqrt{2}}\\
\frac{1}{2+\sqrt{2}}
\end{array}\begin{bmatrix}2 & \sqrt{2} & 0\\
\sqrt{2} & 2 & \sqrt{2}\\
0 & \sqrt{2} & 2
\end{bmatrix}.
\]
\end{example}

\subsubsection{General Case: Column Space Order}

More generally, not all experiments are implementable if $\mathcal{E}_{P}$
does not have full row rank. From Theorem \ref{thm:imp}, the implementability
of certain $\mathcal{E}_{A}$ depends on both the column space of
$\mathcal{E}_{P}$ and the exact shape of the cost function at $\mathcal{E}_{A}$.
There is no simple characterization for the implementable set of experiments.
Appendix \ref{appsec:imp-deficient-rank} provides an example.

On the other hand, I can still compare different $\mathcal{E}_{P}$'s
in terms of which one can always implement a larger set of $\mathcal{E}_{A}$'s.
Formally, let $\mathcal{I}^{\mu_{0},C}(\mathcal{E}_{P})$ denote the
set of implementable experiments when the prior is $\mu_{0}$, the
agent's cost function is $C$, and the principal has $\mathcal{E}_{P}$
as the contractible experiment. Consider any two experiments $\mathcal{E}_{P}$
and $\mathcal{E}_{P}'$. The goal is to identify the condition under
which $\mathcal{I}^{C,\mu_{0}}(\mathcal{E}_{P})\supseteq\mathcal{I}^{C,\mu_{0}}(\mathcal{E}_{P}')$
holds for any cost function $C\in\mathcal{C}_{0}$ and interior prior
$\mu_{0}$.\footnote{$\mathcal{C}_{0}$ is the set of all cost functions that satisfy Assumptions
\ref{assu:tech-post-sep} and \ref{assu:smooth-post-sep}. For the
comparison of implementable set, I do not need Assumption \ref{assu:inf-slope-post-sep}. } 

The column space of $\mathcal{E}_{P}$ plays an crucial role in determining
the implementable set, as is clear from Theorem \ref{thm:imp} and
Corollary \ref{cor:full-imp}. This motivates following definition
of the column space order.
\begin{defn}
[Column Space Order] Say that experiment $\mathcal{E}_{P}\in E^{M}$
dominates $\mathcal{E}_{P}'\in E^{M'}$ in the column space order,
denoted $\mathcal{E}_{P}\geq_{\text{Col}}\mathcal{E}_{P}'$, if $\operatorname{Col}\mathcal{E}_{P}\supseteq\operatorname{Col}\mathcal{E}_{P}'$.
Say that $\mathcal{E}_{P}\in E^{M}$ dominates $\mathcal{E}_{P}'\in E^{M'}$
in the strict column space order, denoted $\mathcal{E}_{P}>_{\text{Col}}\mathcal{E}_{P}'$,
if $\operatorname{Col}\mathcal{E}_{P}\supsetneq\operatorname{Col}\mathcal{E}_{P}'$.
\end{defn}
The column space order is first introduced by \citet{azrieli2022elicitability}.
It can be viewed as a relaxation of the Blackwell order. It is easy
to see that $\mathcal{E}_{P}\geq_{\text{Col}}\mathcal{E}_{P}'$ if
and only if there exists some matrix $G\in\mathbb{R}^{M\times M'}$
such that $\mathcal{E}_{P}'=\mathcal{E}_{P}G$, as opposed to a garbling
matrix required by Blackwell. In some sense, the column space order
only captures the qualitative information contained in $\mathcal{E}_{P}$. 

The column space order characterizes which experiment always has a
larger implementable set. The following proposition formalizes this. 
\begin{prop}
[Comparison of Implementable Set]\label{prop:imp-comparison} $\mathcal{E}_{P}\geq_{\text{Col}}\mathcal{E}_{P}'$
if and only if $\mathcal{I}^{C,\mu_{0}}(\mathcal{E}_{P})\supseteq\mathcal{I}^{C,\mu_{0}}(\mathcal{E}_{P}')$
for any $C\in\mathcal{C}_{0}$ and $\mu_{0}\in\operatorname{int}\Delta\Omega$.
$\mathcal{E}_{P}>_{\text{Col}}\mathcal{E}_{P}'$ if and only if the
above holds and $\mathcal{I}^{C,\mu_{0}}(\mathcal{E}_{P})\supsetneq\mathcal{I}^{C,\mu_{0}}(\mathcal{E}_{P}')$
for some $C\in\mathcal{C}_{0}$ and $\mu_{0}\in\operatorname{int}\Delta\Omega$.
\end{prop}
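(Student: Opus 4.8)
The plan is to prove the non-strict equivalence first; the strict version then follows by pure logic. I would use two elementary facts throughout: (i) $\mathcal{E}_{P}\geq_{\text{Col}}\mathcal{E}_{P}'$ holds iff $\mathcal{E}_{P}'=\mathcal{E}_{P}G$ for some (not necessarily stochastic) $G\in\mathbb{R}^{M\times M'}$; and (ii) $\bm{1}\in\operatorname{Col}\mathcal{E}_{P}\cap\operatorname{Col}\mathcal{E}_{P}'$, since both experiments are row stochastic. For the ``only if'' direction, suppose $\mathcal{E}_{P}\geq_{\text{Col}}\mathcal{E}_{P}'$, write $\mathcal{E}_{P}'=\mathcal{E}_{P}G$, fix $C\in\mathcal{C}_{0}$ and $\mu_{0}$, and let a contract $T'$ implement some $\mathcal{E}_{A}$ under $\mathcal{E}_{P}'$. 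Then $T:=GT'$ implements $\mathcal{E}_{A}$ under $\mathcal{E}_{P}$: indeed $\mathcal{E}_{P}T_{k}=\mathcal{E}_{P}GT'_{k}=\mathcal{E}_{P}'T'_{k}$ for every $k$, so the agent's objective in (\ref{eq:A-problem}) under $(\mathcal{E}_{P},T)$ is the very same function of $\mathcal{E}$ as his objective under $(\mathcal{E}_{P}',T')$, which $\mathcal{E}_{A}$ maximizes by hypothesis. Hence $\mathcal{I}^{C,\mu_{0}}(\mathcal{E}_{P})\supseteq\mathcal{I}^{C,\mu_{0}}(\mathcal{E}_{P}')$. (One could instead invoke Theorem~\ref{thm:imp} directly, since $\nabla_{k}-\nabla_{k'}\in\operatorname{Col}\mathcal{E}_{P}'\subseteq\operatorname{Col}\mathcal{E}_{P}$.)

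For the ``if'' direction I would argue by contraposition: assuming $\mathcal{E}_{P}\not\geq_{\text{Col}}\mathcal{E}_{P}'$, I construct $C\in\mathcal{C}_{0}$, a prior $\mu_{0}$, and an experiment implementable under $\mathcal{E}_{P}'$ but not under $\mathcal{E}_{P}$. Pick $v\in\operatorname{Col}\mathcal{E}_{P}'\setminus\operatorname{Col}\mathcal{E}_{P}$ and set $\tilde{v}:=v-\tfrac{\bm{1}^{\intercal}v}{N}\bm{1}$; because $\bm{1}$ lies in both column spaces, $\tilde{v}$ is still in $\operatorname{Col}\mathcal{E}_{P}'\setminus\operatorname{Col}\mathcal{E}_{P}$, is nonzero, and satisfies $\bm{1}^{\intercal}\tilde{v}=0$. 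Fix any $\mu_{0}\in\operatorname{int}\Delta\Omega$ and take $\varepsilon>0$ small enough that $x_{1}:=\mu_{0}+\tfrac{\varepsilon}{2}\tilde{v}$ and $x_{2}:=\mu_{0}-\tfrac{\varepsilon}{2}\tilde{v}$ lie in $\operatorname{int}\Delta\Omega$; let $\mathcal{E}_{A}$ be the experiment inducing $x_{1},x_{2}$ each with probability $\tfrac{1}{2}$, which is Bayes plausible. Let $c(\mu):=\tfrac{1}{2}\|\mu-\mu_{0}\|^{2}$, which is continuous, convex, differentiable with $\nabla_{c}(\mu)=\mu-\mu_{0}$, and has $c(\mu_{0})=0$, so $C\in\mathcal{C}_{0}$. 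Then $C(\mathcal{E}_{A})<+\infty$, the posteriors of $\mathcal{E}_{A}$ are interior, and $\nabla_{1}-\nabla_{2}=x_{1}-x_{2}=\varepsilon\tilde{v}$. By Theorem~\ref{thm:imp} --- whose column-space characterization is available for $\mathcal{E}_{A}$ with fully supported posteriors even without Assumption~\ref{assu:inf-slope-post-sep} (see the discussion following Theorem~\ref{thm:imp} and Appendix~\ref{appsec:imp-corner-solutions}) --- $\mathcal{E}_{A}$ is implementable under $\mathcal{E}_{P}'$ (as $\varepsilon\tilde{v}\in\operatorname{Col}\mathcal{E}_{P}'$) but not under $\mathcal{E}_{P}$ (as $\varepsilon\tilde{v}\notin\operatorname{Col}\mathcal{E}_{P}$). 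Thus $\mathcal{I}^{C,\mu_{0}}(\mathcal{E}_{P})\not\supseteq\mathcal{I}^{C,\mu_{0}}(\mathcal{E}_{P}')$, which is the desired contrapositive.

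The strict statement then follows formally. Since $\mathcal{E}_{P}>_{\text{Col}}\mathcal{E}_{P}'$ means $\mathcal{E}_{P}\geq_{\text{Col}}\mathcal{E}_{P}'$ together with $\neg(\mathcal{E}_{P}'\geq_{\text{Col}}\mathcal{E}_{P})$, the non-strict equivalence gives, on one hand, $\mathcal{I}^{C,\mu_{0}}(\mathcal{E}_{P})\supseteq\mathcal{I}^{C,\mu_{0}}(\mathcal{E}_{P}')$ for all $(C,\mu_{0})$, and on the other hand (applying the contrapositive with the two experiments swapped) $\mathcal{I}^{C,\mu_{0}}(\mathcal{E}_{P}')\not\supseteq\mathcal{I}^{C,\mu_{0}}(\mathcal{E}_{P})$ for some $(C,\mu_{0})$, i.e.\ there the inclusion is strict; conversely, an inclusion that holds for all $(C,\mu_{0})$ and is strict for some forces $\mathcal{E}_{P}\geq_{\text{Col}}\mathcal{E}_{P}'$ and $\neg(\mathcal{E}_{P}'\geq_{\text{Col}}\mathcal{E}_{P})$, hence $\mathcal{E}_{P}>_{\text{Col}}\mathcal{E}_{P}'$. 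I expect the contrapositive construction to require the most care: one has to project out the $\bm{1}$-component of $v$ before realizing it as a difference of two beliefs --- which is harmless exactly because $\bm{1}$ lies in every experiment's column space (the same reason Theorem~\ref{thm:imp} effectively quotients by $\operatorname{span}\{\bm{1}\}$) --- and one has to keep the witness experiment's posteriors interior so the characterization applies under $\mathcal{C}_{0}$ alone. Everything else is routine linear algebra and a direct substitution.
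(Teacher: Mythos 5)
Your proof is correct, and while the sufficiency direction and the reduction of the strict statement to the non-strict one mirror the paper, your necessity (contrapositive) argument takes a genuinely different and more self-contained route. The paper picks $v\in\operatorname{Col}\mathcal{E}_{P}'\setminus\operatorname{Col}\mathcal{E}_{P}$, invokes a separate claim (Claim \ref{claim:v-not-all-pos-neg}) to arrange $v\not\geq\bm{0}$ and $v\not\leq\bm{0}$, and then routes through the no-dominance machinery of Lemma \ref{lem:nabla-no-dominance} to argue that \emph{some} cost function and \emph{some} experiment realize $v$ as a column difference of a marginal cost matrix. You instead project out the $\bm{1}$-component (legitimate since $\bm{1}$ lies in every experiment's column space) and exhibit everything explicitly: a quadratic cost $c(\mu)=\tfrac{1}{2}\|\mu-\mu_{0}\|^{2}$ and a symmetric binary experiment with interior posteriors $\mu_{0}\pm\tfrac{\varepsilon}{2}\tilde{v}$. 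This sidesteps both Claim \ref{claim:v-not-all-pos-neg} and the no-dominance lemma (your $\tilde{v}$ automatically has entries of both signs since it is nonzero and orthogonal to $\bm{1}$, so it would satisfy no dominance anyway), and your correct handling of the boundary issue --- keeping posteriors interior so that Theorem \ref{thm:imp}'s characterization applies under $\mathcal{C}_{0}$ alone --- addresses a point the paper's proof glosses over. Your sufficiency argument via $T:=GT'$ is also slightly stronger than citing Theorem \ref{thm:imp}: it shows the agent's objective is literally unchanged, so it needs no characterization result and no assumptions on $C$ at all. One minor imprecision: under the paper's normalization $\mu\cdot\nabla_{c}(\mu)=c(\mu)$ (Definition \ref{def:MC-post-sep}), the marginal cost difference is $\nabla_{1}-\nabla_{2}=\varepsilon\tilde{v}+\beta\bm{1}$ for a generally nonzero scalar $\beta$, not exactly $\varepsilon\tilde{v}$; this is harmless because $\bm{1}$ lies in both column spaces, but you should say so rather than assert equality.
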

Intuitively, a larger column space means the principal can generate
a larger set of state dependent utilities, hence enlarging the implementable
set. The converse follows by construction: If the column space inclusion
does not hold, there exists a state dependent utility that is feasible
only under the second experiment $\mathcal{E}_{P}'$. Using this utility,
one can construct a cost function such that certain learning is implementable
only under $\mathcal{E}_{P}'$ but not $\mathcal{E}_{P}$. 

While the column space answers the question of implementability, it
does not speak to the costs of implementation. The next section examines
cost minimization and studies which experiments provide incentives
more efficiently.

\section{Cost Minimization \label{sec:IR-cost-min}}

Now that we know what can be implemented, I turn to the principal's
cost minimization problem. The value of this problem is the principal's
indirect cost of acquiring information through the agent. This section
studies the question of what contractible information always leads
to a lower implementation cost. As the first step, in Section \ref{subsec:imp-contracts},
Theorem \ref{thm:imp-contract-decomposition} characterizes the set
of all contracts that implements a given experiment. This allows me
to reduce the principal's problem as a linear program, and obtain
a solution in Section \ref{subsec:solve-cost-min}. Section \ref{subsec:LL-cost-comparison}
defines and studies a new order on experiments based on the implementation
costs. Prop \ref{prop:LL-cost-comparison} provides a general sufficient
condition for this order, and Prop \ref{prop:binary-LL-characterization}
provides a complete characterization in the binary-binary case. I
maintain Assumptions \ref{assu:tech-post-sep}-\ref{assu:inf-slope-post-sep}
thoughout the section. 

\subsection{Contracts Implementing a Given $\mathcal{E}_{A}$ \label{subsec:imp-contracts}}

I now characterize the set of all contracts under $\mathcal{E}_{P}\in E^{M}$
that implement a given $\mathcal{E}_{A}\in E^{K}$, fixing some interior
prior $\mu_{0}$ and cost function $C$ that satisfies Assumptions
\ref{assu:tech-post-sep}-\ref{assu:inf-slope-post-sep}. The following
theorem characterizes all such contracts $T\in\mathbb{R}^{M\times K}$.\footnote{Recall that a contract $T\in\mathbb{R}^{M\times K}$ specifies the
payment to the agent given the realizations $\{y_{m}\}_{m=1}^{M}$
of $\mathcal{E}_{P}$ and the reports of the agent $\{x_{k}\}_{k=1}^{K}$.
Every row of $T$ represents a realization of $\mathcal{E}_{P}$ and
every column represents the agent's reports. $T_{k}$ denotes the
$k$-th column of $T$. It specifies the payments following each realization
of $\mathcal{E}_{P}$ when the agent reports $x_{k}$.}
\begin{thm}
[Shape of Contracts]\label{thm:imp-contract-decomposition} Suppose
$\mathcal{E}_{A}$ is implementable under $\mathcal{E}_{P}$. Any
contract $T$ under $\mathcal{E}_{P}$ that implements $\mathcal{E}_{A}$
takes the following form, 
\begin{equation}
T_{k}=\mathcal{E}_{P}^{-}\nabla_{k}+Z+W_{k},\forall1\leq k\leq K\label{eq:imp-payment-rule}
\end{equation}
where $\mathcal{E}_{P}^{-}$ is the pseudo-inverse of $\mathcal{E}_{P}$,
$Z\in\mathbb{R}^{M}$ and $W_{k}\in\mathbb{R}^{M}$ can take any values
as long as $\mathcal{E}_{P}W_{k}=\boldsymbol{0},\forall1\leq k\leq K$.
\end{thm}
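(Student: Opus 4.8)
\textbf{Proof plan for Theorem \ref{thm:imp-contract-decomposition}.}

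The plan is to start from Lemma \ref{lem:imp-matrix-eqn}, which tells us that $T$ implements $\mathcal{E}_A$ if and only if there is some $\lambda\in\mathbb{R}^N$ with $\mathcal{E}_P T=\nabla+\lambda\bm{1}_{1\times K}$. Column by column this reads $\mathcal{E}_P T_k=\nabla_k+\lambda$ for every $k$. So the task reduces to describing, for a fixed right-hand side $b_k:=\nabla_k+\lambda$, the full solution set of the linear system $\mathcal{E}_P T_k=b_k$ in the unknown $T_k$, and then allowing $\lambda$ itself to vary over all vectors for which each such system is consistent.

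First I would recall the standard fact about pseudo-inverses: the equation $\mathcal{E}_P v=b$ is consistent if and only if $b\in\operatorname{Col}\mathcal{E}_P$ (equivalently $\mathcal{E}_P\mathcal{E}_P^{-}b=b$), and in that case its full solution set is $\{\mathcal{E}_P^{-}b+w: \mathcal{E}_P w=\bm 0\}$, i.e. a particular solution plus the null space of $\mathcal{E}_P$. Applying this with $b=\nabla_k+\lambda$ gives $T_k=\mathcal{E}_P^{-}\nabla_k+\mathcal{E}_P^{-}\lambda+W_k$ with $\mathcal{E}_P W_k=\bm 0$. Setting $Z:=\mathcal{E}_P^{-}\lambda$ yields exactly the claimed form \eqref{eq:imp-payment-rule}. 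Conversely, any $T$ of that form satisfies $\mathcal{E}_P T_k=\nabla_k+\mathcal{E}_P\mathcal{E}_P^{-}\lambda$, which equals $\nabla_k+\lambda'$ for $\lambda':=\mathcal{E}_P\mathcal{E}_P^{-}\lambda=\mathcal{E}_P Z$, so by Lemma \ref{lem:imp-matrix-eqn} it implements $\mathcal{E}_A$.

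The one point that needs care — and the main (mild) obstacle — is the interplay between the two free parameters $Z$ and $\lambda$: the theorem as stated lets $Z$ range over \emph{all} of $\mathbb{R}^M$, not just over the range of $\mathcal{E}_P^{-}$. I would argue this is harmless. Writing $Z=Z^{\parallel}+Z^{\perp}$ with $Z^{\parallel}$ in the row space of $\mathcal{E}_P$ (the range of $\mathcal{E}_P^{-}$) and $Z^{\perp}$ in the null space of $\mathcal{E}_P$, the component $Z^{\perp}$ can be absorbed into the $W_k$ terms (it adds the same null-space vector to every column, which is permitted since the $W_k$ are unconstrained apart from lying in $\ker\mathcal{E}_P$), while $Z^{\parallel}=\mathcal{E}_P^{-}\lambda$ for a suitable $\lambda$ because $\mathcal{E}_P$, having consistent rows summing to one, is onto when restricted appropriately — more simply, for any $Z^{\parallel}$ in the row space there is $\lambda$ with $\mathcal{E}_P^{-}\lambda=Z^{\parallel}$, e.g. $\lambda=\mathcal{E}_P Z^{\parallel}$. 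Hence the parametrization in \eqref{eq:imp-payment-rule} with unrestricted $Z$ describes precisely the same set of contracts as the Lagrangian characterization, and implementability of $\mathcal{E}_A$ (given by Theorem \ref{thm:imp}) guarantees $\mathcal{E}_P^{-}\nabla_k$ is a genuine particular solution, i.e. $\nabla_k-\nabla_{k'}\in\operatorname{Col}\mathcal{E}_P$ ensures consistency of the relevant systems. I would close by noting the decomposition's interpretation: $\mathcal{E}_P^{-}\nabla_k$ supplies the incentives, $Z$ is a report-independent bonus, and $W_k$ are the conditionally-mean-zero side bets, matching the three-part description in the introduction.
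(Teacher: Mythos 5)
Your proposal is correct in substance and follows the same overall architecture as the paper: reduce to the Lagrangian characterization of Lemma \ref{lem:imp-matrix-eqn}, describe the full solution set of the matrix equation $\mathcal{E}_{P}T=\nabla+\lambda\bm{1}_{1\times K}$, and then absorb the kernel component of $Z$ into the $W_{k}$'s. The one genuine difference is in how the solution set is obtained: the paper cites the general-solution formulas of \citet{baksalary1979matrix} for equations of the form $AX-YB=C$ (Theorem \ref{thm:AX-YB=00003DC}), whereas you fix $\lambda$ and solve the $K$ ordinary linear systems $\mathcal{E}_{P}T_{k}=\nabla_{k}+\lambda$ column by column via the elementary ``particular solution plus $\ker\mathcal{E}_{P}$'' description. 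Your route is more self-contained and transparent; the paper's buys an explicit closed form for $\lambda$ as well as for $T$, which it reuses elsewhere. Your handling of the mismatch between an unrestricted $Z\in\mathbb{R}^{M}$ and the constrained $\mathcal{E}_{P}^{-}\lambda\in\operatorname{Row}\mathcal{E}_{P}$ --- orthogonal decomposition $Z=Z^{\parallel}+Z^{\perp}$ and absorption of $Z^{\perp}$ into the $W_{k}$ --- is exactly the paper's argument.

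One step is stated imprecisely, though it does not break the proof. In the converse direction you write $\mathcal{E}_{P}T_{k}=\nabla_{k}+\mathcal{E}_{P}\mathcal{E}_{P}^{-}\lambda$, which tacitly uses $\mathcal{E}_{P}\mathcal{E}_{P}^{-}\nabla_{k}=\nabla_{k}$; this holds only if $\nabla_{k}\in\operatorname{Col}\mathcal{E}_{P}$, which implementability does not guarantee (it gives only $\nabla_{k}-\nabla_{k'}\in\operatorname{Col}\mathcal{E}_{P}$). The correct computation is $\mathcal{E}_{P}T_{k}=\nabla_{k}-\left(I-\mathcal{E}_{P}\mathcal{E}_{P}^{-}\right)\nabla_{k}+\mathcal{E}_{P}Z$, and you then need the observation that $\left(I-\mathcal{E}_{P}\mathcal{E}_{P}^{-}\right)\nabla_{k}$ is the same vector for every $k$ --- precisely because $\nabla_{k}-\nabla_{k'}\in\operatorname{Col}\mathcal{E}_{P}$ is annihilated by $I-\mathcal{E}_{P}\mathcal{E}_{P}^{-}$ --- so that this residual can be folded into the report-independent constant $\lambda'$. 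This is the role played by the term $-\frac{1}{K}\left(I_{N}-\mathcal{E}_{P}\mathcal{E}_{P}^{-}\right)\nabla\bm{1}_{K\times1}$ in the paper's explicit formula for $\lambda$. With that one-line fix your argument is complete.
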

Theorem \ref{thm:imp-contract-decomposition} says that any contract
that implements $\mathcal{E}_{A}$ can be decomposed into three parts.
The first part $\mathcal{E}_{P}^{-}\nabla_{k}$ provides the correct
incentives for the agent to learn $\mathcal{E}_{A}$. $\nabla_{k}$
is the marginal cost vector of changing beliefs at posterior $x_{k}$
of $\mathcal{E}_{A}$. If the principal can contract on the state,
she would like to set the agent's state dependent utility equal to
the marginal cost $\nabla_{k}$. However, since she only has access
to $\mathcal{E}_{P}$, she can only generate state dependent utilities
in $\operatorname{Col}\mathcal{E}_{P}$. The best she can do is to
approximate $\nabla_{k}$ by its orthogonal projection onto $\operatorname{Col}\mathcal{E}_{P}$
using payments $\mathcal{E}_{P}^{-}\nabla_{k}$.\footnote{This is due to the following property of the psuedo-inverse: For any
vector $v$, the product $\mathcal{E}_{P}^{-}v$ gives the minimum-norm
solution to the least-squares problem $\mathcal{E}_{P}x=v$. As a
result, $\mathcal{E}_{P}\mathcal{E}_{P}^{-}v$ is the orthogonal projection
of $v$ onto the column space of $\mathcal{E}_{P}$.}  

 The remaining two parts are payments that do not affect incentives,
but they determine the agent's rents. $Z$ is a bonus (or punishment,
if negative) term that depends only on the realization of $\mathcal{E}_{P}$.
$Z\in\mathbb{R}^{M}$ specifies the payment for each realization of
$\mathcal{E}_{P}$ is. Such bonuses obviously do not affect incentives
as they do not depend on the agent's report. 

$W_{k}$'s are a collection of side bets with an expected payment
of zero conditional on every state. Every $W_{k}$ represents a side
bet where the principal pays the agent $W_{k}^{m}$ if the agent reports
$x_{k}$ and $\mathcal{E}_{P}$'s realization is $y_{m}$. These side
bets do not affect incentives though they depend on the agent's report.
This is because $\mathcal{E}_{P}W_{k}=0$, which means the expected
value from $W_{k}$ is zero in every state. As a result, information
about the states never changes the value of such side bets, and they
do not matter for incentives. These side bets are non-trivial only
when $\mathcal{E}_{P}$ has deficient row rank. If $\mathcal{E}_{P}$
has full row rank, we necessarily have $W_{k}=\boldsymbol{0}$. 

\subsection{Characterization of Implementation Costs \label{subsec:solve-cost-min}}

I now characterize the implementation cost. This consists of two parts,
namely, the first best cost which is the agent's information cost
to acquire the information, and the agency rent, which is the additional
payment to the agent because of moral hazard. Agency rents exist here
because of limited liability. Without it, the principal can always
leave no rents to the agent and any implementable $\mathcal{E}_{A}$
can be implemented at the first best cost $C(\mathcal{E}_{A})$.\footnote{This is because principal can start from any $T$ that implements
$\mathcal{E}_{A}$ and make the participation constraint (\ref{eq:agent-PC-EAIR})
bind by reducing the payment uniformly. Details are provided in Appendix
(\ref{appsec:proof-appendix}).}

Under limited liability, the principal has to optimally choose where
the limited liability constraint should bind to minimize the implemetation
cost. When $\mathcal{E}_{P}$ has full row rank, the optimal contract
and the cost can be fully characterized.\footnote{When $\mathcal{E}_{P}$ has deficient row rank, the principal has
to solve a more complicated linear program to obtain the optimal contract.} To compactly state this result, I introduce the $\operatorname{rowmin}$
notation. For any matrix $A\in\mathbb{R}^{N\times K}$, let $\operatorname{rowmin}\left(A\right)\in\mathbb{R}^{N}$
denote the vector of row-wise minima of $A$, that is, $\left[\operatorname{rowmin}\left(A\right)\right]_{n}:=\min_{k}A_{nk}$. 
\begin{prop}
[Optimal Contract]\label{prop:LL-opt-contract} Suppose $\mathcal{E}_{P}$
has full row rank. The optimal contract $T$ is uniquely given by
$T_{k}=\mathcal{E}_{P}^{-}\nabla_{k}+Z$ with $Z:=-\operatorname{rowmin}\left(\mathcal{E}_{P}^{-}\nabla\right)$.
The minimum cost is $\kappa^{C,\mu_{0}}(\mathcal{E}_{A},\mathcal{E}_{P})=C(\mathcal{E}_{A})-\mu_{0}\cdot\mathcal{E}_{P}\operatorname{rowmin}\left(\mathcal{E}_{P}^{-}\nabla\right)$.
\end{prop}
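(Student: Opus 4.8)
The plan is to start from the full characterization of implementing contracts in Theorem \ref{thm:imp-contract-decomposition} and simply optimize over the free parameters. When $\mathcal{E}_P$ has full row rank, the pseudo-inverse is a right inverse, so $\mathcal{E}_P W_k = \boldsymbol 0$ forces $W_k = \boldsymbol 0$; hence every implementing contract has the form $T_k = \mathcal{E}_P^{-}\nabla_k + Z$ for a single common vector $Z \in \mathbb{R}^M$. Substituting into the objective of (\ref{eq:P-problem}), the expected payment is $\mathbb{E}_{x_k,y_m}[T(y_m,x_k)] = \sum_k \mathcal{E}_A(x_k)\,(\mathcal{E}_P^\intercal x_k)\cdot T_k$. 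Because $\sum_k \mathcal{E}_A(x_k) x_k = \mu_0$ by Bayes plausibility, the term involving $Z$ collapses: $\sum_k \mathcal{E}_A(x_k)(\mathcal{E}_P^\intercal x_k)\cdot Z = \mu_0 \cdot \mathcal{E}_P Z$. So the objective is an affine function of $Z$, namely $\big(\text{const involving }\mathcal{E}_P^{-}\nabla\big) + \mu_0 \cdot \mathcal{E}_P Z$, and it remains only to minimize $\mu_0\cdot \mathcal{E}_P Z$ over the feasible set of $Z$.

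Next I would pin down the feasible set. Limited liability (\ref{eq:limited-liability}) requires $T_k \geq \boldsymbol 0$ for all $k$, i.e. $(\mathcal{E}_P^{-}\nabla)_{mk} + Z_m \geq 0$ for every $m,k$, which is equivalent to $Z_m \geq -\min_k (\mathcal{E}_P^{-}\nabla)_{mk} = -[\operatorname{rowmin}(\mathcal{E}_P^{-}\nabla)]_m$, i.e. $Z \geq -\operatorname{rowmin}(\mathcal{E}_P^{-}\nabla)$ componentwise. Since $\mu_0 \in \operatorname{int}\Delta\Omega$ and $\mathcal{E}_P$ is row-stochastic, the row vector $\mu_0^\intercal \mathcal{E}_P$ has all strictly positive entries; therefore the linear functional $Z \mapsto \mu_0\cdot\mathcal{E}_P Z = (\mu_0^\intercal\mathcal{E}_P) Z$ is strictly increasing in each coordinate of $Z$, and is minimized over $\{Z : Z \geq -\operatorname{rowmin}(\mathcal{E}_P^{-}\nabla)\}$ uniquely at the corner $Z = -\operatorname{rowmin}(\mathcal{E}_P^{-}\nabla)$. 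This gives both the uniqueness claim and the stated form of $T$. (The participation constraint (\ref{eq:agent-PC-EAIR}) is slack by the footnote argument in the model section, so it does not bind.)

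Finally I would compute the value. Plugging $Z = -\operatorname{rowmin}(\mathcal{E}_P^{-}\nabla)$ back in, the objective equals $\sum_k \mathcal{E}_A(x_k)(\mathcal{E}_P^\intercal x_k)\cdot(\mathcal{E}_P^{-}\nabla_k) + \mu_0\cdot\mathcal{E}_P Z$. For the first sum, the agent's interim expected payment in state $\omega_n$ from the incentive part is, by the first-order condition (\ref{eq:imp-matrix-eqn}) read row-by-row, the relevant identity is that the agent's total expected payoff from the incentive component reproduces $C(\mathcal{E}_A)$ plus a term that cancels; concretely, using (\ref{eq:imp-matrix-eqn}) with $Z$ playing the role absorbed into $\lambda$, the expected payment from the $\mathcal{E}_P^{-}\nabla_k$ terms is $\sum_k \mathcal{E}_A(x_k)\, x_k \cdot (\mathcal{E}_P\mathcal{E}_P^{-}\nabla_k)$, and since $\mathcal{E}_P\mathcal{E}_P^{-}$ is the orthogonal projection onto $\operatorname{Col}\mathcal{E}_P = \mathbb{R}^N$ (full row rank), this is $\sum_k \mathcal{E}_A(x_k)\, x_k\cdot\nabla_k = \sum_k \mathcal{E}_A(x_k)\, x_k\cdot\nabla_{x_k} c = C(\mathcal{E}_A)$, where the last equality is the standard Euler-type identity for posterior-separable costs together with $c(\mu_0)=0$ (so that the $\lambda$-direction contributes nothing after averaging against $\mu_0$). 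Combining, $\kappa^{C,\mu_0}(\mathcal{E}_A,\mathcal{E}_P) = C(\mathcal{E}_A) - \mu_0\cdot\mathcal{E}_P\operatorname{rowmin}(\mathcal{E}_P^{-}\nabla)$.

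I expect the main obstacle to be the bookkeeping in the last step: verifying cleanly that the expected payment from the incentive component $\mathcal{E}_P^{-}\nabla_k$ equals exactly $C(\mathcal{E}_A)$. This requires being careful that (\ref{eq:imp-matrix-eqn}) is the relevant first-order condition, that $\mathcal{E}_P\mathcal{E}_P^{-} = I_N$ under full row rank, and that the posterior-separable structure gives $\sum_k \mathcal{E}_A(x_k)\, x_k\cdot\nabla_{x_k}c = \sum_k \mathcal{E}_A(x_k)\, c(x_k) = C(\mathcal{E}_A)$ via convexity, differentiability, and $c(\mu_0)=0$ — essentially the same computation already used to justify that without limited liability the first-best cost $C(\mathcal{E}_A)$ is attainable. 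Everything else is a one-line linear-programming argument.
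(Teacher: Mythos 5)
Your proposal follows the same route as the paper's own proof: parameterize every implementing contract via Theorem \ref{thm:imp-contract-decomposition}, argue that full row rank kills the side bets $W_{k}$, and then solve a one-line linear program in $Z$. The $Z$-step is right (with the caveat that uniqueness needs every realization of $\mathcal{E}_{P}$ to occur with positive probability so that $\mu_{0}^{\intercal}\mathcal{E}_{P}\gg0$), and your value computation is actually spelled out more carefully than in the paper: the identity $\sum_{k}\Pr(x_{k})\,x_{k}\cdot\nabla_{k}=\sum_{k}\Pr(x_{k})c(x_{k})=C(\mathcal{E}_{A})$ is exactly the normalization $\mu\cdot\nabla_{c}(\mu)=c(\mu)$ built into Definition \ref{def:MC-post-sep}; you do not need $c(\mu_{0})=0$ for that step, and no $\lambda$ enters once you work with the decomposition.

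However, the step ``full row rank, so $\mathcal{E}_{P}W_{k}=\boldsymbol{0}$ forces $W_{k}=\boldsymbol{0}$'' is a genuine gap (one the paper's own proof shares). Full row rank of an $N\times M$ matrix means $\mathcal{E}_{P}:\mathbb{R}^{M}\to\mathbb{R}^{N}$ is surjective, i.e.\ right-invertible; injectivity would require full \emph{column} rank, and $\operatorname{Ker}\mathcal{E}_{P}$ has dimension $M-N>0$ whenever $M>N$. The nontrivial side bets drop out of the objective (since $\mu_{0}^{\intercal}\mathcal{E}_{P}W_{k}=0$) but they relax the limited-liability constraints, and this can strictly lower the cost, so the omission is not merely cosmetic. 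Concretely, take $N=2$, $M=3$, $\mathcal{E}_{P}$ with rows $(1/2,1/4,1/4)$ and $(1/4,1/2,1/4)$, uniform prior, and a two-report $\mathcal{E}_{A}$ whose marginal cost matrix yields $\mathcal{E}_{P}^{-}\nabla_{1}=(4.5,-6,-0.5)^{\intercal}$ and $\mathcal{E}_{P}^{-}\nabla_{2}=\boldsymbol{0}$; this $\nabla$ has $\nabla_{1}-\nabla_{2}=(0.625,-2)^{\intercal}$, so it satisfies no dominance and is attainable by Lemma \ref{lem:nabla-no-dominance}. The contract of the proposition pays rents $-\mu_{0}\cdot\mathcal{E}_{P}\operatorname{rowmin}(\mathcal{E}_{P}^{-}\nabla)=19/8$, whereas adding the side bet $W_{1}=-\tfrac{1}{6}(1,1,-3)^{\intercal}\in\operatorname{Ker}\mathcal{E}_{P}$, $W_{2}=\boldsymbol{0}$ and re-optimizing $Z$ gives a feasible implementing contract with rents $37/16<19/8$. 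Your argument (and the stated formula) is airtight only when $M=N$, i.e.\ when $\mathcal{E}_{P}$ is square and invertible; for $M>N$ the minimization must genuinely be carried out over $W_{k}\in\operatorname{Ker}\mathcal{E}_{P}$ as well, and the corner $W_{k}=\boldsymbol{0}$ need not be optimal.
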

The idea is simple. We know from Theorem \ref{thm:imp-contract-decomposition}
that every contract $T$ that implements $\mathcal{E}_{A}$ must take
the form of (\ref{eq:imp-payment-rule}). When $\mathcal{E}_{P}$
has full row rank, the side bets $W_{k}=\boldsymbol{0}$ must be trivial.
The principal optimally picks the bonuses $Z$ to be the minus of
the row-wise minima of $\mathcal{E}_{P}^{-}\nabla$ so that, for each
realization of $\mathcal{E}_{P}$, at least one report $x_{k}$ gets
a zero payment. 

The principal's indirect cost $\kappa^{C,\mu_{0}}(\mathcal{E}_{A},\mathcal{E}_{P})$
derived in Proposition \ref{prop:LL-opt-contract} generalizes that
of \citet{sharma2024procuring}, who study the same problem with the
restriction that $\mathcal{E}_{P}$ fully reveals the state. Several
key features of the indirect cost function persist when I allow more
general $\mathcal{E}_{P}$. In particular, the indirect cost function
is lower semi-continuous, but it is neither posterior separable nor
Blackwell monotone.

\subsection{Comparison by Implementation Costs \label{subsec:LL-cost-comparison}}

We are now ready to compare experiments in terms of the implementation
costs. I define the following order as the main object of interest
in this section. 
\begin{defn}
[Indirect Cost Order] Say that $\mathcal{E}_{P}$ dominates $\mathcal{E}_{P}'$
in the indirect cost order, denoted $\mathcal{E}_{P}\geq_{\text{K}}\mathcal{E}_{P}'$,
if $\kappa^{C,\mu_{0}}(\mathcal{E}_{A},\mathcal{E}_{P})\leq\kappa^{C,\mu_{0}}(\mathcal{E}_{A},\mathcal{E}_{P}')$
for all $C\in\mathcal{C}_{1}$, $\mu_{0}\in\operatorname{int}\Delta\Omega$,
and $\mathcal{E}_{A}\in E$.\footnote{When $\mathcal{E}_{A}$ is not implementable, the indirect cost is
set to $+\infty$. If $\mathcal{E}_{A}$ is not implementable under
both $\mathcal{E}_{P}$ and $\mathcal{E}_{P}'$, I adopt the convention
that the inequality $\kappa^{C,\mu_{0}}(\mathcal{E}_{A},\mathcal{E}_{P})\leq\kappa^{C,\mu_{0}}(\mathcal{E}_{A},\mathcal{E}_{P}')$
still holds. } 
\end{defn}

\subsubsection{Sufficient Condition: The Conic Span Order}

I first provide sufficient conditions for $\geq_{\text{K}}$. While
it is obvious that the Blackwell order implies the indirect cost order,
it is not a necessary requirement.\footnote{If $\mathcal{E}_{P}\geq_{\text{B}}\mathcal{E}_{P}'$, the principal
can garble $\mathcal{E}_{P}$ herself and construct any contract under
$\mathcal{E}_{P}'$ as a contract under $\mathcal{E}_{P}$. As a result,
$\mathcal{E}_{P}\geq_{\text{B}}\mathcal{E}_{P}'$ implies $\mathcal{E}_{P}\geq_{\text{K}}\mathcal{E}_{P}'$. } In fact, there exists weaker sufficient conditions than Blackwell.
To state the sufficient condition, I introduce the definition of the
conic span of a matrix. Formally, for any matrix $A\in\mathbb{R}^{N\times M}$,
the conic span $\operatorname{Cone}(A):=\{Av:v\in\mathbb{R}_{+}^{M}\}$
is the set of all non-negative linear combinations of its columns.
The conic span order, first appeared in \citet{xia2025comparisons},
is defined as follows. 
\begin{defn}
[Conic Span Order] Say that $\mathcal{E}_{P}$ dominates $\mathcal{E}_{P}'$
in the conic span order, denoted $\mathcal{E}_{P}\geq_{\text{Cone}}\mathcal{E}_{P}'$,
if $\operatorname{Cone}(\mathcal{E}_{P})\supseteq\operatorname{Cone}(\mathcal{E}_{P}')$.
Say that $\mathcal{E}_{P}$ dominates $\mathcal{E}_{P}'$ in the strict
conic span order, denoted $\mathcal{E}_{P}>_{\text{Cone}}\mathcal{E}_{P}'$,
if $\operatorname{Cone}(\mathcal{E}_{P})\supsetneq\operatorname{Cone}(\mathcal{E}_{P}')$. 
\end{defn}
The conic span order can also be viewed as a relaxation of the Blackwell
order. Specifically, $\mathcal{E}_{P}\geq_{\text{Cone}}\mathcal{E}_{P}'$
is equivalent to $\mathcal{E}_{P}'=\mathcal{E}_{P}G$ for some $G\geq\boldsymbol{0}$.\footnote{The inequality $G\geq\boldsymbol{0}$ is entrywise, that is, it requires
every entry of the matrix $G$ to be weakly positive. } It should be noted that when $\mathcal{E}_{P}$ does not have full
rank, there may be more than one $G$ that satisfies $\mathcal{E}_{P}'=\mathcal{E}_{P}G$.
The definition only requires that there exists some $G\geq\boldsymbol{0}$
such that $\mathcal{E}_{P}'=\mathcal{E}_{P}G$. 

The following result says that the conic span order is sufficient
for indirect cost dominance. It applies to all experiments, regardless
of whether they have full rank. 
\begin{prop}
[Sufficient Condition for Indirect Cost Dominance]\label{prop:LL-cost-comparison}
$\mathcal{E}_{P}\geq_{\text{Cone}}\mathcal{E}_{P}'$ implies $\mathcal{E}_{P}\geq_{\text{K}}\mathcal{E}_{P}'$.
 
\end{prop}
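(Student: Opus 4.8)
The plan is to show that whenever $\mathcal{E}_P \geq_{\text{Cone}} \mathcal{E}_P'$, any contract under $\mathcal{E}_P'$ can be replicated by a contract under $\mathcal{E}_P$ at no greater cost, and moreover replication preserves implementability. By definition of the conic span order there is a matrix $G \geq \boldsymbol{0}$ with $\mathcal{E}_P' = \mathcal{E}_P G$. Fix a cost function $C \in \mathcal{C}_1$, an interior prior $\mu_0$, and an experiment $\mathcal{E}_A \in E$. The key observation is that if $T' \in \mathbb{R}^{M' \times K}$ is any contract under $\mathcal{E}_P'$, then setting $T := G T'$ gives a contract under $\mathcal{E}_P$ with the same induced state-dependent utilities: $\mathcal{E}_P T_k = \mathcal{E}_P G T'_k = \mathcal{E}_P' T'_k$ for every $k$. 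Since $G \geq \boldsymbol{0}$ and $T' \geq \boldsymbol{0}$ (limited liability), we have $T = G T' \geq \boldsymbol{0}$, so limited liability is preserved.

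First I would handle implementability. Since the agent's problem (\ref{eq:A-problem}) depends on the contract only through the state-dependent utility matrices $\mathcal{E}_P T_k$, and we have just shown $\mathcal{E}_P T_k = \mathcal{E}_P' T'_k$ for all $k$, the agent faces an identical optimization problem under $T$ and $\mathcal{E}_P$ as under $T'$ and $\mathcal{E}_P'$. Hence $T$ implements $\mathcal{E}_A$ under $\mathcal{E}_P$ if and only if $T'$ implements $\mathcal{E}_A$ under $\mathcal{E}_P'$ (equivalently, invoke Lemma \ref{lem:imp-matrix-eqn}: $\mathcal{E}_P' T' = \nabla + \lambda \boldsymbol{1}_{1 \times K}$ becomes $\mathcal{E}_P T = \nabla + \lambda \boldsymbol{1}_{1 \times K}$ with the same $\lambda$). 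In particular, if $\mathcal{E}_A$ is implementable under $\mathcal{E}_P'$ it is implementable under $\mathcal{E}_P$, which is needed to make the inequality on indirect costs meaningful. (If $\mathcal{E}_A$ is implementable under neither, the convention in the footnote applies and there is nothing to prove.)

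Next I would compare costs. Suppose $\mathcal{E}_A$ is implementable under $\mathcal{E}_P'$ and let $T'$ be a contract under $\mathcal{E}_P'$ that implements $\mathcal{E}_A$. Set $T = G T'$ as above. The principal's expected payment under $T$ is
\[
\mathbb{E}_{x_k \sim \langle \mathcal{E}_A \mid \mu_0 \rangle}\big[\, \mu_0^{x_k} \text{-weighted } x_k \cdot \mathcal{E}_P T_k \,\big]
= \sum_k \mathcal{E}_A(x_k)\, x_k \cdot (\mathcal{E}_P T_k),
\]
and since $\mathcal{E}_P T_k = \mathcal{E}_P' T'_k$, this equals exactly the principal's expected payment under $T'$. (Here one uses that the expected payment, when the agent reports $x_k$, integrated over $\mathcal{E}_P$'s realizations conditional on posterior $x_k$, is $x_k \cdot \mathcal{E}_P T_k$, as in (\ref{eq:P-problem}).) Taking the infimum over all implementing $T'$ under $\mathcal{E}_P'$ then gives $\kappa^{C,\mu_0}(\mathcal{E}_A, \mathcal{E}_P) \leq \kappa^{C,\mu_0}(\mathcal{E}_A, \mathcal{E}_P')$. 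Since $C$, $\mu_0$, and $\mathcal{E}_A$ were arbitrary, this yields $\mathcal{E}_P \geq_{\text{K}} \mathcal{E}_P'$.

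The argument is essentially a clean replication/garbling argument and I do not expect a serious obstacle; the one point to state carefully is that the replicating map $T' \mapsto G T'$ preserves all three relevant objects simultaneously — limited liability (needs $G \geq \boldsymbol{0}$), the agent's incentives (needs $\mathcal{E}_P G = \mathcal{E}_P'$, i.e. only the column-space/garbling identity, not non-negativity), and the principal's cost (follows from the incentives identity since cost depends on the contract only through $\mathcal{E}_P T_k$). The mild subtlety worth a sentence is that multiple $G$ may satisfy $\mathcal{E}_P' = \mathcal{E}_P G$ when $\mathcal{E}_P$ is rank-deficient; any single non-negative choice suffices, which is exactly what $\mathcal{E}_P \geq_{\text{Cone}} \mathcal{E}_P'$ provides.
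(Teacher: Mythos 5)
Your proof is correct, and it takes a genuinely different and more elementary route than the paper's. The paper proves Proposition \ref{prop:LL-cost-comparison} by first reformulating the cost-minimization problem in terms of the agent's interim utility $u_{1}$ (via Theorem \ref{thm:imp-contract-decomposition} and Lemma \ref{lem:P-problem-matrix}) and then showing that the feasible set of $u_{1}$'s weakly expands when passing from $\mathcal{E}_{P}'$ to $\mathcal{E}_{P}$, which requires an explicit construction of the side-bet term $\tilde{W}_{k}:=G\tilde{W}_{k}'+\left(GG^{-}-I\right)\mathcal{E}_{P}^{-}\left(u_{1}+\nabla_{k}-\nabla_{1}\right)$ and a verification involving pseudo-inverse identities. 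You instead replicate contracts directly: $T:=GT'$ preserves limited liability because $G\geq\boldsymbol{0}$, preserves the agent's entire objective (not just its optimizer) because the agent's problem (\ref{eq:A-problem}) depends on the contract only through $\mathcal{E}_{P}T_{k}=\mathcal{E}_{P}GT_{k}'=\mathcal{E}_{P}'T_{k}'$, and preserves the principal's expected cost because that cost is $\sum_{k}\mathcal{E}_{A}(x_{k})\,x_{k}\cdot\mathcal{E}_{P}T_{k}$, again a function of $\mathcal{E}_{P}T_{k}$ alone. Your observation correctly isolates that the Blackwell-garbling replication in the paper's footnote needs only $G\geq\boldsymbol{0}$ and $\mathcal{E}_{P}G=\mathcal{E}_{P}'$, not row-stochasticity of $G$, because one never needs to physically simulate $\mathcal{E}_{P}'$ --- only to match the induced state-dependent utilities. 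What the paper's longer route buys is the interim-utility reformulation (\ref{eq:EPIR-cost-min-u}) itself, which is reused to interpret where the cost savings come from and feeds into the binary-binary characterization; what your route buys is brevity, no pseudo-inverses, full generality with respect to rank deficiency for free, and the sharper conclusion that replication is exactly cost-\emph{preserving}, so the inequality follows purely from the principal having weakly more feasible contracts under $\mathcal{E}_{P}$.
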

The intuition of Proposition \ref{prop:LL-opt-contract} can be understood
in terms of the possible state dependent utility profiles the principal
can generate using $\mathcal{E}_{P}$ and $\mathcal{E}_{P}'$. The
conic span of an experiment $\mathcal{E}_{P}$ represents the set
of all possible state-dependent utility profiles the principal can
generate using non-negative payment rules. The requirement of non-negativity
is related to the agent's limited liability constraint. Using this
interpretation, Proposition \ref{prop:LL-opt-contract} should become
straightforward. If $\mathcal{E}_{P}$ dominates $\mathcal{E}_{P}'$
in the conic span order, the principal has access to a larger set
of state dependent utility profiles under $\mathcal{E}_{P}$. Therefore,
the implementation cost of any $\mathcal{E}_{A}$ should be lower
under $\mathcal{E}_{P}$. 

One can also understand Proposition \ref{prop:LL-opt-contract} using
the payment rules directly. Roughly speaking, when $\mathcal{E}_{P}\geq_{\text{Cone}}\mathcal{E}_{P}'$,
the principal can use a less dispersed payment rule under $\mathcal{E}_{P}$,
resulting a lower cost. To see this, consider the simplest case where
both $\mathcal{E}_{P}$ and $\mathcal{E}_{P}'$ has full row rank.
We can write $\mathcal{E}_{P}'=\mathcal{E}_{P}G$ for some $G$.\footnote{Such a $G$ exists because $\mathcal{E}_{P}$ has full row rank.}
Suppose the principal wants to implement some $\mathcal{E}_{A}$ with
a marginal cost matrix $\nabla$. Applying Proposition \ref{prop:LL-opt-contract},
the difference in the implementation costs is given by 
\begin{align}
\kappa^{C,\mu_{0}}(\mathcal{E}_{A},\mathcal{E}_{P}')-\kappa^{C,\mu_{0}}(\mathcal{E}_{A},\mathcal{E}_{P}) & =\mu_{0}\cdot\mathcal{E}_{P}\left[\operatorname{rowmin}\left(GV\right)-G\operatorname{rowmin}\left(V\right)\right]\geq0,\label{eq:EPIR-cost-inequality}
\end{align}
where $V=G^{-}\mathcal{E}_{P}^{-}\nabla$.\footnote{$GV=GG^{-}\mathcal{E}_{P}^{-}\nabla=\mathcal{E}_{P}^{-}\nabla$ because
$GG^{-}=I$ is the identity matrix. This is because $G$ must also
have full row rank due to the full row rank assumptions on $\mathcal{E}_{P}$
and $\mathcal{E}_{P}'$.} This inequality holds because the term in the square bracket is always
positive. because the minimum of convex combinations is less extreme
than convex combinations of the minimum.\footnote{Since $G\geq0$, we can normalize the above inequality by dividing
every row by the corresponding row sum of $G$. }
\begin{equation}
\min GV-G\min V\geq\bm{0},\label{eq:EPIR-interim-cost-inequality}
\end{equation}

While conic span dominance implies indirect cost dominance, the converse
is not true. Example \ref{exa:EPIR-not-Blackwell} is a counterexample
where $\mathcal{E}_{1}$ and $\mathcal{E}_{2}$ are not ranked by
the conic span or the Blackwell orders, but $\mathcal{E}_{1}\geq_{\text{K}}\mathcal{E}_{2}$.

\subsubsection{Characterization in the Binary-Binary Case}

I can fully characterize the indirect cost order $\geq_{\text{K}}$
in the binary-binary case. From now on, I assume the state space is
binary, and use $\mathcal{E}_{P},\mathcal{E}_{P}'\in E^{2}$ to denote
experiments with binary realizations. 

To state the result, I first define the following likelihood ratios.
Take any $\mathcal{E}_{P},\mathcal{E}_{P}'\in E^{2}$. $\mathcal{E}_{P}$
has two realizations $y_{1}$ and $y_{2}$, each corresponding to
a likelihood ratio given by 
\[
\ell_{1}:=\dfrac{\mathcal{E}_{P}(y_{1}\mid\omega_{2})}{\mathcal{E}_{P}(y_{1}\mid\omega_{1})},\;\ell_{2}:=\dfrac{\mathcal{E}_{P}(y_{2}\mid\omega_{2})}{\mathcal{E}_{P}(y_{2}\mid\omega_{1})}.
\]
The reciprocals $1/\ell_{1}$ and $1/\ell_{2}$ are also likelihood
ratios but with the conditional probabilities of $\omega_{1}$ in
the numerator. Similarly, let $\ell_{1}'$ and $\ell_{2}'$ be the
corresponding likelihood ratios for $\mathcal{E}_{P}'$. Without loss,
assume $\ell_{1}\leq1\leq\ell_{2}$ and $\ell_{1}'\leq1\leq\ell_{2}'$.\footnote{The two likelihood ratios cannot be both less than or greater than
one because the probabilities must add up.} The following proposition characterizes $\geq_{\text{K}}$ in the
binary-binary case using the differences in the likelihood ratios. 
\begin{prop}
[Binary-Binary Indirect Cost]\label{prop:binary-LL-characterization}
Suppose $N=2$, and $\mathcal{E}_{P},\mathcal{E}_{P}'\in E^{2}$ are
binary experiments. $\mathcal{E}_{P}\geq_{\text{K}}\mathcal{E}_{P}'$
if and only if 
\begin{align*}
\ell_{2}-\ell_{1} & \geq\ell_{2}'-\ell_{1}',\\
\dfrac{1}{\ell_{1}}-\dfrac{1}{\ell_{2}} & \geq\dfrac{1}{\ell_{1}'}-\dfrac{1}{\ell_{2}'}.
\end{align*}
\end{prop}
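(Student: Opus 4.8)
The plan is to reduce the problem to the binary-binary structure developed in the example and in Proposition~\ref{prop:LL-opt-contract}, and then to verify the two inequalities term by term. First I would set up coordinates: with $N=2$ and $\mathcal{E}_P,\mathcal{E}_P'\in E^2$, fix an implementable $\mathcal{E}_A\in E$ with posteriors $\{x_k\}$ and marginal-cost matrix $\nabla=[\nabla_1\cdots\nabla_K]$, where each $\nabla_k\in\mathbb{R}^2$. By Corollary~\ref{cor:binary-state}, every informative binary $\mathcal{E}_P$ has full row rank, so Proposition~\ref{prop:LL-opt-contract} applies and gives the closed form
\[
\kappa^{C,\mu_0}(\mathcal{E}_A,\mathcal{E}_P)=C(\mathcal{E}_A)-\mu_0\cdot\mathcal{E}_P\operatorname{rowmin}\!\left(\mathcal{E}_P^{-}\nabla\right).
\]
Since the first-best term $C(\mathcal{E}_A)$ is common to both experiments, the comparison $\kappa^{C,\mu_0}(\mathcal{E}_A,\mathcal{E}_P)\le\kappa^{C,\mu_0}(\mathcal{E}_A,\mathcal{E}_P')$ is equivalent to
\[
\mu_0\cdot\mathcal{E}_P\operatorname{rowmin}\!\left(\mathcal{E}_P^{-}\nabla\right)\ \ge\ \mu_0\cdot\mathcal{E}_P'\operatorname{rowmin}\!\left(\mathcal{E}_P'^{-}\nabla\right).
\]
The key simplification is that, in the binary case, $\mathcal{E}_P^{-}\nabla$ is a $2\times K$ matrix and $\mathcal{E}_P\operatorname{rowmin}(\mathcal{E}_P^{-}\nabla)$ can be rewritten purely in terms of the likelihood ratios $\ell_1,\ell_2$ and the differences $\Delta u$-style quantities from the example. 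Concretely, I would show that $\mu_0\cdot\mathcal{E}_P\operatorname{rowmin}(\mathcal{E}_P^{-}\nabla)$ decomposes as a state-independent "rent floor" (which cancels, or is handled by the $Z$ term) plus a sum over pairs $(k,k')$ of contributions of the form $a(\ell_1,\ell_2)\cdot(\text{positive part of }(\nabla_k-\nabla_{k'}))$, where the coefficients $a$ are exactly the reciprocal-gap quantities $1/\ell_1-1/\ell_2$ and $\ell_2-\ell_1$ appearing in the statement (these are, up to normalization, the "rents per unit of $\Delta u_1$" and "per unit of $\Delta u_2$" computed via similar triangles in Figure~\ref{fig:opt-contracts}).

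\textbf{Sufficiency.} Assuming $\ell_2-\ell_1\ge\ell_2'-\ell_1'$ and $1/\ell_1-1/\ell_2\ge 1/\ell_1'-1/\ell_2'$, I would argue as follows. Decompose each $\nabla_k-\nabla_{k'}\in\mathbb{R}^2$; since $N=2$, such a vector is determined by a single scalar (its component along the direction orthogonal to $\mu_0$, or equivalently the "$(\Delta u_1,\Delta u_2)$-type" coordinate). The optimal rent under $\mathcal{E}_P$ is then a nonnegative combination of terms, each of which is $\bigl(\text{something depending on }\nabla\bigr)$ times either $\ell_2-\ell_1$ or $1/\ell_1-1/\ell_2$, with nonnegative multipliers that do \emph{not} depend on $\mathcal{E}_P$ (they depend only on $\mu_0$ and $\nabla$). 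Larger values of $\ell_2-\ell_1$ and $1/\ell_1-1/\ell_2$ mean — counterintuitively but correctly, given the sign conventions — a \emph{larger} $\mathcal{E}_P\operatorname{rowmin}(\mathcal{E}_P^{-}\nabla)$, hence a \emph{smaller} rent $-\mu_0\cdot\mathcal{E}_P\operatorname{rowmin}(\mathcal{E}_P^{-}\nabla)$, hence a smaller $\kappa$. Summing over all pairs $(k,k')$ that bind in the $\operatorname{rowmin}$ gives the inequality for every $C\in\mathcal{C}_1$, every $\mu_0$, and every $\mathcal{E}_A\in E$. The only care needed is to confirm that the identity of the binding pairs can be taken to be the same under $\mathcal{E}_P$ and $\mathcal{E}_P'$, or else to use the variational characterization of $\operatorname{rowmin}$ (it is the lower envelope, so any feasible selection of columns gives an upper bound) to bound $\mathcal{E}_P\operatorname{rowmin}$ from below using the $\mathcal{E}_P'$-optimal selection; this one-sided argument is exactly what is needed.

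\textbf{Necessity.} For the converse I would construct, for each of the two inequalities, a single $\mathcal{E}_A$ (or a one-parameter family) that isolates it. Taking $\mathcal{E}_A$ binary with posteriors $x_1,x_2$ chosen so that $\nabla_1-\nabla_2$ points purely in the "$\Delta u_1$" direction forces the rent to be a positive multiple of $1/\ell_1-1/\ell_2$ (mirroring Panel~A of Figure~\ref{fig:opt-contracts}); if $\mathcal{E}_P\ge_{\mathrm K}\mathcal{E}_P'$ then this multiple must be weakly smaller under $\mathcal{E}_P$, giving $1/\ell_1-1/\ell_2\ge 1/\ell_1'-1/\ell_2'$. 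The symmetric choice isolating the "$\Delta u_2$" direction yields $\ell_2-\ell_1\ge\ell_2'-\ell_1'$. To make such $\mathcal{E}_A$'s genuinely arise from an admissible cost function, I would exhibit a $c\in\mathcal{C}_1$ (e.g., a smoothed quadratic or an entropy-type cost) whose gradient at a chosen pair of interior posteriors realizes the desired $\nabla_1-\nabla_2$; Assumption~\ref{assu:inf-slope-post-sep} is harmless here since the posteriors are interior.

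\textbf{Main obstacle.} The technical crux is the explicit reduction of $\mu_0\cdot\mathcal{E}_P\operatorname{rowmin}(\mathcal{E}_P^{-}\nabla)$ to a nonnegative combination of $(\ell_2-\ell_1)$- and $(1/\ell_1-1/\ell_2)$-weighted terms with $\mathcal{E}_P$-independent multipliers — in other words, showing that in the binary case the "rent geometry" of Figure~\ref{fig:opt-contracts} holds for arbitrary (not necessarily binary) $\mathcal{E}_A$ and that these two gaps are the \emph{only} features of $\mathcal{E}_P$ that enter. Once that algebraic normal form is established, both directions are short; getting the bookkeeping of which report gets the zero payment (the $\operatorname{rowmin}$ selection) right across all $K$ columns is the part most likely to require care.
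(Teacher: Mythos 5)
Your overall strategy matches the paper's: invoke Corollary \ref{cor:binary-state} and Proposition \ref{prop:LL-opt-contract} to reduce the comparison to the vector inequality $\mathcal{E}_{P}\operatorname{rowmin}(\mathcal{E}_{P}^{-}\nabla)\geq\mathcal{E}_{P}'\operatorname{rowmin}(\mathcal{E}_{P}'^{-}\nabla)$ (the quantifier over all $\mu_{0}$ turns the scalar comparison into a componentwise one), then specialize to the two elementary incentive directions $\nabla=\bigl[\begin{smallmatrix}1&0\\0&0\end{smallmatrix}\bigr]$ and $\bigl[\begin{smallmatrix}0&0\\0&1\end{smallmatrix}\bigr]$ and read off the two likelihood-ratio inequalities; your necessity construction (isolating each inequality with a binary $\mathcal{E}_{A}$, realizable via Lemma \ref{lem:nabla-no-dominance}) is also the paper's. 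However, there is a genuine gap exactly where you flag the ``main obstacle'': the reduction from an arbitrary $K$-realization $\mathcal{E}_{A}$ to the binary case. Your proposed route --- writing $\mu_{0}\cdot\mathcal{E}_{P}\operatorname{rowmin}(\mathcal{E}_{P}^{-}\nabla)$ as a nonnegative combination over pairs $(k,k')$ with multipliers that do not depend on $\mathcal{E}_{P}$ --- is not obviously true as stated, because $\operatorname{rowmin}$ is a minimum, not a linear functional, and the identity of the binding column in each row depends on $\mathcal{E}_{P}$. Your fallback (``use the $\mathcal{E}_{P}'$-optimal selection to bound $\mathcal{E}_{P}\operatorname{rowmin}$ from below'') gives an inequality in the wrong direction for a clean term-by-term comparison, since selecting a suboptimal column \emph{overstates} $\operatorname{rowmin}$ (which enters the cost with a minus sign), so it does not by itself close the argument.

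The paper resolves this with Lemma \ref{lem:V-col-reduction}, which argues in the contrapositive: if some $\nabla$ with $K>2$ columns violated the cost comparison, one could iteratively delete a column that does not affect $\operatorname{rowmin}(GV)$ --- such a column exists whenever there are more columns than rows --- and deleting it can only weakly raise $\operatorname{rowmin}(V)$, preserving the violation. Iterating produces a violating $\nabla$ with at most two columns, so checking binary $\mathcal{E}_{A}$'s suffices. This column-deletion argument is the missing ingredient in your proposal; once you have it, the rest of your computation (linearity over the two basis directions, the similar-triangles identities $r_{2}/r_{1}=\ell_{2}$, $r_{2}/(\Delta u_{1}+r_{1})=\ell_{1}$, and their mirror images) goes through as you describe and recovers the two stated inequalities, with the third inequality $\ell_{1}/\ell_{2}\leq\ell_{1}'/\ell_{2}'$ implied by them.
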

\begin{figure}[th]
\begin{centering}
\caption{Likelihood Ratio Differences \label{fig:LR-diff}}
\par\end{centering}
\begin{centering}
\medskip{}
\par\end{centering}
\begin{centering}
\begin{minipage}[t]{0.4\columnwidth}%
\begin{center}
\includegraphics[width=1\columnwidth]{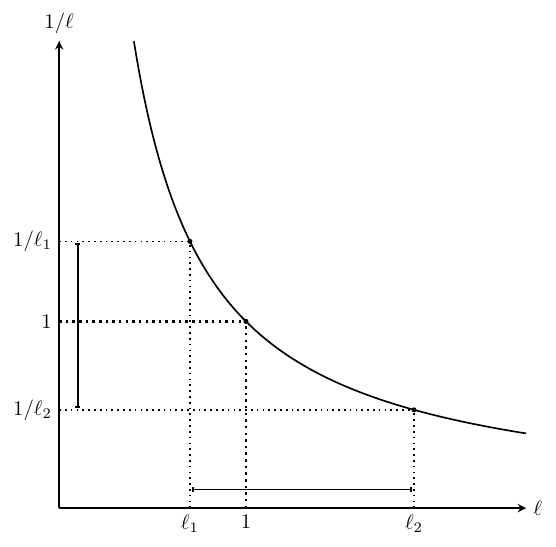}
\par\end{center}%
\end{minipage}
\par\end{centering}
\centering{}%
\begin{minipage}[t]{0.9\columnwidth}%
\begin{singlespace}
{\scriptsize Notes: This figure depicts what matters to compare experiments
in the binary-binary case according to Proposition \ref{prop:binary-LL-characterization}.
The horizontal axis represents the likelihood ratio $\ell$ of a realization,
and the vertical axis represents the reciprocal likelihood ratio $1/\ell$. }{\scriptsize\par}
\end{singlespace}

\end{minipage}
\end{figure}

Proposition \ref{prop:binary-LL-characterization} says that an experiment
dominates another if and only if the differences in the likelihood
ratios is larger under the first experiment. Figure \ref{fig:LR-diff}
illustrates the condition in Proposition \ref{prop:binary-LL-characterization}.
Dominance requires the difference in the likelihood ratios and their
reciprocals, which are also likelihood ratios, to both be larger at
the same time.

To understand the intuition, the differences in likelihood ratios
$\ell_{2}-\ell_{1}$ and $1/\ell_{1}-1/\ell_{2}$ are the marginal
products of rents. They specify for every dollar of rents paid to
the agent, how much incentives are provided. This comes from a geometric
argument that extends the intuition in Example \ref{exa:EPIR-not-Blackwell}.
In the binary-binary case, it is without loss to focus on implementing
$\mathcal{E}_{A}$'s with two realizations. In this case, the principal
should choose a pay-if-correct contract $T\in\mathbb{R}_{+}^{2\times2}$
with
\[
T=\begin{array}{c}
y_{1}\\
y_{2}
\end{array}\overset{\begin{array}{cc}
x_{1} & x_{2}\end{array}}{\begin{bmatrix}t_{1} & 0\\
0 & t_{2}
\end{bmatrix}}.
\]
That is, the principal pays the agent only if the agent's report agrees
with the realization of $\mathcal{E}_{P}$. What matters for the agent's
learning incentives is his expected payoff as a function of his posterior
belief about the state. As shown in Figure \ref{fig:value-func},
the agent's incentive is determined by $\Delta u_{n}=u(\omega_{n},x_{1})-u(\omega_{n},x_{2})$
for $n=1,2$. On the other hand, $r_{1}$ and $r_{2}$ are the agent's
rent when the state is $\omega_{1}$ and $\omega_{2}$. Call a pair
$(\Delta u_{1},\Delta u_{2})$ an incentive profile as it determines
the agent's learning incentives.

To show that $\mathcal{E}_{1}$ provides cheaper incentives than $\mathcal{E}_{2}$
in all circumstances, we need to verify that, for any $(\Delta u_{1},\Delta u_{2})$,
the rents paid to the agent are lower under $\mathcal{E}_{1}$ than
$\mathcal{E}_{2}$. This does not involve the agent’s information
cost function because the principal's information can only affect
the cost to provide incentives. In the binary-binary setting, incentives
are linear in the contracts. This is because we always use pay-if-correct
contracts and the limited liability constraints always bind when the
agent's report and the principal's experiment realization mismatch.\footnote{This is not true in general because the limited liability constraints
may bind at different places if we do not have the binary-binary structure.} This means we can analyze the simpler incentive profiles $(\Delta u_{1},0)$
and $(0,\Delta u_{2})$ separately. 

I now illustrate how rents are related to incentives. Consider the
contract to provide incentives $(\Delta u_{1},0)$. As illustrated
in Figure \ref{fig:LR-characterization} Panel A, we can see that
the rents and the incentive $\Delta u_{1}$ are connected via the
likelihood ratios by 
\[
\dfrac{r_{2}}{r_{1}}=\ell_{2},\;\dfrac{r_{2}}{\Delta u_{1}+r_{1}}=\ell_{1},
\]
which means to generate a unit of $\Delta u_{1}$, $\frac{\ell_{1}}{\ell_{2}-\ell_{1}}$
units of $r_{1}$ and $\frac{\ell_{2}\ell_{1}}{\ell_{2}-\ell_{1}}$
units of $r_{2}$ are needed. In other words, for every unit of rent
$r_{2}$ spent, $1/\ell_{1}-1/\ell_{2}$ units of $\Delta u_{1}$
can be provided.\footnote{To generate $\Delta u_{1}$, for every unit of $r_{2}$, we have to
spend $\ell_{1}$ units of $r_{1}$, as illustrated in Figure \ref{fig:LR-characterization}
Panel B. Therefore, it suffices to keep track of the marginal product
of $r_{2}$ when producing $\Delta u_{1}$.} 

On the other hand, to provide incentives $(0,\Delta u_{2})$, from
Figure \ref{fig:LR-characterization} Panel B, we have 
\[
\dfrac{r_{2}}{r_{1}}=\ell_{1},\;\dfrac{\Delta u_{2}+r_{2}}{r_{1}}=\ell_{2}.
\]
which means, for every unit of $\Delta u_{2}$, we need $\frac{1}{\ell_{2}-\ell_{1}}$
units of $r_{1}$ and $\frac{\ell_{1}}{\ell_{2}-\ell_{1}}$ units
of $r_{2}$. In other words, for every unit of $r_{1}$ spent, $\ell_{2}-\ell_{1}$
units of $\Delta u_{2}$ can be provided.\footnote{To generate $\Delta u_{2}$, for every unit of $r_{1}$, we have to
spend $\ell_{2}$ units of $r_{2}$, as illustrated in Figure \ref{fig:LR-characterization}
Panel A. Therefore, it suffices to keep track of the marginal product
of $r_{1}$ when producing $\Delta u_{2}$.}

We can now compare experiments $\mathcal{E}_{P}$ and $\mathcal{E}_{P}'$.
Since the principal may provide any incentive profile $(\Delta u_{1},\Delta u_{2})$,
for $\mathcal{E}_{P}$ to dominate $\mathcal{E}_{P}'$, I need the
marginal product of rents to be higher, which is the condition in
Proposition \ref{prop:binary-LL-characterization}.

\begin{figure}[th]
\begin{centering}
\caption{Marginal Product of Rents \label{fig:LR-characterization}}
\par\end{centering}
\begin{centering}
\medskip{}
\par\end{centering}
\begin{centering}
\begin{minipage}[t]{0.48\columnwidth}%
\begin{center}
{\small Panel A: Provide $(\Delta u_{1},0)$}{\small\par}
\par\end{center}
\begin{center}
\includegraphics[width=1\columnwidth]{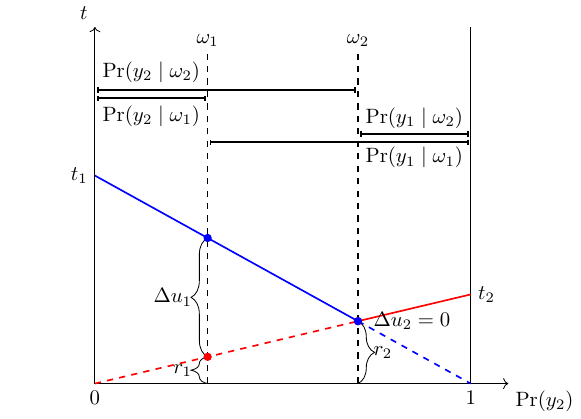}
\par\end{center}%
\end{minipage}\quad{}%
\begin{minipage}[t]{0.48\columnwidth}%
\begin{center}
{\small Panel B: Provide $(0,\Delta u_{2})$}{\small\par}
\par\end{center}
\begin{center}
\includegraphics[width=1\columnwidth]{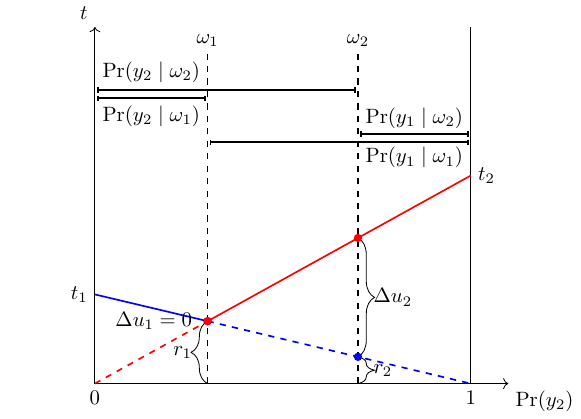}
\par\end{center}%
\end{minipage}
\par\end{centering}
\centering{}%
\begin{minipage}[t]{0.9\columnwidth}%
\begin{singlespace}
{\scriptsize Notes: This figure illustrates the relation between the
rents $r_{1}$ and $r_{2}$ and the incentives $\Delta u_{1}$ and
$\Delta u_{2}$. Panel A plots the contract that provides incentive
profile $\left(\Delta u_{1},0\right)$, and Panel B plots the contract
that provides incentive profile $\left(0,\Delta u_{2}\right)$.}{\scriptsize\par}
\end{singlespace}

\end{minipage}
\end{figure}

\section{Conclusion\label{sec:conclusion}}

I conclude by discussing paths for future research. First, one can
consider the profit maximization phase now that we understand the
cost minimization phase of moral hazard models. If the principal can
endogenously choose what information the agent should acquire or what
outcome distribution the agent should produce, how to maximize the
principal's profits? Second, one can also endogenize the principal's
choice of $\mathcal{E}_{P}$ to study the trade-off between the information
costs and the agency costs. In this case, one can study how much the
principal should learn to maximize her profits. Third, my framework
is useful to also study the principal's credibility. I assume that
$\mathcal{E}_{P}$ is publicly known with its realization contractible
throughout the paper, but the principal may have an incentive to misreport
$\mathcal{E}_{P}$'s realization. Lastly, my framework can also be
extended to study the case with multiple agents. 
\pagebreak{}

\bibliographystyle{format/aea/aea}
\bibliography{paper}

\pagebreak{}

\appendix
\setcounter{page}{1}
\section{Omitted Proofs \label{appsec:proof-appendix}}

I first present a theorem due to \citet{baksalary1979matrix} on matrix
equations that proves to be useful later in this appendix. 
\begin{thm}
[Baksalary and Kala (1979)]\label{thm:AX-YB=00003DC} Suppose $A,B,C$
are constant matrices and $X,Y$ are unknowns. All matrices have conformable
shapes. Matrix equation $AX-YB=C$ has a solution if and only if $(I-AA^{-})C(I-B^{-}B)=\boldsymbol{0}$
where $A^{-},B^{-}$are any generalized inverse of $A,B$, respectively.
Moreover, the general solution, if one exists, is given by 
\begin{align*}
X & =A^{-}C+A^{-}ZB+(I-A^{-}A)W,\\
Y & =-(I-AA^{-})CB^{-}+Z-(I-AA^{-})ZBB^{-},
\end{align*}
where $W$ and $Z$ are arbitrary matrices of conformable shapes,
and $I$ is the identity matrix of conformable shapes.
\end{thm}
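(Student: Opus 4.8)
The plan is to treat this as a pure linear-algebra statement and prove it directly from the defining identity of a generalized inverse, $MM^-M=M$, together with its three immediate consequences: $(I-AA^-)A=\boldsymbol{0}$, $B(I-B^-B)=\boldsymbol{0}$, and the idempotence of $AA^-$, $A^-A$, $BB^-$, $B^-B$. I would record these facts at the outset, since every subsequent manipulation reduces to them.

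For \emph{necessity}, suppose $(X,Y)$ solves $AX-YB=C$. I would left-multiply by $(I-AA^-)$ and right-multiply by $(I-B^-B)$. The term $(I-AA^-)AX(I-B^-B)$ vanishes because $(I-AA^-)A=\boldsymbol{0}$, and the term $(I-AA^-)YB(I-B^-B)$ vanishes because $B(I-B^-B)=\boldsymbol{0}$; the right-hand side becomes $(I-AA^-)C(I-B^-B)$, which is therefore $\boldsymbol{0}$. This is exactly the claimed condition.

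For \emph{sufficiency} together with the fact that the displayed pairs are solutions, I would substitute the displayed $(X,Y)$ into $AX-YB$ for arbitrary $W,Z$. Using $A(I-A^-A)=\boldsymbol{0}$ and $BB^-B=B$ collapses the $Z$-contributions and kills the $W$-term, leaving $AX-YB=AA^-C+(I-AA^-)CB^-B$. The residual $C-AX+YB$ then simplifies to $(I-AA^-)C(I-B^-B)$, which is $\boldsymbol{0}$ precisely under the hypothesis; hence a solution exists and every member of the displayed family is genuinely a solution.

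The main work — and the step I expect to be the real obstacle — is showing that the displayed family \emph{exhausts} all solutions. Here I would note that the difference of any two solutions solves the homogeneous equation $AX-YB=\boldsymbol{0}$, so it suffices to subtract the particular solution obtained by setting $W=Z=\boldsymbol{0}$, namely $(A^-C,\,-(I-AA^-)CB^-)$, and then show every homogeneous solution $(X_0,Y_0)$ with $AX_0=Y_0B=:S$ arises from some $(W,Z)$. The key observation is that $S$ lies in $\operatorname{Col}A$, so $(I-AA^-)S=\boldsymbol{0}$, whence $(I-AA^-)Y_0BB^-=\boldsymbol{0}$; choosing $Z=Y_0$ then reproduces $Y_0$ from the $Y$-formula, and, since $A^-Y_0B=A^-AX_0$, reduces the $X$-formula to $(I-A^-A)X_0=(I-A^-A)W$, which $W=X_0$ satisfies. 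Adding back the particular solution recovers the stated parameterization. Throughout, the delicate points are tracking which annihilator acts on which side and verifying $A^-Y_0B=A^-AX_0$ so that the $X$-equation decouples cleanly.
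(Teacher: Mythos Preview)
Your proof is correct, but there is nothing in the paper to compare it against: the paper does not prove this theorem. It is stated in Appendix~\ref{appsec:proof-appendix} purely as a cited black-box result from \citet{baksalary1979matrix}, and is then applied to establish Theorem~\ref{thm:imp} and Theorem~\ref{thm:imp-contract-decomposition}. So you have supplied a self-contained argument where the paper simply defers to the literature.

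On the substance, your argument is sound. Necessity follows exactly as you describe. For sufficiency, your substitution correctly yields $AX-YB=AA^{-}C+(I-AA^{-})CB^{-}B$, and the residual $C-AX+YB=(I-AA^{-})C(I-B^{-}B)$ vanishes under the hypothesis. The exhaustiveness step---which you rightly flag as the delicate one---is also fine: setting $Z=Y_0$ works because $(I-AA^{-})Y_0B=(I-AA^{-})AX_0=\boldsymbol{0}$, and then $A^{-}Y_0B=A^{-}AX_0$ reduces the $X$-equation to $(I-A^{-}A)X_0=(I-A^{-}A)W$, solved by $W=X_0$. All manipulations use only $MM^{-}M=M$, so the argument is valid for arbitrary generalized inverses as the statement requires. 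One small note: you list idempotence of $AA^{-}$, $A^{-}A$, $BB^{-}$, $B^{-}B$ among the preliminary facts, but your actual argument never invokes it; the annihilation identities $(I-AA^{-})A=\boldsymbol{0}$ and $B(I-B^{-}B)=\boldsymbol{0}$ suffice throughout.
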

\begin{rem}
In the main text, I use $\mathcal{E}_{P}^{-}$ to represent its Moore-Penrose
pseudo-inverse because it has nice interpretation of the projection
operator, while in Theorem \ref{thm:AX-YB=00003DC}, $A^{-}$ and
$B^{-}$ can be any generalized inverse. 
\end{rem}
We are now ready for the proofs.

\subsection{Omitted Proofs in Section \ref{sec:implementability}}
\begin{proof}
[Proof of Lemma \ref{lem:imp-matrix-eqn}] Recall the agent's problem
is to choose a posterior distribution to maximize a posterior separable
objective (\ref{eq:A-problem-post-sep}) subject to the Bayes plausibility
constraint (\ref{eq:Bayes-plausibility}). Let $\lambda\in\mathbb{R}^{N}$
be the multiplier on (\ref{eq:Bayes-plausibility}), and the problem
becomes
\[
\max_{\mathcal{E}\in E^{K}}\:\mathbb{E}_{\mu_{k}\sim\mathcal{E}}\left[\mu_{k}\cdot\mathcal{E}_{P}T_{k}-c(\mu_{k})+\lambda\cdot\left(\mu_{0}-\mu_{k}\right)\right].
\]
If some $\mathcal{E}^{*}$ is optimal, its induced posterior belief
$\mu_{k}$ at which the agent reports $x_{k}$ must also be optimal
in the following sense, 
\[
\mu_{k}\in\underset{\mu\in\Delta\Omega}{\arg\max}\:\mu\cdot\mathcal{E}_{P}T_{k}-c(\mu)+\lambda\cdot\left(\mu_{0}-\mu\right).
\]
The first order condition is 
\[
\mathcal{E}_{P}T_{k}=\nabla_{k}+\lambda,\forall1\leq k\leq K,
\]
which becomes (\ref{eq:imp-matrix-eqn}) when stacked together. Assumptions
\ref{assu:tech-post-sep} and \ref{assu:smooth-post-sep} imply that
the derivatives are well defined, and that the first order condition
also suffices for optimality. Assumption \ref{assu:inf-slope-post-sep}
implies that this first order condition must be satisfied as equality.\footnote{Without assumption (\ref{assu:inf-slope-post-sep}), the agent may
choose to have a non-fully-supported belief with $\mu_{k}^{n}=0$
for some report $x_{k}$ and state $\omega_{n}$. When that happens,
there are additional multipliers on constraint $\mu_{k}^{n}\geq0$
and the above first order condition becomes inequality. Assumption
(\ref{assu:inf-slope-post-sep}) rules out these corner cases as it
is infinitely costly to rule out a state.} 

To finish the proof, suppose a contract $T$ under $\mathcal{E}_{P}$
implements $\mathcal{E}_{A}$, it must satisfy the first order condition
(\ref{eq:imp-matrix-eqn}) for some $\lambda\in\mathbb{R}^{N}$. Conversely,
if (\ref{eq:imp-matrix-eqn}) holds for some $T$ and $\lambda$,
by the sufficiency of the first order condition, $T$ implements $\mathcal{E}_{A}$.
\end{proof}
\begin{proof}
[Proof of Theorem \ref{thm:imp}] By Lemma \ref{lem:imp-matrix-eqn},
the implementability problem is converted into the solvability problem
of (\ref{eq:imp-matrix-eqn}). Applying Theorem \ref{thm:AX-YB=00003DC}
due to \citet{baksalary1979matrix}, the solvability of (\ref{eq:imp-matrix-eqn})
is equivalent to the solvability of the following equation, 
\begin{equation}
\left(I_{N}-\mathcal{E}_{P}\mathcal{E}_{P}^{-}\right)\nabla\left(I_{K}-\dfrac{1}{K}\bm{1}_{K\times K}\right)=\bm{0}_{N\times K},\label{eq:solvability}
\end{equation}
where $I_{N},I_{K}$ are identity matrices of sizes $N$ and $K$,
and $\mathcal{E}_{P}^{-}$ is the Moore-Penrose pseudo-inverse of
matrix $\mathcal{E}_{P}$. 

The last step is to obtain condition (\ref{eq:imp-cond}) by observing
that $\mathcal{E}_{P}\mathcal{E}_{P}^{-}$ is the orthogonal projection
matrix onto the column space of $\mathcal{E}_{P}$, and the second
parenthesis in (\ref{eq:solvability}) represents taking the column-wise
differences of $\nabla$. I provide a detailed argument as follows.
First, apply a invertible linear transformation on (\ref{eq:solvability}).
Multiplying (\ref{eq:solvability}) from the right by any invertible
matrix yields an equivalent matrix equation. I can then multiply (\ref{eq:solvability})
from the right by the following $K\times K$ matrix
\[
Q=\begin{bmatrix}1 & 0 & \dots & 0 & \frac{1}{K}\\
0 & 1 & \dots & 0 & \frac{1}{K}\\
\vdots & \vdots & \ddots & \vdots & \vdots\\
0 & 0 & \dots & 1 & \frac{1}{K}\\
-1 & -1 & \dots & -1 & \frac{1}{K}
\end{bmatrix},
\]
and (\ref{eq:solvability}) becomes 
\begin{equation}
\left(I_{N}-\mathcal{E}_{P}\mathcal{E}_{P}^{-}\right)\nabla Q'=\bm{0}_{N\times K}\label{eq:solvability-transformed}
\end{equation}
where $Q'$ is a $K\times K$ matrix given by 
\[
Q'=\begin{bmatrix}1 & 0 & \dots & 0 & 0\\
0 & 1 & \dots & 0 & 0\\
\vdots & \vdots & \ddots & \vdots & \vdots\\
0 & 0 & \dots & 1 & 0\\
-1 & -1 & \dots & -1 & 0
\end{bmatrix}.
\]
The interpretation of multiplying $Q'$ to the right of $\nabla$
is to take the difference between any column of $\nabla$ and its
last column. 

Next, by the property of pseudo-inverses, $\mathcal{E}_{P}\mathcal{E}_{P}^{-}$
is the orthogonal projection matrix onto the column space of $\mathcal{E}_{P}$.
Therefore, (\ref{eq:solvability-transformed}) says that columns of
$\nabla Q'$ are in the column space of $\mathcal{E}_{P}$. That is,
the implementability of $\mathcal{E}_{A}$ is equivalent to 
\[
\nabla_{k}-\nabla_{K}\in\operatorname{Col}\mathcal{E}_{P},\forall1\leq k\leq K,
\]
where $\nabla_{K}$ is the last column of $\nabla$. By the properties
of linear subspaces, the above condition is equivalent to Condition
(\ref{eq:imp-cond}) in Theorem \ref{thm:imp}, thus completing the
proof. 
\end{proof}
\begin{proof}
[Proof of Corollary \ref{cor:full-imp}] Start from the agent's problem.
Without Assumption \ref{assu:inf-slope-post-sep}, we need to explicitly
take care of the constraint that the probabilities chosen in the agent's
problem (\ref{eq:A-problem-post-sep}) must be positive. To this end,
we need to apply Theorem \ref{thm:imp-corner} in Appendix \ref{appsec:imp-corner-solutions}.
It is clear from Condition (\ref{eq:imp-cond-corner}) that whenever
$\mathcal{E}_{P}$ has full row rank, any feasible $\mathcal{E}_{A}$
is implementable. 

\end{proof}
\begin{proof}
[Proof of Corollary \ref{cor:binary-state}] When $N=2$, $\mathcal{E}_{P}$
has only two rows. If $\mathcal{E}_{P}$ does not have full row rank,
linear dependency implies the two rows of $\mathcal{E}_{P}$ must
be multiples of each other. Row sums being one implies they must be
the same. Such $\mathcal{E}_{P}$ cannot be informative. Therefore,
any informative $\mathcal{E}_{P}$ has full row rank. (\ref{eq:solvability})
is equivalent to 
\end{proof}
\begin{proof}
[Proof of Corollary \ref{cor:random-noise}] Given an $\mathcal{E}_{P}$
that has uniform random noise. It suffices to show that $\mathcal{E}_{P}$
has full row rank. By Definition \ref{def:random-noise}, $\mathcal{E}_{P}$
is a rectangular stochastic matrix with $N$ rows where each row has
a unique largest entry, none of these row-wise largest entry shares
the same column, and given any state the conditional probabilities
for other realizations are equal. Take out the $N$ rows of $\mathcal{E}_{P}$
each with a row-wise largest entries, and form a square submatrix
$\tilde{\mathcal{E}}_{P}$.\footnote{The submatrix $\tilde{\mathcal{E}}_{P}$ may no longer be an experiment
as the rows may not sum to one.} Reindex the columns of $\tilde{\mathcal{E}}_{P}$ so that the maximum
entry in the $n$-th row appears in the $n$-th column. If $\tilde{\mathcal{E}}_{P}$
is invertible, it has full row rank, which then implies that $\mathcal{E}_{P}$
has full row rank. Since multiplying a non-zero constant on each row
does not change invertibility, we can normalize the conditional probabilities
of entries that are not the row-wise maximum to one. This way, the
resulting matrix must take the following form, 
\[
\begin{bmatrix}a_{1} & 1 & \cdots & 1\\
1 & a_{2} & \cdots & 1\\
\vdots & \vdots & \ddots & \vdots\\
1 & 1 & \cdots & a_{N}
\end{bmatrix}=\operatorname{diag}\left(a_{1},a_{2},...,a_{N}\right)+\bm{1}_{N\times N}
\]
with $a_{1},a_{2},...,a_{n}>1$. Apply the Sherman--Morrison theorem
for matrix inversion, the above matrix is invertible if and only if
\[
1+\bm{1}_{1\times N}\operatorname{diag}\left(a_{1}^{-1},a_{2}^{-1},...,a_{N}^{-1}\right)\bm{1}_{N\times1}\neq0,
\]
which is obviously true.
\end{proof}
\begin{proof}
[Proof of Proposition \ref{prop:imp-comparison}] Suppose $\operatorname{Col}\mathcal{E}_{P}\supseteq\operatorname{Col}\mathcal{E}_{P}'$.
Pick any $C\in\mathcal{C}$ and $\mathcal{E}_{A}\in E$. Theorem \ref{thm:imp}
immediately implies that if $\mathcal{E}_{A}$ is implementable under
$\mathcal{E}_{P}'$, it is also implementable under $\mathcal{E}_{P}$.
Therefore, $\mathcal{I}^{C,\mu_{0}}(\mathcal{E}_{P})\supseteq\mathcal{I}^{C,\mu_{0}}(\mathcal{E}_{P}')$.

Suppose $\mathcal{I}^{C,\mu_{0}}(\mathcal{E}_{P})\supseteq\mathcal{I}^{C,\mu_{0}}(\mathcal{E}_{P}')$.
We show that $\operatorname{Col}\mathcal{E}_{P}\supseteq\operatorname{Col}\mathcal{E}_{P}'$.
Suppose to the contrary that $\operatorname{Col}\mathcal{E}_{P}\not\supseteq\operatorname{Col}\mathcal{E}_{P}'$.
Claim \ref{claim:v-not-all-pos-neg} below says that there exists
some $v\in\operatorname{Col}\mathcal{E}_{P}'\setminus\operatorname{Col}\mathcal{E}_{P}$
such that $v\not\geq\bm{0}$ and $v\not\leq\bm{0}$. The proof of
this claim is deferred because it is not related to the main argument.
Consider a matrix $\nabla\in\mathbb{R}^{N\times2}$ with $\nabla_{1}-\nabla_{2}=v$.
Since $\nabla$ satisfies the no dominance condition, Lemma \ref{lem:nabla-no-dominance}
implies that there exists some $\mathcal{E}_{A}\in E$ and $C\in\tilde{C}$
such that $\nabla$ is the marginal cost matrix. This $\mathcal{E}_{A}$
is implementable under $\mathcal{E}_{P}'$ but not $\mathcal{E}_{P}$,
contradicting $\mathcal{I}^{C,\mu_{0}}(\mathcal{E}_{P})\supseteq\mathcal{I}^{C,\mu_{0}}(\mathcal{E}_{P}')$. 

Suppose $\operatorname{Col}\mathcal{E}_{P}\supsetneq\operatorname{Col}\mathcal{E}_{P}'$.
The same construction as above then implies that $\mathcal{I}^{C,\mu_{0}}(\mathcal{E}_{P})\supsetneq\mathcal{I}^{C,\mu_{0}}(\mathcal{E}_{P}')$. 

Suppose $\mathcal{I}^{C,\mu_{0}}(\mathcal{E}_{P})\supsetneq\mathcal{I}^{C,\mu_{0}}(\mathcal{E}_{P}')$.
We immediately have $\operatorname{Col}\mathcal{E}_{P}\supseteq\operatorname{Col}\mathcal{E}_{P}'$.
If $\operatorname{Col}\mathcal{E}_{P}=\operatorname{Col}\mathcal{E}_{P}'$,
then Theorem \ref{thm:imp} implies that $\mathcal{I}^{C,\mu_{0}}(\mathcal{E}_{P})=\mathcal{I}^{C,\mu_{0}}(\mathcal{E}_{P}')$.
Therefore, we must have $\operatorname{Col}\mathcal{E}_{P}\supsetneq\operatorname{Col}\mathcal{E}_{P}'$.
\end{proof}
\begin{claim}
\label{claim:v-not-all-pos-neg}Suppose $\operatorname{Col}\mathcal{E}_{P}\not\supseteq\operatorname{Col}\mathcal{E}_{P}'$.
There exists some $v\in\operatorname{Col}\mathcal{E}_{P}'\setminus\operatorname{Col}\mathcal{E}_{P}$
such that $v\not\geq0$ and $v\not\leq0$.
\end{claim}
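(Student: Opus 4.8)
The plan is to use the one piece of extra structure available here beyond a generic pair of subspaces: since $\mathcal{E}_{P}$ and $\mathcal{E}_{P}'$ are row stochastic, $\mathcal{E}_{P}\boldsymbol{1}_{M}=\boldsymbol{1}_{N}$ and $\mathcal{E}_{P}'\boldsymbol{1}_{M'}=\boldsymbol{1}_{N}$, so the constant vector $\boldsymbol{1}_{N}$ lies in \emph{both} column spaces. This gives a line inside $\operatorname{Col}\mathcal{E}_{P}\cap\operatorname{Col}\mathcal{E}_{P}'$ along which I can translate a candidate vector freely without leaving either column space or entering $\operatorname{Col}\mathcal{E}_{P}$.

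First I would invoke the hypothesis $\operatorname{Col}\mathcal{E}_{P}\not\supseteq\operatorname{Col}\mathcal{E}_{P}'$ to pick some $w\in\operatorname{Col}\mathcal{E}_{P}'\setminus\operatorname{Col}\mathcal{E}_{P}$. The key preliminary observation is that $w$ cannot be a scalar multiple of $\boldsymbol{1}_{N}$: if $w=c\boldsymbol{1}_{N}$ then $w\in\operatorname{Col}\mathcal{E}_{P}$ because $\boldsymbol{1}_{N}\in\operatorname{Col}\mathcal{E}_{P}$, contradicting the choice of $w$. Hence, writing $\underline{w}:=\min_{n}w_{n}$ and $\overline{w}:=\max_{n}w_{n}$, we have $\underline{w}<\overline{w}$.

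Next I would set $v:=w-\tfrac{1}{2}(\underline{w}+\overline{w})\boldsymbol{1}_{N}$ and check the three required properties. (i) $v\in\operatorname{Col}\mathcal{E}_{P}'$ since both $w$ and $\boldsymbol{1}_{N}$ are. (ii) $v\notin\operatorname{Col}\mathcal{E}_{P}$: otherwise $w=v+\tfrac{1}{2}(\underline{w}+\overline{w})\boldsymbol{1}_{N}$ would be a sum of two vectors in $\operatorname{Col}\mathcal{E}_{P}$, hence in $\operatorname{Col}\mathcal{E}_{P}$, a contradiction. (iii) $\min_{n}v_{n}=\tfrac{1}{2}(\underline{w}-\overline{w})<0$ and $\max_{n}v_{n}=\tfrac{1}{2}(\overline{w}-\underline{w})>0$, so $v\not\geq0$ and $v\not\leq0$. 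This $v$ is the desired vector.

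The argument is short, and the only point that requires care — the step I would flag as the crux — is (ii): translating by a multiple of $\boldsymbol{1}_{N}$ must not push the vector back into $\operatorname{Col}\mathcal{E}_{P}$, and this is precisely where row-stochasticity of $\mathcal{E}_{P}$ (not merely of $\mathcal{E}_{P}'$) is used. Everything else is a one-line sign computation once $w$ is known to be non-constant.
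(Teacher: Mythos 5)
Your proof is correct, and it takes a genuinely different and in fact cleaner route than the paper's. The paper starts from an arbitrary $\tilde{v}\in\operatorname{Col}\mathcal{E}_{P}'\setminus\operatorname{Col}\mathcal{E}_{P}$ and adds a negative multiple of a column $\mathcal{E}_{k}'$ of $\mathcal{E}_{P}'$ that lies outside $\operatorname{Col}\mathcal{E}_{P}$; because that column is entrywise non-negative this forces some entry negative, but it then has to handle two pieces of casework --- ruling out the (at most one) scalar that would land the sum back in $\operatorname{Col}\mathcal{E}_{P}$, and ruling out overshooting to $v\leq 0$, which requires switching to a second column outside $\operatorname{Col}\mathcal{E}_{P}$ when $\tilde{v}$ happens to be parallel to $\mathcal{E}_{k}'$ (the existence of that second column is itself argued via $\mathbf{1}\in\operatorname{Col}\mathcal{E}_{P}$ and the row sums). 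Your proof uses the same structural fact --- $\mathbf{1}_{N}$ lies in both column spaces because both matrices are row stochastic --- but deploys it as the translation direction itself: since $\mathbf{1}_{N}\in\operatorname{Col}\mathcal{E}_{P}\cap\operatorname{Col}\mathcal{E}_{P}'$, shifting $w$ by any multiple of $\mathbf{1}_{N}$ changes neither membership in $\operatorname{Col}\mathcal{E}_{P}'$ nor non-membership in $\operatorname{Col}\mathcal{E}_{P}$, and centering $w$ at $\tfrac{1}{2}(\underline{w}+\overline{w})$ makes the sign condition immediate once you observe that $w$ cannot be constant. This eliminates all of the paper's casework; the only step that needs care is exactly the one you flag, namely that the translation cannot re-enter $\operatorname{Col}\mathcal{E}_{P}$, and your argument for it is airtight. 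The one thing your approach gives up is generality: it leans on row-stochasticity of both matrices, whereas the paper's perturbation-by-a-column idea would survive (with its casework) in settings where only non-negativity of columns is available --- but in this paper both matrices are experiments, so nothing is lost.
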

\begin{proof}
Since $\operatorname{Col}\mathcal{E}_{P}\not\supseteq\operatorname{Col}\mathcal{E}_{P}'$,
there is some $\tilde{v}\in\operatorname{Col}\mathcal{E}_{P}'\setminus\operatorname{Col}\mathcal{E}_{P}$.
I can always add to $\tilde{v}$ a scalar $\alpha<0$ times a column
$\mathcal{E}_{k}'$ of $\mathcal{E}_{P}'$ with $\mathcal{E}_{k}'\not\in\operatorname{Col}\mathcal{E}_{P}$.
Such a column must exist due to $\operatorname{Col}\mathcal{E}_{P}\not\supseteq\operatorname{Col}\mathcal{E}_{P}'$.
The resulting vector $v=\tilde{v}+\alpha\mathcal{E}_{k}'$ satisfies
$v\in\operatorname{Col}\mathcal{E}_{P}'$ and $v\not\geq0$. It remains
to show that I can do this without making $v\in\operatorname{Col}\mathcal{E}_{P}$
or $v\leq0$.

First, $v\in\operatorname{Col}\mathcal{E}_{P}$ can be avoided. There
is at most one $\alpha$ that can make $\tilde{v}+\alpha\mathcal{E}_{k}'\in\operatorname{Col}\mathcal{E}_{P}$,
since if two or more such $\alpha$ values would imply $\mathcal{E}_{k}'\in\operatorname{Col}\mathcal{E}_{P}$.
Therefore, if $v\in\operatorname{Col}\mathcal{E}_{P}$, simply perturb
$\alpha$ slightly. 

Second, $v\leq0$ can be avoided. If the resulting $v\leq0$, we can
reduce $|\alpha|$ to make some entries positive. The only case that
requires extra care is that as $|\alpha|$ increases, $v$ goes from
$v\geq0$ to $v\leq0$ directly. This happens only if $\tilde{v}$
is parallel to $\mathcal{E}_{k}'$. In this case, simply pick a different
$\mathcal{E}_{k}''$ to construct $v$. There exists another column
$\mathcal{E}_{k}''$ of $\mathcal{E}_{P}'$ with $\mathcal{E}_{k}''\not\in\operatorname{Col}\mathcal{E}_{P}$
because of the row sums being one. More specifically, if $\mathcal{E}_{k}'$
is the only column of $\mathcal{E}_{P}'$ with $\mathcal{E}_{k}'\not\in\operatorname{Col}\mathcal{E}_{P}$,
it can be expressed as a linear combination of all other columns of
$\mathcal{E}_{P}'$, which are in $\operatorname{Col}\mathcal{E}_{P}$,
and the vector of ones $\bm{1}$, which is also in $\operatorname{Col}\mathcal{E}_{P}$
since $\mathcal{E}_{P}\bm{1}=\bm{1}$. This completes the proof. 
\end{proof}

\subsection{Omitted Proofs in Section \ref{sec:IR-cost-min}}

I will first prove Theorem \ref{thm:imp-contract-decomposition}.
Next, I will prove several useful lemmas before proving other results
in Section \ref{sec:IR-cost-min}. 
\begin{proof}
[Proof of Theorem \ref{thm:imp-contract-decomposition}] Suppose some
contract $t$ under $\mathcal{E}_{P}$ implements $\mathcal{E}_{A}$.
$t$ must satisfy (\ref{eq:imp-matrix-eqn}) for some $\lambda\in\mathbb{R}^{N}$.
Applying \citet{baksalary1979matrix}'s general solution formulas
(see Theorem \ref{thm:AX-YB=00003DC}), we have 
\begin{align}
T & =\mathcal{E}_{P}^{-}\nabla+\mathcal{E}_{P}^{-}\tilde{Z}\boldsymbol{1}_{1\times K}+\left(I_{M}-\mathcal{E}_{P}^{-}\mathcal{E}_{P}\right)\tilde{W}\label{eq:t-general-sol}\\
\lambda & =-\dfrac{1}{K}\left(I_{N}-\mathcal{E}_{P}\mathcal{E}_{P}^{-}\right)\nabla\boldsymbol{1}_{K\times1}+\mathcal{E}_{P}\mathcal{E}_{P}^{-}\tilde{Z}\label{eq:lambda-general-sol}
\end{align}
where $\tilde{Z}\in\mathbb{R}^{N}$ and $\tilde{W}\in\mathbb{R}^{M\times K}$
are free variables. Let $Z=\mathcal{E}_{P}^{-}\tilde{Z}$ and $W=\left(I_{M}-\mathcal{E}_{P}^{-}\mathcal{E}_{P}\right)\tilde{W}$.
(\ref{eq:t-general-sol}) becomes 
\[
T=\mathcal{E}_{P}^{-}\nabla+Z\boldsymbol{1}_{1\times K}+W.
\]
It suffices to show that $Z$ can take any values and $W$ can take
any values as long as $\mathcal{E}_{P}W=\bm{0}$. For the former,
since $Z=\mathcal{E}_{P}^{-}\tilde{Z}\in\operatorname{Col}\mathcal{E}_{P}^{-}=\operatorname{Row}\mathcal{E}_{P}$
and $\tilde{Z}$ is a free variable, $Z$ can be any vector in $\operatorname{Row}\mathcal{E}_{P}$.
However, since $\mathbb{R}^{M}$ can be orthogonal decomposed into
$\operatorname{Row}\mathcal{E}_{P}$ and $\operatorname{Ker}\mathcal{E}_{P}$,
it is without loss for $Z$ to take any value in $\mathbb{R}^{M}$.
This is because we can always uniquely decompose $Z=Z^{\parallel}+Z^{\perp}$
where $Z^{\parallel}\in\operatorname{Row}\mathcal{E}_{P}$ and $Z^{\perp}\in\operatorname{Ker}\mathcal{E}_{P}$,
and then absorb $Z^{\perp}$ into $W$. For the latter, $\mathcal{E}_{P}^{-}\mathcal{E}_{P}$
is the orthogonal projection matrix onto $\operatorname{Col}\mathcal{E}_{P}$.
Therefore, by definition,
\[
\mathcal{E}_{P}W=\mathcal{E}_{P}\left(I_{M}-\mathcal{E}_{P}^{-}\mathcal{E}_{P}\right)\tilde{W}=\bm{0},
\]
and since $\tilde{W}$ is free, $W$ can be anything whose columns
are in the null space of the orthogonal projection operator onto $\operatorname{Col}\mathcal{E}_{P}$.
\end{proof}
The next lemma uses Theorem \ref{thm:imp-contract-decomposition}
to rewrite the principal's problem (\ref{eq:P-problem}).
\begin{lem}
\label{lem:P-problem-matrix} The principal's problem (\ref{eq:P-problem})
is equivalent to 
\begin{align*}
\min_{Z,W_{k}\in\mathbb{R}^{M}} & \:\mu_{0}^{\intercal}\mathcal{E}_{P}\mathcal{E}_{P}^{-}\nabla\mathcal{E}_{A}^{\intercal}\mu_{0}+\mu_{0}^{\intercal}\mathcal{E}_{P}Z,\\
\text{s.t. } & \mathcal{E}_{P}W_{k}=\boldsymbol{0},\forall1\leq k\leq K,\\
 & \mathcal{E}_{P}^{-}\nabla_{k}+Z+W_{k}\geq\boldsymbol{0},\forall1\leq k\leq K.
\end{align*}
\end{lem}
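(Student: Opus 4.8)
The plan is to remove the incentive constraint~(\ref{eq:agent-IC}) by substituting the explicit description of all incentive-compatible contracts, and then to simplify the objective and the remaining constraints. Since $\mathcal{E}_A$ is assumed implementable, Theorem~\ref{thm:imp-contract-decomposition} says that a contract $T$ satisfies~(\ref{eq:agent-IC}) if and only if $T_k=\mathcal{E}_P^{-}\nabla_k+Z+W_k$ for some $Z\in\mathbb{R}^M$ and some $W_k\in\mathbb{R}^M$ with $\mathcal{E}_P W_k=\boldsymbol{0}$, $1\le k\le K$. So I would first replace the choice variable $T$ in~(\ref{eq:P-problem}) by the pair $(Z,\{W_k\})$ ranging over this set; this reparametrization is exactly the content of Theorem~\ref{thm:imp-contract-decomposition} and needs no additional argument (one only has to note that changing $Z$ by a vector in $\operatorname{Ker}\mathcal{E}_P$, absorbed into the $W_k$'s, leaves both $T$ and the objective untouched, so the reparametrization is harmless).

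Next I would rewrite the objective. Because $\mathcal{E}_A$ and $\mathcal{E}_P$ are conditionally independent given the state, conditioning on the report $x_k$ makes $\mathcal{E}_P$'s realization distributed according to the belief $x_k$, so the expected payment in~(\ref{eq:P-problem}) equals $\sum_k \Pr(x_k)\,x_k\cdot\mathcal{E}_P T_k$ with $\Pr(x_k)=(\mathcal{E}_A^{\intercal}\mu_0)_k$ --- the same interim-payment expression that appears in the agent's problem~(\ref{eq:A-problem}). Plugging in $T_k=\mathcal{E}_P^{-}\nabla_k+Z+W_k$ breaks this into three sums. The $W_k$ sum vanishes term by term, since $x_k\cdot\mathcal{E}_P W_k = x_k\cdot\boldsymbol{0}=0$. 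The $Z$ sum collapses to $\bigl(\sum_k\Pr(x_k)x_k\bigr)\cdot\mathcal{E}_P Z=\mu_0^{\intercal}\mathcal{E}_P Z$ by Bayes plausibility~(\ref{eq:Bayes-plausibility}). Collecting the remaining incentive term $\sum_k\Pr(x_k)\,x_k\cdot\mathcal{E}_P\mathcal{E}_P^{-}\nabla_k$ into matrix form --- using $\Pr(x_k)=(\mathcal{E}_A^{\intercal}\mu_0)_k$ together with Bayes plausibility and the projection identity $\mathcal{E}_P\mathcal{E}_P^{-}\mathcal{E}_P=\mathcal{E}_P$ (equivalently, that implementability forces $(I-\mathcal{E}_P\mathcal{E}_P^{-})\nabla_k$ to be independent of $k$) --- gives $\mu_0^{\intercal}\mathcal{E}_P\mathcal{E}_P^{-}\nabla\mathcal{E}_A^{\intercal}\mu_0$, which yields the stated objective.

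It then remains to translate the two surviving constraints. Limited liability~(\ref{eq:limited-liability}), $T\ge\boldsymbol{0}$, reads column by column as $\mathcal{E}_P^{-}\nabla_k+Z+W_k\ge\boldsymbol{0}$ for every $k$; combined with the defining relations $\mathcal{E}_P W_k=\boldsymbol{0}$ inherited from the decomposition, these are precisely the two constraint families in the statement. The participation constraint~(\ref{eq:agent-PC-EAIR}) can be dropped: as observed right after~(\ref{eq:agent-PC-EAIR}), under limited liability the agent can guarantee a non-negative payoff by acquiring nothing and reporting a fixed $x_k$, so PC is implied by LL and never binds. Hence~(\ref{eq:P-problem}) and the displayed program have the same feasible contracts and the same objective, so the same value, which is the claimed equivalence.

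The routine parts are invoking Theorem~\ref{thm:imp-contract-decomposition}, discarding PC, and reading off the limited-liability constraints; the substantive step is the middle one --- the clean collapse of $\sum_k\Pr(x_k)\,x_k\cdot\mathcal{E}_P\mathcal{E}_P^{-}\nabla_k$ into the bilinear form $\mu_0^{\intercal}\mathcal{E}_P\mathcal{E}_P^{-}\nabla\mathcal{E}_A^{\intercal}\mu_0$. This is where I would be careful, keeping explicit track of the realization probabilities $\mathcal{E}_A^{\intercal}\mu_0$, of Bayes plausibility $\sum_k\Pr(x_k)x_k=\mu_0$, and of the symmetry and idempotence of the orthogonal projector $\mathcal{E}_P\mathcal{E}_P^{-}$, rather than treating the reindexing as automatic.
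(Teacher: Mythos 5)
Your overall route is exactly the paper's: substitute the Theorem~\ref{thm:imp-contract-decomposition} decomposition for the incentive constraint, drop PC because limited liability makes it slack, and read LL off column by column. Those parts are fine. The problem is the step you yourself single out as the substantive one. The principal's expected payment is $\sum_{k}\Pr(x_{k})\,x_{k}\cdot\mathcal{E}_{P}T_{k}$ with $\Pr(x_{k})=(\mathcal{E}_{A}^{\intercal}\mu_{0})_{k}$, and after substituting $T_{k}=\mathcal{E}_{P}^{-}\nabla_{k}+Z+W_{k}$ the first piece is $\sum_{k}\Pr(x_{k})\,x_{k}\cdot\mathcal{E}_{P}\mathcal{E}_{P}^{-}\nabla_{k}$. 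This is \emph{not} equal to $\mu_{0}^{\intercal}\mathcal{E}_{P}\mathcal{E}_{P}^{-}\nabla\mathcal{E}_{A}^{\intercal}\mu_{0}=\sum_{k}\Pr(x_{k})\,\mu_{0}\cdot\mathcal{E}_{P}\mathcal{E}_{P}^{-}\nabla_{k}$. Writing $w:=(I-\mathcal{E}_{P}\mathcal{E}_{P}^{-})\nabla_{k}$, which is independent of $k$ by implementability, the two sides differ by $\sum_{k}\Pr(x_{k})(x_{k}-\mu_{0})\cdot\nabla_{k}$, and the subgradient inequality (with the normalization $x_{k}\cdot\nabla_{k}=c(x_{k})$ and $c(\mu_{0})=0$) shows this is at least $C(\mathcal{E}_{A})>0$ for any informative $\mathcal{E}_{A}$. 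A concrete check: uniform binary prior, $\mathcal{E}_{P}=I$, entropy-reduction cost, $\mathcal{E}_{A}$ inducing posteriors $(0.3,0.7)$ and $(0.7,0.3)$ with equal probability gives $\sum_{k}\Pr(x_{k})x_{k}\cdot\nabla_{k}=C(\mathcal{E}_{A})\approx0.082$ but $\sum_{k}\Pr(x_{k})\mu_{0}\cdot\nabla_{k}=\tfrac12\log(0.84)\approx-0.087$. Bayes plausibility lets you pull $\mu_{0}$ out of the $Z$ term, where the summand is linear in $x_{k}$ with a $k$-independent coefficient, but not out of the $\nabla_{k}$ term, where the coefficient varies with $k$; the projection identity does not rescue this.

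What survives is weaker than what you assert: the discrepancy $\sum_{k}\Pr(x_{k})(x_{k}-\mu_{0})\cdot\nabla_{k}$ is a constant in $(Z,W_{k})$, so the program you actually derive and the one in the statement share the same feasible set and have objectives differing by an additive constant, hence the same minimizers. If ``equivalent'' is read as ``same optimal contracts,'' your argument can be repaired by proving exactly this constant-difference claim; if it is read as ``same value,'' the constant term should be $C(\mathcal{E}_{A})-\mu_{0}\cdot(I-\mathcal{E}_{P}\mathcal{E}_{P}^{-})\nabla_{1}$, which is what makes the formula $\kappa^{C,\mu_{0}}(\mathcal{E}_{A},\mathcal{E}_{P})=C(\mathcal{E}_{A})-\mu_{0}\cdot\mathcal{E}_{P}\operatorname{rowmin}(\mathcal{E}_{P}^{-}\nabla)$ in Proposition~\ref{prop:LL-opt-contract} come out correctly in the full-rank case. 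Either way, asserting the collapse as an identity is the one place where your proof, as written, fails.
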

\begin{proof}
Due to Theorem \ref{thm:imp-contract-decomposition}, the incentive
constraint in the principal's problem (\ref{eq:P-problem}) is equivalent
to (\ref{eq:imp-payment-rule}). Plug (\ref{eq:P-problem}) into the
objective and the constraints, note that the participation constraint
does not bind due to limited liability, and we obtain the expression
above. 
\end{proof}

I now present a lemma that identifies a no dominance condition under
which a matrix $\nabla$ can be the marginal cost matrix of some experiment
$\mathcal{E}_{A}$ under some cost function $C\in\mathcal{C}$. I
first state the no dominance condition. In words, the no dominance
condition says that, if we view $\nabla$ as a matrix of state-dependent
utility with rows representing the states and columns representing
the actions, no action is weakly dominated. The definition allows
for identical columns in $\nabla_{k}$. I will identify $\nabla$
as a collection of subgradients of some convex function $c$, and
allowing for identical columns handles the case where $c$ has an
affine part.
\begin{defn}
[No Dominance] Given a matrix $\nabla\in\mathbb{R}^{N\times K}$
with its columns denoted by $\nabla_{k}$. Say that a column $\nabla_{\tilde{k}}$
is dominated if there exists scalars $\alpha_{k}\geq0$ for $k\neq\tilde{k}$
such that $\sum_{k\neq\tilde{k}}\alpha_{k}=1$ and $\sum_{k\neq\tilde{k}}\alpha_{k}\nabla_{k}\geq\nabla_{\tilde{k}}$.
Say that $\nabla$ satisfies the no dominance condition if no column
is dominated, unless it is weakly dominated by another column identical
to it. 
\end{defn}
The following lemma relates matrices satisfying the no dominance condition
to the marginal cost matrices. This comes from the observation that
any marginal cost matrix is a collection of subgradient of some differentiable
convex function, which must satisfy the no dominance condition.
\begin{lem}
\label{lem:nabla-no-dominance} A matrix $\nabla\in\mathbb{R}^{N\times K}$
satisfies the no dominance condition if and only if there exists some
constant $b\in\mathbb{R}$ such that $\nabla+b$ is the marginal cost
matrix for some experiment $\mathcal{E}_{A}\in E$ under some cost
function $C\in\mathcal{C}$.\footnote{$\nabla+b$ stands for adding $b$ to each element of $\nabla$. I
need the constant $b$ so that the posterior cost function $c$ can
satisfy $c(\mu_{0})=0$.}
\end{lem}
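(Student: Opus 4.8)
The plan is to read the claim as an integrability result: a matrix $\nabla$ arises, up to a common additive constant, as the collection of gradients of a smooth convex posterior cost evaluated at the support of a Bayes-plausible experiment with mean $\mu_{0}$, exactly when it satisfies the no dominance condition. I would prove the two implications separately, using the duality between a posterior-separable cost $c$ and its conjugate.

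\emph{Necessity.} Suppose $\nabla+b$ is the marginal cost matrix of some $\mathcal{E}_{A}$ under $c\in\mathcal{C}_{0}$, and write $x_{1},\dots,x_{K}$ for the posteriors of $\mathcal{E}_{A}$, so that the $k$th column $p_{k}$ of $\nabla+b$ equals $\nabla_{x_{k}}c$. Since $c$ is convex and differentiable, $p_{k}$ is its (essentially unique) subgradient at $x_{k}$; equivalently $x_{k}$ maximizes $p_{k}\cdot\mu-c(\mu)$ over $\mu\in\Delta\Omega$, i.e. $x_{k}$ lies in the subdifferential of the restricted conjugate $h(p):=\sup_{\mu\in\Delta\Omega}\bigl(p\cdot\mu-c(\mu)\bigr)$ at $p_{k}$. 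Two features of $h$ do the work: it is convex, and -- because every $\mu\in\Delta\Omega$ has nonnegative entries -- it is coordinatewise nondecreasing. If some column $\nabla_{\tilde k}$ were weakly dominated by a convex combination $\sum_{k\in S}\alpha_{k}\nabla_{k}$ of columns none of which equals it, then combining the subgradient inequalities for $h$ at the relevant prices with monotonicity and convexity of $h$, and using that differentiability of $c$ makes $h$ strictly convex on the image of the gradient map, leads to a contradiction. The intuition is that the gradient at the posterior which induces report $\tilde k$ cannot be beaten state-by-state by an average of the gradients at the other induced posteriors.

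\emph{Sufficiency, and the main obstacle.} Given $\nabla$ satisfying no dominance, I would construct $(c,\mathcal{E}_{A})$ by building the conjugate first -- no dominance is precisely what makes every column realizable as an undominated gradient. Take a smooth, strictly convex function $\phi$ on $\mathbb{R}^{N}$ whose gradient map is a diffeomorphism onto $\operatorname{int}\Delta\Omega$ (a convex quadratic, or a tilted log-sum-exp), chosen so that the points $x_{k}:=\nabla\phi(\nabla_{k})$ are distinct interior posteriors with $\mu_{0}$ in the relative interior of their convex hull; this is possible because $\mu_{0}\in\operatorname{int}\Delta\Omega$ and the equal-weights average of the columns lies in the relative interior of their hull, so the $x_{k}$ can be made to surround $\mu_{0}$. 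Any Bayes-plausible distribution on $\{x_{k}\}$ with mean $\mu_{0}$ then defines $\mathcal{E}_{A}$. Setting $c:=\phi^{*}$, restricted to $\Delta\Omega$, gives a convex, differentiable posterior cost with $\nabla_{x_{k}}c=\nabla_{k}$ by conjugate duality, and adding the constant $b:=-c(\mu_{0})$ enforces $c(\mu_{0})=0$ and turns the marginal cost matrix into $\nabla+b$. Repeated columns are handled by flattening $c$ to be affine on a small neighborhood of each such posterior -- this is the affine part of $c$ the statement alludes to -- and, if the boundary-slope behavior of Assumption \ref{assu:inf-slope-post-sep} is wanted, by adding a steep convex bump near $\partial\Delta\Omega$ away from the $x_{k}$'s. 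The delicate step, and the main obstacle, is exactly this construction: one must simultaneously obtain a genuinely differentiable $c$ -- not merely the piecewise-linear upper envelope $\mu\mapsto\max_{k}(\mu\cdot\nabla_{k}+\beta_{k})$ -- with exactly the prescribed gradients, keep the induced posteriors interior and Bayes-plausible with mean $\mu_{0}$, and accommodate repeated columns via an affine piece without disturbing the rest. Necessity, by contrast, is comparatively routine once one works with the monotone conjugate.
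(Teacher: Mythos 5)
Your necessity argument is sound and is essentially the paper's, just routed through the conjugate: with the normalization $\mu_{k}\cdot\nabla_{k}=c(\mu_{k})$, each column is the supporting hyperplane of $c$ at $\mu_{k}$, so if a convex combination of non-identical columns weakly exceeded $\nabla_{\tilde k}$ componentwise, evaluating at $\mu_{\tilde k}\geq 0$ would force each of those columns to also support $c$ at $\mu_{\tilde k}$, contradicting uniqueness of the subgradient under differentiability. The gap is in sufficiency, and it is a normalization issue. The ``marginal cost matrix'' in this paper is the \emph{normalized} Gateaux derivative: within $\Delta\Omega$ the derivative is only defined up to multiples of $\bm{1}$, and Definition \ref{def:MC-post-sep} pins it down by $x_{k}\cdot\nabla_{x_{k}}c=c(x_{k})$, so each column records the intercepts of the supporting hyperplane at the state vertices, not merely its slope. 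Your construction $x_{k}:=\nabla\phi(\nabla_{k})$, $c:=\phi^{*}$ prescribes slopes. The supporting hyperplane of $\phi^{*}$ at $x_{k}$ is $\mu\mapsto\phi^{*}(x_{k})+(\mu-x_{k})\cdot\nabla_{k}=\mu\cdot\nabla_{k}-\phi(\nabla_{k})$, which on $\Delta\Omega$ equals $\mu\cdot\left(\nabla_{k}-\phi(\nabla_{k})\bm{1}\right)$. Hence the marginal cost matrix you actually produce has columns $\nabla_{k}-\phi(\nabla_{k})\bm{1}+b\bm{1}$: a \emph{column-dependent} shift, not the single common constant $b$ the lemma requires. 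This distinction is not cosmetic downstream, since the lemma feeds into $\operatorname{rowmin}\left(\mathcal{E}_{P}^{-}\nabla\right)$ in Propositions \ref{prop:LL-opt-contract} and \ref{prop:binary-LL-characterization}, which is not invariant to adding different multiples of $\bm{1}$ to different columns.

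The same observation explains why your sufficiency construction never actually uses no dominance: it succeeds for arbitrary distinct columns precisely because it is realizing a different matrix. If it really delivered $\nabla+b$, it would contradict your own necessity direction. The repair is to prescribe supporting hyperplanes rather than gradients. Start from the upper envelope $V(\mu):=\max_{k}\mu\cdot\nabla_{k}$, whose normalized derivative at any $\mu$ where $\nabla_{k}$ is the unique maximizer is exactly $\nabla_{k}$ (zero shift, so $b$ is needed only to enforce $c(\mu_{0})=0$). No dominance enters here: by a separation/LP-duality argument, each column not dominated by non-identical columns admits an interior belief $\mu_{k}$ at which it is the strict maximizer, and these beliefs can be chosen within their open cells so that $\mu_{0}\in\operatorname{co}\{\mu_{k}\}$. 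One then smooths $V$ away from neighborhoods of the $\mu_{k}$ to obtain a differentiable convex $c$ with the prescribed supporting hyperplanes, handling duplicate columns by duplicating realizations. This is the construction the paper gestures at with ``the fundamental theorem of convex analysis,'' and it is the step your conjugate-based route bypasses at the cost of changing the target matrix.
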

\begin{proof}
[Proof of Lemma \ref{lem:nabla-no-dominance}] Suppose for some $b\in\mathbb{R}$,
$\nabla+b$ is a marginal cost matrix for some $\mathcal{E}_{A}\in E$
under some $C\in\mathcal{C}$ with posterior cost $c$. Let $\tilde{c}(\mu):=c(\mu)+b$.
By definition, $\nabla$ must be a subgradient matrix of the differentiable
convex function $\tilde{c}$. Specifically, each column $\nabla_{k}$
must be the unique subgradient of $\tilde{c}$ at some $\mu_{k}$.
The uniqueness of the subgradient comes from differentiability. It
represents a supporting hyperplane in the space of $(\mu,y)$ by $\mu\cdot\nabla_{k}=y-b$.\footnote{The hyperplane is $\left(\mu_{k}-\mu\right)\cdot\nabla_{k}+y-\tilde{c}(\mu_{k})=0$,
which simplifies to the above equation due to the normalization in
Definition (\ref{def:MC-post-sep}) which says $\mu_{k}\cdot\nabla_{k}=c(\mu_{k})$} If there is any column $\nabla_{k}$ dominated by a convex combination
of columns that are not identical to it, then the supporting hyperplane
$\mu\cdot\nabla_{k}=y-b$ at $\mu_{k}$ lies everywhere below the
upper envelope of the supporting hyperplanes at $\mu_{k'}$ with $k'\neq k$,
which means $\nabla_{k}$ cannot be a subgradient. 

Conversely, suppose $\nabla$ is a matrix that satisfies the no dominance
condition. Let $\tilde{\nabla}$ be a matrix constructed by the distinct
columns of $\nabla$. The fundamental theorem of convex analysis implies
that $\tilde{\nabla}$ is a subgradient matrix of some differentiable
convex function $c$. Pick some constant $b\in\mathbb{R}$ so that
$c(\mu_{0})+b=0$, and define $\tilde{c}(\mu):=c(\mu)+b$. $\tilde{c}$
induces a posterior separable cost $\tilde{C}\in\mathcal{C}$. Moreover,
each column $\tilde{\nabla}_{k}$ is the unique subgradient of $\tilde{c}$
at distinct $\mu_{k}$. As a result, there exists a posterior distribution
$\mathcal{E}_{A}$ with the marginal cost matrix $\tilde{\nabla}$
under $\tilde{C}$. We can always construct $\mathcal{E}_{A}$ to
be Bayes plausible because the posteriors $\mu_{k}$ can always be
chosen to include the prior $\mu_{0}$ in their convex hull. We now
add back duplicate columns by creating duplicate realizations in $\mathcal{E}_{A}$.
This completes the proof. 
\end{proof}
\begin{prop}
[Without Limited Liability]\label{prop:EAIR-opt-contract} Without
limited liability, any $\mathcal{E}_{A}$ implementable under $\mathcal{E}_{P}$
can be implemented at a cost of $C(\mathcal{E}_{A})$. 
\end{prop}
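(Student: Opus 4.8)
The plan is to take any contract that implements $\mathcal{E}_{A}$ --- one exists by the implementability hypothesis --- and translate all of its payments by a single additive constant so as to drive the agent's surplus to zero without disturbing his incentives. The resulting contract will implement $\mathcal{E}_{A}$ at a cost equal to the agent's information cost $C(\mathcal{E}_{A})$.

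First I would record the observation that uniform shifts of a contract preserve implementability. Fix a contract $T$ satisfying the agent's incentive constraint (\ref{eq:agent-IC}); by Lemma \ref{lem:imp-matrix-eqn} this is equivalent to (\ref{eq:imp-matrix-eqn}) holding for some $\lambda\in\mathbb{R}^{N}$. For any $c\in\mathbb{R}$, the shifted contract $T+c\bm{1}_{M\times K}$ together with the multiplier $\lambda+c\bm{1}_{N}$ again satisfies (\ref{eq:imp-matrix-eqn}): since each row of $\mathcal{E}_{P}$ sums to one, $\mathcal{E}_{P}(c\bm{1}_{M\times K})=c\bm{1}_{N\times K}=(c\bm{1}_{N})\bm{1}_{1\times K}$, so both sides of (\ref{eq:imp-matrix-eqn}) gain the same term $c\bm{1}_{N\times K}$. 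Equivalently --- and this is the more transparent way to see it --- adding $c$ to every entry of the contract adds exactly the constant $c$ to the agent's objective in (\ref{eq:A-problem}) for \emph{every} experiment $\mathcal{E}\in E^{K}$ he might acquire (because $x\cdot\bm{1}_{N}=1$ for every posterior and $\mathcal{E}_{P}\bm{1}_{M}=\bm{1}_{N}$), so the set of maximizers, and hence the agent's best response $\mathcal{E}_{A}$, is unchanged.

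Next I would translate the agent's payoff. Because $\mathcal{E}_{A}$ is optimal under $T$, the agent's equilibrium payoff equals $U(T):=\mathbb{E}_{x_{k}\sim\langle\mathcal{E}_{A}\mid\mu_{0}\rangle}\big[x_{k}\cdot\mathcal{E}_{P}T_{k}\big]-C(\mathcal{E}_{A})$, and the first term here is precisely the principal's expected payment under $T$. By the previous step, $\tilde{T}:=T-U(T)\,\bm{1}_{M\times K}$ still implements $\mathcal{E}_{A}$, while the agent's payoff under $\tilde{T}$ is $U(T)-U(T)=0$, so the participation constraint (\ref{eq:agent-PC-EAIR}) binds with equality. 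Consequently the principal's expected payment under $\tilde{T}$ is $U(\tilde{T})+C(\mathcal{E}_{A})=C(\mathcal{E}_{A})$, which proves the claim. For completeness I would note the matching lower bound: any contract implementing $\mathcal{E}_{A}$ must respect (PC), and since $\mathcal{E}_{A}$ is the agent's optimal choice, (PC) reads $(\text{expected payment})-C(\mathcal{E}_{A})\geq0$, so $C(\mathcal{E}_{A})$ is in fact the minimum implementation cost absent limited liability. There is no genuinely hard step here; the only point requiring care is that the uniform shift moves the agent's objective by the \emph{same} constant for every experiment he could deviate to --- not merely at $\mathcal{E}_{A}$ --- which is exactly why it leaves the argmax, and therefore implementability, intact, and that (PC) is stated in terms of the agent's maximized payoff, which coincides with his payoff at $\mathcal{E}_{A}$ precisely because $\mathcal{E}_{A}$ is optimal.
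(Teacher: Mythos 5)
Your proof is correct, and it realizes the strategy the paper itself advertises in a footnote (``start from any $T$ that implements $\mathcal{E}_{A}$ and make the participation constraint bind by reducing the payment uniformly''), but the formal execution differs from the appendix proof. The paper does not shift an arbitrary implementing contract; instead it writes down one specific contract from the decomposition in Theorem \ref{thm:imp-contract-decomposition}, namely $T=\mathcal{E}_{P}^{-}\nabla+z\,\mathcal{E}_{P}^{-}\bm{1}$ with $z$ chosen in closed form, and verifies by matrix algebra that its expected cost equals $C(\mathcal{E}_{A})$. Your route is more elementary: you only need that $\mathcal{E}_{P}\bm{1}_{M}=\bm{1}_{N}$ and that posteriors are probability vectors, so a uniform scalar shift $T\mapsto T+c\bm{1}_{M\times K}$ moves the agent's objective by exactly $c$ at \emph{every} candidate experiment, preserving the argmax; choosing $c=-U(T)$ then makes (PC) bind and delivers expected payment $C(\mathcal{E}_{A})$. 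This avoids the pseudo-inverse machinery entirely and works verbatim whether or not $\mathcal{E}_{P}$ has full row rank. You also supply the matching lower bound (any implementing contract must pay at least $C(\mathcal{E}_{A})$ by (PC) evaluated at the agent's optimum), which the paper's proof omits since the proposition only asserts achievability; that addition is harmless and strengthens the statement to an exact characterization of the minimum cost.
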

\begin{proof}
[Proof of Proposition \ref{prop:EAIR-opt-contract}] It suffices to
find a contract that implements $\mathcal{E}_{A}$ at a cost of $C(\mathcal{E}_{A})$.
As discussed in the main text, we can start from any contract that
implements $\mathcal{E}_{A}$ and then make the participation constraint
bind. 

The following contract $T$ is a valid construction,
\begin{equation}
T=\mathcal{E}_{P}^{-}\nabla+z\mathcal{E}_{P}^{-}\bm{1}_{M\times K}\text{ with }z=\dfrac{C(\mathcal{E}_{A})-\mu_{0}^{\intercal}\mathcal{E}_{P}\mathcal{E}_{P}^{-}\nabla\mathcal{E}_{A}^{\intercal}\mu_{0}}{\mu_{0}^{\intercal}\mathcal{E}_{P}\mathcal{E}_{P}^{-}\boldsymbol{1}_{M\times1}}.\label{eq:EAIR-contract}
\end{equation}

I first show that (\ref{eq:EAIR-contract}) implements $\mathcal{E}_{A}$.
(\ref{eq:EAIR-contract}) is obtained from (\ref{eq:imp-payment-rule})
by setting $W_{k}=\bm{0}$ and $Z=z\mathcal{E}_{P}^{-}\bm{1}_{M\times1}$.
Theorem \ref{thm:imp-contract-decomposition} says that (\ref{eq:EAIR-contract})
such a $T$ implements $\mathcal{E}_{A}$.

Next, I show that (\ref{eq:EAIR-contract}) makes the participation
constraint bind. Using Lemma (\ref{lem:P-problem-matrix}), the principal's
objective is
\begin{align*}
\mathbb{E}_{\mu\sim\left\langle \mathcal{E}\mid\mu_{0}\right\rangle ,y_{m}\sim\left\langle \mathcal{E}_{P}\mid\mu\right\rangle }\left[T(y_{m},s(\mu))\right] & =\mu_{0}^{\intercal}\mathcal{E}_{P}\mathcal{E}_{P}^{-}\nabla\mathcal{E}_{A}^{\intercal}\mu_{0}+z\mu_{0}^{\intercal}\mathcal{E}_{P}\mathcal{E}_{P}^{-}\bm{1}_{M\times K}\mathcal{E}_{A}^{\intercal}\mu_{0}\\
 & =\mu_{0}^{\intercal}\mathcal{E}_{P}\mathcal{E}_{P}^{-}\nabla\mathcal{E}_{A}^{\intercal}\mu_{0}+z\mu_{0}^{\intercal}\mathcal{E}_{P}\mathcal{E}_{P}^{-}\bm{1}_{M\times1}\\
 & =\mu_{0}^{\intercal}\mathcal{E}_{P}\mathcal{E}_{P}^{-}\nabla\mathcal{E}_{A}^{\intercal}\mu_{0}+\dfrac{C(\mathcal{E}_{A})-\mu_{0}^{\intercal}\mathcal{E}_{P}\mathcal{E}_{P}^{-}\nabla\mathcal{E}_{A}^{\intercal}\mu_{0}}{\mu_{0}^{\intercal}\mathcal{E}_{P}\mathcal{E}_{P}^{-}\boldsymbol{1}_{N\times1}}\mu_{0}^{\intercal}\mathcal{E}_{P}\mathcal{E}_{P}^{-}\bm{1}_{M\times1}\\
 & =C(\mathcal{E}_{A}).
\end{align*}
\end{proof}

\begin{proof}
[Proof of Proposition \ref{prop:LL-opt-contract}] When $\mathcal{E}_{P}$
has full row rank, we must have $W_{k}=\boldsymbol{0}$. Using Lemma
(\ref{lem:P-problem-matrix}), the optimal solution is simply to make
the limited liability constraint bind for every realization of $\mathcal{E}_{P}$.
Therefore, the contract in (\ref{prop:LL-opt-contract}) is optimal. 
\end{proof}

\begin{proof}
[Proof of Proposition \ref{prop:LL-cost-comparison}] I first rewrite
the principal's cost minimization problem in terms of the agent's
interim utilities assuming $\mathcal{E}_{A}$ is implementable under
$\mathcal{E}_{P}$. Let $u_{k}=\mathcal{E}_{P}\mathcal{E}_{P}^{-}\nabla_{k}+\mathcal{E}_{P}\tilde{Z}\in\mathbb{R}^{N}$
be the agent's state-dependent utility from reporting $x_{k}$. Observe
that 
\[
u_{k}-u_{1}=\mathcal{E}_{P}\mathcal{E}_{P}^{-}\left(\nabla_{k}-\nabla_{1}\right)=\nabla_{k}-\nabla_{1},
\]
where the last equality follows from the implementability condition
(\ref{eq:imp-payment-rule}) and that $\mathcal{E}_{P}\mathcal{E}_{P}^{-}$
is the orthogonal projection matrix onto $\operatorname{Col}\mathcal{E}_{P}$.
The principal's objective can then be rewritten in terms of $u_{1}$,
\begin{align*}
P_{Y}^{\intercal}TP_{X} & =\mu_{0}^{\intercal}\left(\mathcal{E}_{P}\mathcal{E}_{P}^{-}\nabla+\mathcal{E}_{P}\tilde{Z}\bm{1}_{1\times K}\right)\mathcal{E}_{A}^{\intercal}\mu_{0}\\
 & =\mu_{0}^{\intercal}u_{1}+\sum_{k=1}^{K}P_{X}^{k}\mu_{0}^{\intercal}\left(\nabla_{k}-\nabla_{1}\right)
\end{align*}
where $P_{X}^{k}:=\Pr(x_{k})$ is the unconditional probability of
$\mathcal{E}_{A}$ sending $x_{k}$. The objective only involves $u_{1}$
and its second term is a constant. Similarly, the IIR constraint becomes
$u_{k}=u_{1}+\nabla_{k}-\nabla_{1}\geq\boldsymbol{0},\forall1\leq k\leq K$
which can be summarized as $u_{1}\geq\underline{u}$ where $\underline{u}^{n}:=\max_{k}\left\{ \nabla_{1}^{n}-\nabla_{k}^{n}\right\} $.
The requirement of $\tilde{Z}\in\operatorname{Row}\mathcal{E}_{P}$
becomes $u_{1}\in\operatorname{Col}\mathcal{E}_{P}$ since $\tilde{Z}$
can be any vector in the row space of $\mathcal{E}_{P}$. Rewrite
the objective and the EPIR constraints with $u_{1}$, and the principal's
problem becomes
\begin{align}
\min_{u_{1}\in\mathbb{R}^{N},\tilde{W}\in\mathbb{R}^{M\times K}} & \:\mu_{0}^{\intercal}u_{1}\label{eq:EPIR-cost-min-u}\\
\text{s.t. } & u_{1}\in\operatorname{Col}\mathcal{E}_{P}\nonumber \\
 & \mathcal{E}_{P}\tilde{W}=\boldsymbol{0},\nonumber \\
 & \mathcal{E}_{P}^{-}\left(u_{1}+\nabla_{k}-\nabla_{1}\right)+\tilde{W}_{k}\geq0,\forall1\leq k\leq K,\nonumber 
\end{align}
where I omit a constant that only depends on $\mathcal{E}_{A}$ and
$C$ in the objective. This constant is irrelevant for the princpal's
decision problem, nor is it relevant for comparing the implementation
costs under different $\mathcal{E}_{P}$'s.

We are now ready to prove the proposition. Suppose $\mathcal{E}_{P}\in E^{M}$
and $\mathcal{E}_{P}'\in E^{M'}$ satisfy $\mathcal{E}_{P}\geq_{B,2}\mathcal{E}_{P}'$.
Then $\mathcal{E}_{P}'=\mathcal{E}_{P}G$ for some matrix $G\in\mathbb{R}^{M\times M'}$
with $G\geq0$. This immediately implies that $\operatorname{Col}\mathcal{E}_{P}\supseteq\operatorname{Col}\mathcal{E}_{P}'$.
Take any $C\in\mathcal{C}$ and $\mathcal{E}_{A}\in E$. Let $\nabla$
be the marginal cost matrix. If $\mathcal{E}_{A}$ is not implementable
under $\mathcal{E}_{P}'$, there is nothing to prove since $\kappa^{C,\mu_{0}}(\mathcal{E}_{A},\mathcal{E}_{P}')=+\infty$.
If $\mathcal{E}_{P}$ is implementable under $\mathcal{E}_{P}'$,
it must be also implementable under $\mathcal{E}_{P}$ due to the
column space inclusion. This means we can rewrite the principal's
problems under $\mathcal{E}_{P}$ and $\mathcal{E}_{P}'$ as (\ref{eq:EPIR-cost-min-u}).
In this case, it suffices to focus on what $u_{1}$ is feasible in
(\ref{eq:EPIR-cost-min-u}) since the principal's payoff depends only
on it. Say that a $u_{1}$ is feasible under $\mathcal{E}_{P}$ if
there exists some $\tilde{W}$ such that $\left(u_{1},\tilde{W}\right)$
satisfies the constraints in problem (\ref{eq:EPIR-cost-min-u}).
To show that the EPIR cost is always lower under $\mathcal{E}_{P}$,
it suffices to show that any $u_{1}'$ that is feasible under $\mathcal{E}_{P}'$
is also feasible under $\mathcal{E}_{P}$. Take any $u_{1}$ that
is feasible under $\mathcal{E}_{P}'$. We know that $u_{1}\in\operatorname{Col}\mathcal{E}_{P}G$
and that there exists $\tilde{W}'$ such that the constraints in (\ref{eq:EPIR-cost-min-u})
are satisfied. I have to show that $u_{1}$ is also feasible under
$\mathcal{E}_{P}$. Obviously, $u_{1}\in\operatorname{Col}\mathcal{E}_{P}$.
Let $\tilde{W}$ be defined by
\[
\tilde{W}_{k}:=G\tilde{W}_{k}'+\left(GG^{-}-I\right)\mathcal{E}_{P}^{-}\left(u_{1}+\nabla_{k}-\nabla_{1}\right),\forall1\leq k\leq K.
\]
I now show that this $\tilde{W}$ satisfies the constraints. The implementability
condition (\ref{eq:imp-cond}) implies that $u_{1}+\nabla_{k}-\nabla_{1}\in\operatorname{Col}\mathcal{E}_{P}G$.
We can then write $u_{1}+\nabla_{k}-\nabla_{1}=\mathcal{E}_{P}Gv$
for some vector $v$. We now check the constraints. 
\begin{align*}
\mathcal{E}_{P}\tilde{W}_{k} & =\mathcal{E}_{P}G\tilde{W}_{k}'+\mathcal{E}_{P}\left(GG^{-}-I\right)\mathcal{E}_{P}^{-}\mathcal{E}_{P}Gv\\
 & =\bm{0}+\mathcal{E}_{P}GG^{-}\mathcal{E}_{P}^{-}\mathcal{E}_{P}Gv-\mathcal{E}_{P}\mathcal{E}_{P}^{-}\mathcal{E}_{P}Gv\\
 & =\mathcal{E}_{P}Gv-\mathcal{E}_{P}Gv\\
 & =\bm{0},
\end{align*}
where the second line observes $\mathcal{E}_{P}G\tilde{W}'=\boldsymbol{0}$,
and the third line observes that $AA^{-}A=A$ for any matrix $A$
and its pseudo-inverse $A^{-}$. 
\[
\mathcal{E}_{P}^{-}\left(u_{1}+\nabla_{k}-\nabla_{1}\right)+\tilde{W}_{k}=GG^{-}\mathcal{E}_{P}^{-}\left(u_{1}+\nabla_{k}-\nabla_{1}\right)+G\tilde{W}_{k}'\geq\bm{0},\forall1\leq k\leq K
\]
follows from the presumption that $G^{-}\mathcal{E}_{P}^{-}\left(u_{1}+\nabla_{k}-\nabla_{1}\right)+\tilde{W}_{k}'\geq\bm{0},\forall1\leq k\leq K$,
and that $G\geq0$. 

\end{proof}

\begin{proof}
[Proof of Proposition \ref{prop:binary-LL-characterization}] Suppose
$N=2$ and $\mathcal{E}_{P}$ and $\mathcal{E}_{P}'$ are invertible
with $\mathcal{E}_{P}'=\mathcal{E}_{P}G$ for some $G\in\mathbb{R}^{2\times2}$.
Invertibility implies that $G\bm{1}=\bm{1}$. To see this, from $\mathcal{E}_{P}'=\mathcal{E}_{P}G$,
we have $\mathcal{E}_{P}G\bm{1}=\mathcal{E}_{P}'\bm{1}=\bm{1}$. Since
$\mathcal{E}_{P}$ is invertible, and $\mathcal{E}_{P}\bm{1}=\bm{1},$we
must have $G\bm{1}=\bm{1}$. 

By Proposition \ref{prop:LL-opt-contract}, $\mathcal{E}_{P}\geq_{\text{EPIR}}\mathcal{E}_{P}'$
if and only if
\begin{equation}
\mathcal{E}_{P}\min\mathcal{E}_{P}^{-}\nabla-\mathcal{E}_{P}^{\prime}\min\mathcal{E}_{P}^{\prime-}\nabla\geq\bm{0},\label{eq:EPIR-cost-inequality-without-mu0}
\end{equation}
for any $\nabla\in\mathcal{V}$, where $\mathcal{V}$ is the set of
(conformable) matrices $\nabla$ that satisfy the no dominance condition
in Lemma \ref{lem:nabla-no-dominance}. It suffices to show that (\ref{eq:EPIR-cost-inequality-without-mu0})
is equivalent to the condition in Proposition \ref{prop:binary-LL-characterization}.

I now simplify the problem. First, since the state space is binary,
Lemma \ref{lem:V-col-reduction} implies that it suffices to focus
on $\nabla\in\mathcal{V}\cup\mathbb{R}^{2\times2}$ in (\ref{eq:EPIR-cost-inequality-without-mu0}).
That is, it suffices to consider experiments $\mathcal{E}_{A}$'s
with two realizations. I will provide the formal argument for it
after the main proof. 

Since $\mathcal{E}_{P}$ and $\mathcal{E}_{P}'$ are binary-binary,
we can without loss assume 
\[
\mathcal{E}_{P}=\begin{bmatrix}a & 1-a\\
1-b & b
\end{bmatrix},\;\mathcal{E}_{P}'=\begin{bmatrix}a' & 1-a'\\
1-b' & b'
\end{bmatrix}
\]
with $a+b\geq1$ and $a'+b'\geq1$.\footnote{This can always be obtained by reordering the columns.}
I can rewrite (\ref{eq:EPIR-cost-inequality-without-mu0}) using the
functional form. Specifically, due to linearity, it suffices to consider
\[
\nabla=\begin{bmatrix}1 & 0\\
0 & 0
\end{bmatrix}\text{ and }\begin{bmatrix}0 & 0\\
0 & 1
\end{bmatrix}.
\]
 (\ref{eq:EPIR-cost-inequality-without-mu0}) then becomes the following
\begin{align*}
\ell_{2}-\ell_{1} & \geq\ell_{2}'-\ell_{1}',\\
\dfrac{1}{\ell_{1}}-\dfrac{1}{\ell_{2}} & \geq\dfrac{1}{\ell_{1}'}-\dfrac{1}{\ell_{2}'},\\
\ell_{1}/\ell_{2} & \leq\ell_{1}'/\ell_{2}',
\end{align*}
where the first two inequalites are the conditions we need, and the
last inequality is implied by the first two.

\end{proof}
\begin{lem}
\label{lem:V-col-reduction} Suppose $\mathcal{E}_{P}\in E^{M}$ and
$\mathcal{E}_{P}'\in E^{M'}$ have full row rank with $\mathcal{E}_{P}'=\mathcal{E}_{P}G$
for some $G\in\mathbb{R}^{M\times M'}$.\footnote{The existence of such $G$ is guaranteed by the full-row-rankness
of $\mathcal{E}_{P}$.} Let $\mathcal{V}$ be the set of conformable matrices $V$ such that
$\nabla=\mathcal{E}_{P}GV$ satisfies the no dominance condition.
(\ref{eq:EPIR-cost-inequality-without-mu0}) holds for any $V\in\mathcal{V}$
if and only if it holds for any $V\in\mathcal{V}\cup\mathbb{R}^{M'\times M'}$.
\end{lem}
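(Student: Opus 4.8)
The only-if direction is immediate, since $\mathcal{V}\cap\mathbb{R}^{M'\times M'}\subseteq\mathcal{V}$: if (\ref{eq:EPIR-cost-inequality-without-mu0}) holds for every $V\in\mathcal{V}$ it a fortiori holds for those with $M'$ columns. For the converse the plan is a column-deletion argument. Write $\nabla:=\mathcal{E}_{P}GV=\mathcal{E}_{P}'V$ and
\[
F(\nabla):=\mathcal{E}_{P}\operatorname{rowmin}\!\left(\mathcal{E}_{P}^{-}\nabla\right)-\mathcal{E}_{P}'\operatorname{rowmin}\!\left(\mathcal{E}_{P}^{\prime-}\nabla\right),
\]
so that (\ref{eq:EPIR-cost-inequality-without-mu0}) reads $F(\nabla)\geq\bm{0}$. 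Because $\mathcal{E}_{P}'$ has full row rank, $V\mapsto\mathcal{E}_{P}'V$ is onto, deleting or appending columns of $V$ corresponds to deleting or appending columns of $\nabla$, and a column subset of a matrix satisfying the no-dominance condition still satisfies it (no convex combination of the remaining columns can dominate a column if none of the larger family did, and duplicate columns stay permitted). Hence it suffices to show: if $F(\nabla)\geq\bm{0}$ for every no-dominance $\nabla$ with $M'$ columns, then $F(\nabla)\geq\bm{0}$ for every no-dominance $\nabla$. I would first record a harmless normalization: appending a zero column to $\mathcal{E}_{P}'$ (and to $G$) keeps $\mathcal{E}_{P}'$ a full-row-rank experiment with $\mathcal{E}_{P}'=\mathcal{E}_{P}G$, and since the pseudo-inverse of a matrix with an appended zero column is its pseudo-inverse with an appended zero row, it leaves $F(\nabla)$ unchanged; so one may assume $M'\geq M$ (in the binary-binary application $M=M'=2$ and this step is vacuous).

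The main step is the reduction itself. Fix a no-dominance $\nabla$ with $K$ columns, and call a column \emph{active} if it attains a row-wise minimum of $\mathcal{E}_{P}^{-}\nabla$; there are at most $M$ active columns, one per row of $\mathcal{E}_{P}^{-}\nabla$, with ties broken arbitrarily. Let $\nabla_{A}$ be the submatrix of active columns. Deleting the inactive columns leaves $\operatorname{rowmin}(\mathcal{E}_{P}^{-}\nabla)$ unchanged, but can only raise $\operatorname{rowmin}(\mathcal{E}_{P}^{\prime-}\nabla)$ entrywise, a minimum over fewer terms being weakly larger; since $\mathcal{E}_{P}'\geq\bm{0}$ this gives $\mathcal{E}_{P}'\operatorname{rowmin}(\mathcal{E}_{P}^{\prime-}\nabla_{A})\geq\mathcal{E}_{P}'\operatorname{rowmin}(\mathcal{E}_{P}^{\prime-}\nabla)$, hence $F(\nabla_{A})\leq F(\nabla)$. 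Now $\nabla_{A}$ is a no-dominance matrix with at most $M\leq M'$ columns; padding it with duplicates of one of its columns changes neither the no-dominance property nor $F$, so it produces a no-dominance $\widehat{\nabla}$ with exactly $M'$ columns and $F(\widehat{\nabla})=F(\nabla_{A})\leq F(\nabla)$. Since $\widehat{\nabla}$ equals $\mathcal{E}_{P}G\widehat{V}$ for $\widehat{V}:=\mathcal{E}_{P}^{\prime-}\widehat{\nabla}\in\mathbb{R}^{M'\times M'}$, we have $\widehat{V}\in\mathcal{V}\cap\mathbb{R}^{M'\times M'}$, so the hypothesis gives $F(\widehat{\nabla})\geq\bm{0}$ and therefore $F(\nabla)\geq F(\widehat{\nabla})\geq\bm{0}$, as desired.

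I expect the delicate point to be the direction of monotonicity of $F$ under column deletion: deleting columns raises the row-minima of \emph{both} $\mathcal{E}_{P}^{-}\nabla$ and $\mathcal{E}_{P}^{\prime-}\nabla$, and these have opposite effects on $F$. This is why one must delete only columns that are inactive for the dominating experiment $\mathcal{E}_{P}$ (so the positively weighted first term is untouched) and then use the sign $\mathcal{E}_{P}'\geq\bm{0}$ for the second term; the zero-column padding is what upgrades the resulting bound from $M$ to the advertised $M'$. The remaining points — that column subsets and column duplicates preserve the no-dominance condition and leave $F$ unchanged or monotone, and the pseudo-inverse identity for an appended zero column — are routine and I would dispatch them briefly.
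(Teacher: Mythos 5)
Your proof is correct and follows essentially the same column-deletion argument as the paper: delete columns that do not attain the row-wise minima of $\mathcal{E}_{P}^{-}\nabla$, which leaves the first term of (\ref{eq:EPIR-cost-inequality-without-mu0}) unchanged while weakly raising the second (via $\mathcal{E}_{P}'\geq\bm{0}$). The paper runs this reduction iteratively in contrapositive form, whereas you do it in one shot, explicitly identify the active columns, and add the duplicate/zero-column padding to land on exactly $M'$ columns — presentational refinements rather than a different approach.
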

\begin{proof}
The only if direction is obvious. For the if direction, suppose to
the contrary that (\ref{eq:EPIR-cost-inequality}) holds for any $V\in\mathcal{V}\cap\mathbb{R}^{M'\times M'}$,
but there exists some $\tilde{V}\in\mathcal{V}\cup\mathbb{R}^{M'\times\tilde{M}}$
with $\tilde{M}>M'$ that violates (\ref{eq:EPIR-cost-inequality}).
I show that this generates a contradiction because I can construct
some $V\in\mathcal{V}\cap\mathbb{R}^{M'\times M'}$ that violates
(\ref{eq:EPIR-cost-inequality}). Since $\tilde{V}$ violates (\ref{eq:EPIR-cost-inequality}),
we must have $\mathcal{E}_{P}\min G\tilde{V}-\mathcal{E}_{P}G\min\tilde{V}<0$
in at least one component. Since $G\tilde{V}$ has $\tilde{M}>M'$
columns but only $M'$ rows, there must be at least one column $\tilde{V}_{k}$
in $\tilde{V}$ that does not affect $\min G\tilde{V}$ in the sense
that removing this column does not affect the value of $\min G\tilde{V}$.\footnote{This holds even if there are ties. Suppose both $\tilde{V}_{1}$ and
$\tilde{V}_{2}$ minimizes $G^{m}\tilde{V}_{k}$ over $k$. Removing
either $\tilde{V}_{1}$ or $\tilde{V}_{2}$ from $\tilde{V}$ does
not change this minimum.} Remove this column from $\tilde{V}$. By construction, $\min G\tilde{V}$
is not affected. On the other hand, this must weakly raise the value
of $\min\tilde{V}$ since the set to take the minimum over shrinks.
This means that $V$ defined as $\tilde{V}$ without $\tilde{V}_{k}$
also violates (\ref{eq:EPIR-cost-inequality}) for some $\mu_{0}$.
We can keep doing this as long as $\tilde{V}$ has more columns than
rows. This iterative process must end up with some $V$ with no more
than $M'$ columns and it violates (\ref{eq:EPIR-cost-inequality})
for some $\mu_{0}$. This completes the proof.
\end{proof}

\section{Information Cost Functions \label{appsec:smooth-costs}}

This appendix provides details on the agent's information cost functions.
Appendix Section \ref{appsubsec:post-sep-details} supplements Section
\ref{subsec:info-cost} by formally specifying the differentiability
assumption for posterior separable cost functions. Appendix Section
\ref{appsubsec:smooth-costs} extends the model beyond the posterior
separable costs and presents the required assumptions for a more general
class of smooth functions. Appendix Section \ref{appsubsec:smooth-costs-results}
then shows that these assumptions lead to the same set of results
in the main text. 

\subsection{Differentiability under Posterior Separability \label{appsubsec:post-sep-details}}

In the main text, I focus on the case where the agent's information
cost is posterior separable and satisfies Assumptions \ref{assu:tech-post-sep}-\ref{assu:inf-slope-post-sep}.
Here I provide more details on Assumption \ref{assu:smooth-post-sep}
about the differentiability of the cost function. 

The purpose of Assumption \ref{assu:smooth-post-sep} is to define
the marginal cost of information at any posterior belief. In the case
of posterior separable costs, this requires $c$ to admit a (Gateaux)
derivative. 
\begin{defn}
[Gateaux Derivative]\label{def:MC-post-sep} Define the directional
derivative of $c:\Delta\Omega\to\bar{\mathbb{R}}:=\mathbb{R}\cup\{\pm\infty\}$
at some $\mu\in\Delta\Omega$ in the direction of $\mu'\in\Delta\Omega$
by 
\[
d_{+}c\left(\mu;\mu'\right):=\lim_{\epsilon\to0^{+}}\dfrac{1}{\epsilon}\left[c\left(\mu+\epsilon(\mu'-\mu)\right)-c\left(\mu\right)\right].
\]
Say that $\nabla_{c}:\Delta\Omega\to\bar{\mathbb{R}}^{N}$ is a (Gateaux)
derivative of $c$ if, at every $\mu\in\Delta\Omega$, $\mu\cdot\nabla_{c}(\mu)=c(\mu)$
and 
\begin{equation}
d_{+}c\left(\mu;\mu'\right)=(\mu'-\mu)\cdot\nabla_{c}(\mu)\label{eq:gateaux-def}
\end{equation}
in every direction $\mu'\in\Delta\Omega$.
\end{defn}
This is the usual definition for derivatives of functions in $\mathbb{R}^{N}$
but with a normalization convention to simplify notations. If $\nabla_{c}(\mu)$
is a derivative of $c$ at $\mu$, then so is $\nabla_{c}(\mu)+k$
for any constant $k\in\mathbb{R}$. Algebraically, this is because
the constant $k$ cancels out in Equation (\ref{eq:gateaux-def}).
In other words, the derivative only encodes the slope of the supporting
hyperplane to $c$ at $\mu$, not its intercept. 

To also encode the intercept of the supporting hyperplane using this
additional degree of freedom, I impose the normalization condition
$\mu\cdot\nabla_{c}(\mu)=c(\mu)$ which implicitly defines the constant
$k$, following \citet{lipnowski2022predicting}. This way, the supporting
hyperplane to $c$ at $\mu$ can be compactly written as,
\[
\left\{ \left(\mu',z\right):\mu'\in\Delta\Omega,z\in\mathbb{R},\mu'\cdot\nabla_{c}(\mu)=z\right\} .
\]
Geometrically, as shown in Figure \ref{fig:marginal-cost}, the marginal
cost $\nabla_{c}(\tilde{\mu})$ at some $\tilde{\mu}$ is given by
the values of the supporting hyperplane of $c$ at $\tilde{\mu}$
when evaluated at each state. 

\begin{figure}[th]
\begin{centering}
\caption{Marginal Cost of Information \label{fig:marginal-cost}}
\par\end{centering}
\begin{centering}
\medskip{}
\par\end{centering}
\begin{centering}
\begin{minipage}[t]{0.5\columnwidth}%
\begin{center}
\includegraphics[width=1\columnwidth]{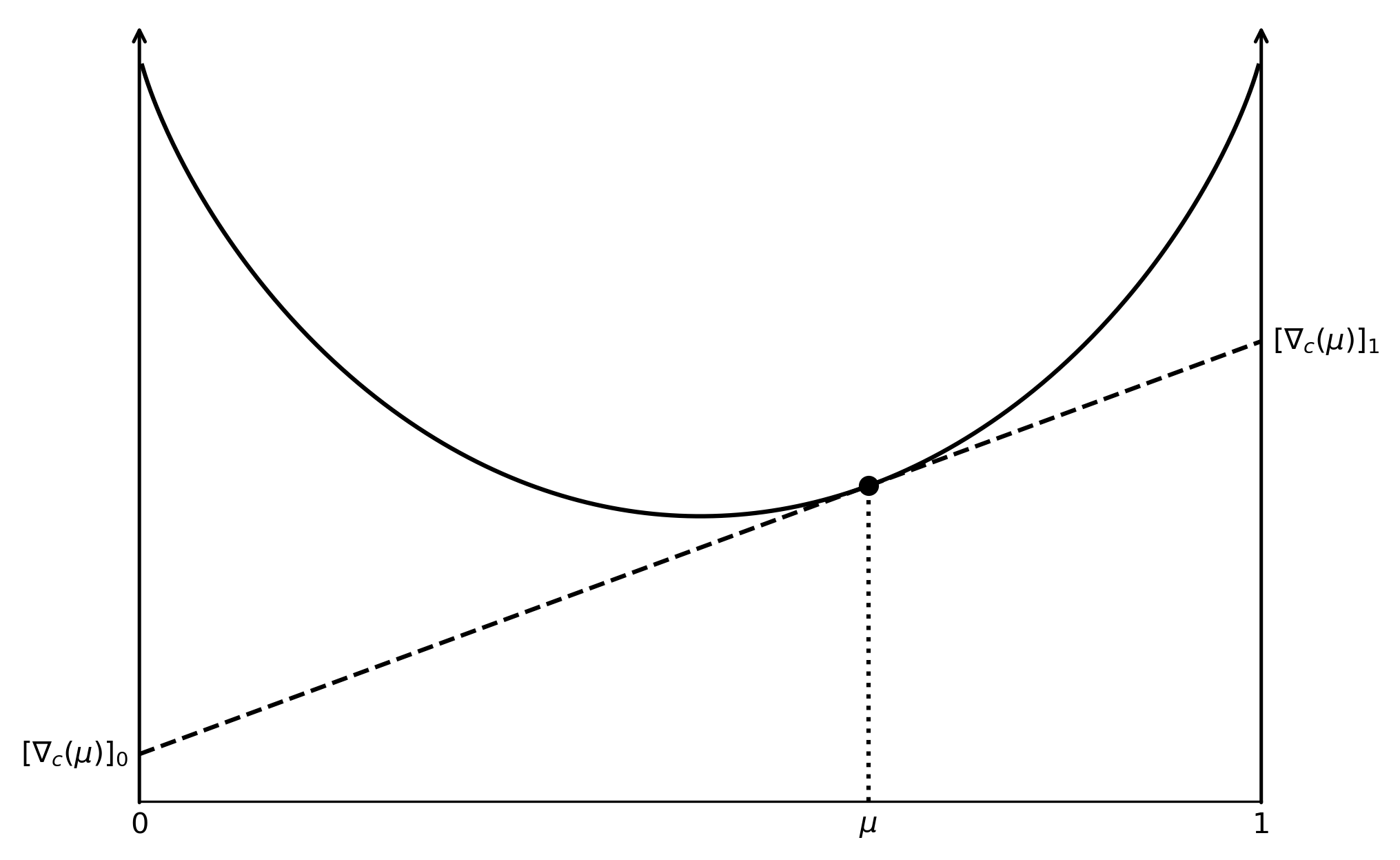}
\par\end{center}%
\end{minipage}
\par\end{centering}
\centering{}%
\begin{minipage}[t]{0.9\columnwidth}%
\begin{singlespace}
{\scriptsize Notes: This figure illustrates the cost and the marginal
cost of information when the state space is binary $\Omega=\{\omega_{0},\omega_{1}\}$.
The horizontal axis is the belief $\mu\in[0,1]$ representing the
probability of $\omega_{1}$. The vertical axis is the cost. The solid
convex curve is the posterior cost function $c$. The dashed line
is the supporting hyperplane of $c$ at belief $\tilde{\mu}$. The
marginal cost $\nabla_{c}(\tilde{\mu})=\left(\left[\nabla_{c}(\tilde{\mu})\right]_{0},\left[\nabla_{c}(\tilde{\mu})\right]_{1}\right)$,
under the normalization introduced above, is given by the values of
this supporting hyperplane at $\mu=0$ and $\mu=1$. }{\scriptsize\par}
\end{singlespace}

\end{minipage}
\end{figure}

\subsection{General Cost Functions \label{appsubsec:smooth-costs}}

The information cost function $C:\Delta\Delta\Omega\to\mathbb{R}_{+}\cup\{+\infty\}$
is assumed to satisfy Assumptions \ref{assu:tech}-\ref{assu:inf-slope}.
They generalize Assumptions \ref{assu:tech-post-sep}-\ref{assu:inf-slope-post-sep}
in the main text.

\begin{assumption}
\label{assu:tech}The agent’s information cost $C$ is 
\begin{enumerate}
\item Lower semi-continuous;
\item Convex;
\item Zero cost at no information: $C(\underline{\mathcal{E}})=0$ where
$\underline{\mathcal{E}}$ is an uninformative experiment;
\item (Blackwell) monotone: $\mathcal{E}\geq_{B}\mathcal{E}'\Rightarrow C(\mathcal{E})\geq C(\mathcal{E}')$.\footnote{Abusing the notation, I use $\mathcal{E}$ and $\mathcal{E}'$ to
also denote the distributions of posteriors they induce.}
\end{enumerate}
\end{assumption}
Assumption \ref{assu:tech} generalizes Assumption \ref{assu:tech-post-sep}.
To see the connections, lower semi-continuity and monotonicity of
$C$, as well as $C(\underline{\mathcal{E}})=0$ follow directly from
Assumption \ref{assu:tech-post-sep} when $C$ is posterior separable.
Convexity is an additional requirement that says that the agent can
weakly save costs by randomization. It keeps the agent's problem well-behaved
and is automatically satisfied when $C$ is posterior separable. We
do not have to assume convexity when $C$ is posterior separable because
a posterior separable cost is affine in the posterior distribution
and therefore always convex by construction. It is also worth noting
that assuming convexity and monotonicity of $C$ is without loss \citep{caplin2015revealed}.

Smoothness requires additional work. Following \citet{lipnowski2022predicting},
I assume the cost function $C$ is differentiable with a differentiable
derivative. This again allows me to define the marginal cost of information
at any posterior belief induced by an experiment. To formally state
this assumption, recall the following definitions. 
\begin{defn}
\label{def:general-derivatives}Define the (directional) derivative
of $C:\Delta\Delta\Omega\to\mathbb{R}_{+}\cup\{+\infty\}$ at some
$\mathcal{E}\in\Delta\Delta\Omega$ in the direction of $\mathcal{E}'\in\Delta\Delta\Omega$
by
\[
d_{+}C(\mathcal{E};\mathcal{E}'):=\lim_{\epsilon\to0^{+}}\dfrac{1}{\epsilon}\left[C\left(\mathcal{E}+\epsilon(\mathcal{E}'-\mathcal{E})\right)-C\left(\mathcal{E}\right)\right].
\]
Say that $c_{\mathcal{E}}:\Delta\Omega\to\bar{\mathbb{R}}$ is a
(Gateaux) derivative of $C$ at some $\mathcal{E}\in\Delta\Delta\Omega$
if $\int c_{\mathcal{E}}(\mu)\mathcal{E}(d\mu)$ is finite and for
every $\mathcal{E}'$ with $C(\mathcal{E}')<+\infty$, 
\[
d_{+}C(\mathcal{E};\mathcal{E}')=\int_{\Delta\Omega}c_{\mathcal{E}}(\mu)\left(\mathcal{E}'-\mathcal{E}\right)(d\mu).
\]
Say that $C$ is differentiable at $\mathcal{E}\in\Delta\Delta\Omega$
if it admits a derivative $c_{\mathcal{E}}$.

Furthermore, define the derivative of $c_{\mathcal{E}}:\Delta\Omega\to\bar{\mathbb{R}}$
at some $\mu\in\Delta\Omega$ in the direction of $\mu'\in\Delta\Omega$
by 
\[
d_{+}c_{\mathcal{E}}(\mu;\mu'):=\lim_{\epsilon\to0^{+}}\dfrac{1}{\epsilon}\left[c_{\mathcal{E}}\left(\mu+\epsilon(\mu'-\mu)\right)-c_{\mathcal{E}}\left(\mu\right)\right].
\]
Say that $\nabla_{c_{\mathcal{E}}}:\Delta\Omega\to\bar{\mathbb{R}}^{N}$
is a (Gateaux) derivative of $c_{\mathcal{E}}$ if, at every $\mu\in\operatorname{Supp}\mathcal{E}$,
$\mu\cdot\nabla_{c_{\mathcal{E}}}(\mu)=c_{\mathcal{E}}(\mu)$ and
\[
d_{+}c_{\mathcal{E}}(\mu;\mu')=(\mu'-\mu)\cdot\nabla_{c_{\mathcal{E}}}(\mu)
\]
in every direction $\mu'\in\Delta\Omega$.\footnote{The requirement of $\mu\cdot\nabla_{c_{\mathcal{E}}}(\mu)=c_{\mathcal{E}}(\mu)$
is a normalization so that $\nabla_{c_{\mathcal{E}}}(\mu)$ represents
the subgradient of $c_{\mathcal{E}}$ at $\mu$.}
\end{defn}
\begin{assumption}
\label{assu:smooth}The agent’s information cost $C$ is smooth, that
is, $C$ is 
\begin{enumerate}
\item Differentiable: For every $\mathcal{E}$ with $C(\mathcal{E}')<+\infty$,
$C$ admits a derivative $c_{\mathcal{E}}$;
\item Iteratively differentiable: For every $\mathcal{E}$ with $C(\mathcal{E}')<+\infty$,
the derivative $c_{\mathcal{E}}$ again admits a derivative $\nabla_{c_{\mathcal{E}}}$.
\end{enumerate}
\end{assumption}
Compared to Assumption \ref{assu:smooth-post-sep}, Assumption \ref{assu:smooth}.1
requires additionally that $C$ is differentiable, otherwise its derivative
$c_{\mathcal{E}}$ is not well-defined. Assumption \ref{assu:smooth}.2
requires that the derivative $c_{\mathcal{E}}$ at any feasible $\mathcal{E}$
is again differentiable. This is exactly Assumption \ref{assu:smooth-post-sep}
with one caveat that, the iterative derivative $\nabla_{c_{\mathcal{E}}}$
now depends on both the posterior distribution $\mathcal{E}$ and
the induced posterior $\mu$. The dependence on $\mathcal{E}$ comes
from relaxing posterior separability because the price of posterior
$\mu$ now depends also on what information the agent is acquiring. 

Similar to Assumption \ref{assu:smooth-post-sep}, Assumption \ref{assu:smooth}
allows me to define the marginal cost of learning $\nabla_{c_{\mathcal{E}}}$,
which now depends on both $\mathcal{E}$ and $\mu$. To see this connection,
start from a differentiable cost $C$. From Definition \ref{def:general-derivatives},
differentiability says that $C$ prices marginal changes in information
in a posterior-separable manner. It can then be locally approximated
around any feasible $\mathcal{E}$ by a posterior-separable cost $C_{c}(\mathcal{E}):=\int c_{\mathcal{E}}(\mu)\mathcal{E}(d\mu)$.
The interpretation of $c_{\mathcal{E}}(\mu)$ is then the approximate
price of the posterior belief $\mu$. As a result, the same intuition
from the posterior separable costs also follows here. Consider a marginal
change to $\mathcal{E}$ by changing some posterior $\mu\in\operatorname{Supp}\mathcal{E}$.
The iterative derivative $\nabla_{\mu}c_{\mathcal{E}}$ then describes
how this approximate price changes when the posterior $\mu$ changes.

Lastly, sometimes I assume that it is infinitely costly to rule out
any state. This is a restatement of \ref{assu:inf-slope-post-sep}
without posterior separability.
\begin{assumption}
\label{assu:inf-slope}The agent’s information cost $C$ has infinite
slopes at the boundary: If $\mathcal{E}$ induces any non-fully-supported
posterior, then 
\[
d_{+}C(\mathcal{E};\underline{\mathcal{E}})=-\infty.
\]
\end{assumption}

\subsection{Results under General Cost Functions \label{appsubsec:smooth-costs-results}}

I now generalize the main results to general smooth cost function
under the assumptions in Appendix \ref{appsubsec:smooth-costs}. All
the results remain valid under general cost functions when we replace
Assumptions \ref{assu:tech-post-sep}-\ref{assu:inf-slope-post-sep}
with Assumptions \ref{assu:tech}-\ref{assu:inf-slope}, respectively,
and change the marginal cost vectors from $\nabla_{c}$ to $\nabla_{c_{\mathcal{E}}}$
as they now also depend on $\mathcal{E}$. 

The only result that requires additional proof is Lemma \ref{lem:imp-matrix-eqn},
as the first order condition becomes less obvious without the posterior
separable form. To that end, I follow Lemma 1 of \citet{georgiadis2024flexible}. 
\begin{proof}
[Proof of Lemma  \ref{lem:imp-matrix-eqn} for General Smooth Costs]
The agent's problem is given by \ref{eq:A-problem} and does not have
the posterior separable form. Due to differentiability, around any
feasible $\mathcal{E}$, $C$ can be approximated by its local affine
approximation, which has a posterior separable form,
\[
C_{c}(\mathcal{E}'):=\int c_{\mathcal{E}}(\mu)\mathcal{E}'(d\mu).
\]
To be more specific, suppose $c$ is a derivative of $C$ at some
feasible $\mathcal{E}^{*}$ and $C_{c}(\cdot)$ is the local affine
approximation of $C$ around $\mathcal{E}^{*}$, then we have, 
\[
\mathcal{E}^{*}\in\underset{\mathcal{E}\in E^{K}}{\arg\max}\:\mathbb{E}_{\mu_{k}\sim\mathcal{E}}\left[\mu_{k}\cdot\mathcal{E}_{P}t_{k}\right]-C(\mathcal{E})
\]
if and only if 
\[
\mathcal{E}^{*}\in\underset{\mathcal{E}\in E^{K}}{\arg\max}\:\mathbb{E}_{\mu_{k}\sim\mathcal{E}}\left[\mu_{k}\cdot\mathcal{E}_{P}t_{k}\right]-C_{c}(\mathcal{E}).
\]
This is indeed Lemma 1 of \citet{georgiadis2024flexible}. Once the
above equivalence is established, the remainder of the proof follows
exactly as before, with $c_{\mathcal{E}}$ replacing $c$. 
\end{proof}
After proving Lemma \ref{lem:imp-matrix-eqn} in the main text for
general smooth costs, Theorems \ref{thm:imp} and \ref{thm:imp-contract-decomposition}
follow immediately, with Assumptions \ref{assu:tech-post-sep}-\ref{assu:inf-slope-post-sep}
replaced by Assumptions \ref{assu:tech}-\ref{assu:inf-slope}, and
the marginal cost vectors redefined as $\nabla_{k}:=\nabla_{c_{\mathcal{E}}}(x_{k})$.
All other results then follow from that.

\section{Implementability with Corner Solutions \label{appsec:imp-corner-solutions}}

In this appendix, I present the implementability result without Assumption
\ref{assu:inf-slope-post-sep}. I first discuss what will happen without
Assumption \ref{assu:inf-slope-post-sep}. In this case, nothing changes
if the principal wants to implement some $\mathcal{E}_{A}$ that never
rules out any state. The FOCs in Equation (\ref{eq:imp-matrix-eqn})
must still hold as equalities due to the interiority of beliefs. However,
when $\mathcal{E}_{A}$ has at least one non-fully-supported posterior,
the characterization will be more complicated. Intuitively, the principal
may provide enough incentives for the agent to rule out a state following
some realization(s), which is not possible under Assumption \ref{assu:inf-slope-post-sep}.
Under such incentives, the agent can at most lower the posterior belief
of this state to zero but not any further. As a result, the FOCs may
hold as an inequality and additional multipliers are required, as
summarized in Theorem \ref{thm:imp-corner}. The only difference from
Theorem \ref{thm:imp} is the introduction of the additional multipliers
$\eta_{k}$ on the constraint that the probabilities must be positive. 
\begin{thm}
[Characterization of Implementability, Potential Corner Solutions]\label{thm:imp-corner}
Suppose $c$ satisfies Assumptions \ref{assu:tech-post-sep}-\ref{assu:smooth-post-sep}.
An experiment $\mathcal{E}_{A}\in E^{K}$ with posteriors $\left\{ x_{k}\right\} _{k=1}^{K}$
is implementable under $\mathcal{E}_{P}$ if and only if $C(\mathcal{E}_{A})<+\infty$
and $\exists\eta_{k}\in\mathbb{R}_{+}^{N}$ for $1\leq k\leq K$ such
that 
\begin{equation}
\nabla_{k}+\eta_{k}-\nabla_{k'}-\eta_{k'}\in\operatorname{Col}\mathcal{E}_{P},\forall1\leq k,k'\leq K,\label{eq:imp-cond-corner}
\end{equation}
with 
\[
\gamma_{k}^{n}x_{k}^{n}=0,\forall1\leq k\leq K,1\leq n\leq N,
\]
where $\nabla_{k}:=\nabla_{x_{k}}c$ is the marginal cost vector at
posterior $x_{k}$.
\end{thm}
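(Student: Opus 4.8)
The plan is to mirror the proof of Theorem~\ref{thm:imp}, replacing the step where Assumption~\ref{assu:inf-slope-post-sep} is invoked to force interior posteriors with an explicit treatment of the non-negativity constraints on $\mathcal{E}_A$'s posteriors. First I would redo the Lagrangian analysis behind Lemma~\ref{lem:imp-matrix-eqn}. Writing the agent's problem as the choice of a Bayes-plausible posterior distribution maximizing \eqref{eq:A-problem-post-sep} subject to \eqref{eq:Bayes-plausibility} and to each induced posterior lying in $\Delta\Omega$, I attach the multiplier $\lambda\in\mathbb{R}^N$ to Bayes plausibility and, for each report $x_k$, a multiplier $\eta_k\in\mathbb{R}_+^N$ to the constraint $x_k\geq\boldsymbol{0}$ (the affine constraint $\boldsymbol{1}\cdot x_k=1$ can be absorbed because $\boldsymbol{1}\in\operatorname{Col}\mathcal{E}_P$, since $\mathcal{E}_P\boldsymbol{1}=\boldsymbol{1}$). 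Because the Lagrangian is concave in the posteriors and the feasible set is convex, the resulting KKT system is both necessary and sufficient for $\mathcal{E}_A$ to solve the agent's problem. In matrix form it reads $\mathcal{E}_P T=\nabla+\mathcal{H}+\lambda\boldsymbol{1}_{1\times K}$ together with complementary slackness $\eta_k^n x_k^n=0$, where $\mathcal{H}:=\begin{bmatrix}\eta_1&\cdots&\eta_K\end{bmatrix}$ (with the sign convention of the statement). The one genuinely new point relative to Lemma~\ref{lem:imp-matrix-eqn} is the necessity of this system at a boundary posterior: there, only perturbation directions $d$ with $\boldsymbol{1}\cdot d=0$ and $d^n\geq0$ whenever $x_k^n=0$ are feasible, so a Farkas / LP-duality argument is needed to conclude that the gradient of the Lagrangian at $x_k$ equals a multiple of $\boldsymbol{1}$ plus a one-sided correction supported on $\{n:x_k^n=0\}$, which is exactly the slot filled by $\eta_k$.

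Second, with this KKT characterization in hand, I would treat $\eta_1,\dots,\eta_K$ as parameters and ask, for each admissible choice, when the equation $\mathcal{E}_P T=(\nabla+\mathcal{H})+\lambda\boldsymbol{1}_{1\times K}$ is solvable in $(T,\lambda)$. This is formally identical to the solvability question resolved in the proof of Theorem~\ref{thm:imp} with $\nabla$ replaced by $\nabla+\mathcal{H}$: applying the Baksalary--Kala result (Theorem~\ref{thm:AX-YB=00003DC}) and the same invertible column operation that takes column-wise differences, solvability is equivalent to $(\nabla_k+\eta_k)-(\nabla_{k'}+\eta_{k'})\in\operatorname{Col}\mathcal{E}_P$ for all $1\leq k,k'\leq K$. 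Quantifying over the admissible multipliers $\eta_k\in\mathbb{R}_+^N$ satisfying $\eta_k^n x_k^n=0$ then gives exactly Condition~\eqref{eq:imp-cond-corner}. The requirement $C(\mathcal{E}_A)<+\infty$ is necessary for $\mathcal{E}_A$ to be optimal in the agent's problem (otherwise the agent's payoff is $-\infty$, while the uninformative experiment delivers a non-negative payoff under limited liability) and is assumed on the sufficiency side; this part is unchanged from Theorem~\ref{thm:imp}.

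I expect the main obstacle to be the necessity direction at boundary posteriors, namely verifying that optimality of $\mathcal{E}_A$ really does produce multipliers $\eta_k\geq\boldsymbol{0}$ with the stated complementary slackness, rather than merely a one-sided inequality with no clean linear-algebraic form. The delicate points are (i) that a constraint qualification holds so that KKT multipliers exist --- this follows because the constraints $x_k\geq\boldsymbol{0}$ and $\boldsymbol{1}\cdot x_k=1$ are affine --- and (ii) that the $\boldsymbol{1}$-direction ambiguity in the gradient of $c$ and the per-column normalizing constants are harmless, since $\boldsymbol{1}\in\operatorname{Col}\mathcal{E}_P$ and can therefore be folded into $T$ and $\lambda$ without altering the subspace condition. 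Everything downstream --- the reduction to the matrix equation and the invocation of Theorem~\ref{thm:AX-YB=00003DC} --- is then a verbatim repetition of the argument already given for Theorem~\ref{thm:imp}.
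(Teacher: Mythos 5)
Your proposal is correct and follows essentially the same route as the paper: attach non-negative multipliers $\eta_k$ to the constraints $x_k\geq\boldsymbol{0}$, obtain the KKT system $\mathcal{E}_P T=\nabla+\eta+\lambda\boldsymbol{1}_{1\times K}$ with complementary slackness (which is both necessary and sufficient by convexity), and then apply the Baksalary--Kala solvability criterion with $\nabla$ replaced by $\nabla+\eta$. You supply more detail than the paper does on the constraint qualification and the necessity of the multipliers at boundary posteriors, but the argument is the same.
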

\begin{proof}
Following the proof of Theorem \ref{thm:imp}, we can rewrite the
agent's problem as choosing a distribution of posteriors to maximize
(\ref{eq:A-problem-post-sep}) subject to the Bayes plausibility constraint
(\ref{eq:Bayes-plausibility}) and the constraint that probabilities
must be positive 
\begin{equation}
\mu_{k}\geq0,\forall1\leq k\leq K.\label{eq:prob-positive}
\end{equation}
In Theorem \ref{thm:imp}, we do not have to worry about (\ref{eq:prob-positive})
since non-interior posteriors will result in an infinite marginal
cost due to Assumption \ref{assu:inf-slope-post-sep} and such experiments
can never be implementable. Without Assumption \ref{assu:inf-slope-post-sep},
we have to take care of (\ref{eq:prob-positive}). Let $\eta_{k}\in\mathbb{R}_{+}^{N}$
be the multipliers on (\ref{eq:prob-positive}). Using the same techniques,
the first order conditions in the matrix form is, 
\begin{equation}
\mathcal{E}_{P}T=\nabla+\lambda\bm{1}_{1\times K}+\eta,\label{eq:imp-matrix-eqn-corner}
\end{equation}
where $\eta=\begin{bmatrix}\eta_{1} & \eta_{2} & \cdots & \eta_{K}\end{bmatrix}$
is the matrix of the additional multipliers. Using the same linear
algebra trick as in the proof of Theorem \ref{thm:imp}, (\ref{eq:imp-matrix-eqn-corner})
is equivalent to Condition (\ref{eq:imp-cond-corner}).
\end{proof}

\section{Unique Implementability \label{appsec:imp-strict}}

This appendix extends the analysis in Section \ref{sec:implementability}
and studies the question of unique implementability, that is, whether
the principal can design a contract so that the target experiment
is the agent's unique optimal choice. 
\begin{defn}
[Unique Implementability] Say that a contract $T$ under $\mathcal{E}_{P}$
uniquely implements $\mathcal{E}_{A}$ if $\mathcal{E}_{A}$ is the
unique optimal solution to the agent's problem (\ref{eq:agent-IC}).
Say that $\mathcal{E}_{A}$ is uniquely implementable under $\mathcal{E}_{P}$
if there exists some contract $T$ under $\mathcal{E}_{P}$ that uniquely
implements $\mathcal{E}_{A}$. 
\end{defn}
Unique implementability poses two additional requirements. The first
is the strict convexity of the information cost function, otherwise
the agent can perturb the experiment within the affine portion of
the cost function and maintain optimality. Define strict convexity
around a point as follows. 
\begin{defn}
[Strict Convexity Around a Point] Say that a convex function $f:X\to\bar{\mathbb{R}}$
is strictly convex around some $x\in X$ if there exists an open neighborhood
$U\subseteq X$ containing $x$ such that for all $x',x''\in U$ with
$x'\neq x''$ and all $\alpha\in(0,1)$, 
\[
f\left(\alpha x'+(1-\alpha)x''\right)<\alpha f(x')+(1-\alpha)f(x'').
\]
\end{defn}
I need the information cost function $C$ to be strictly convex around
$\mathcal{E}_{A}$. If $C$ is posterior separable with posterior
cost $c$, this reduces to the strict convexity of $c$ around the
posteriors induced by $\mathcal{E}_{A}$.

Second, I need the induced posteriors to be linearly independent.
Intuitively, this is because the posterior separable cost is affine
in the distribution of posteriors. Therefore, the first order condition
(\ref{eq:imp-matrix-eqn}) only pins down the marginal cost at every
posterior induced by $\mathcal{E}_{A}$ without restricting the probabilities
of the posteriors. With linearly dependent posteriors, there can be
multiple Bayes plausible posterior distributions satisfying the first
order condition, breaking the uniqueness of $\mathcal{E}_{A}$. This
argument holds even if the information cost is not posterior separable,
as long as it is smooth in the way defined in Appendix Section (\ref{appsubsec:smooth-costs}).
This is because smooth cost functions can be locally approximated
by posterior separable cost. 

The next proposition characterizes the conditions for unique implementability.
\begin{prop}
[Unique Implementability] An experiment $\mathcal{E}_{A}\in E^{K}$
with posteriors $\left\{ x_{k}\right\} _{k=1}^{K}$ is uniquely implementable
if and only if (1) it is implementable, (2) $c$ is strictly convex
around every $x_{k}$, and (3) the induced posteriors $x_{k}$ are
linearly independent.\footnote{If the information cost function $C$ is not posterior separable,
simply change (2) to ``$C$ is strictly convex around $\mathcal{E}_{A}$''.}
\end{prop}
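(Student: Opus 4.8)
The plan is to strip the contract out of the problem first. By Lemma~\ref{lem:imp-matrix-eqn}, a contract $T$ implements $\mathcal{E}_A$ if and only if $\mathcal{E}_P T_k=\nabla_k+\lambda$ for all $k$ and some $\lambda\in\mathbb{R}^N$, so under \emph{any} implementing contract the agent's state‑dependent utility from reporting $x_k$ is $u_k=\mathcal{E}_P T_k=\nabla_k+\lambda$. Substituting into the agent's objective (\ref{eq:A-problem-post-sep}) and using Bayes plausibility, the $\lambda$‑term equals the choice‑independent constant $\lambda\cdot\mu_0$; hence every implementing contract induces exactly the same set of optimal experiments. Consequently $\mathcal{E}_A$ is uniquely implementable if and only if it is implementable and it is the unique maximizer of
\[
\max_{\mathcal{E}}\ \sum_k p_k\big(\mu_k\cdot\nabla_k-c(\mu_k)\big)\quad\text{subject to}\quad\sum_k p_k\mu_k=\mu_0,
\]
where $(p_k,\mu_k)$ are the weights and posteriors of $\mathcal{E}$ with report $k$ attached to $\mu_k$, and the vectors $\nabla_k=\nabla_{x_k}c$ are fixed by $\mathcal{E}_A$ (I would read uniqueness in the original choice set $E\times\{\text{reporting rules}\}$ of (\ref{eq:A-problem}), of which $E^K$ is the usual reduction, since the reduction is without loss for the value but not obviously for uniqueness).

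Next I would analyze this reduced program. Write $\psi_k(\mu):=\mu\cdot\nabla_k-c(\mu)$. Convexity of $c$ and the normalization $x_k\cdot\nabla_k=c(x_k)$ give the subgradient inequality $c(\mu)\ge\nabla_k\cdot\mu$, so $\psi_k\le 0$ everywhere with equality exactly on the contact set $S_k:=\{\mu\in\Delta\Omega:c(\mu)=\nabla_k\cdot\mu\}\ni x_k$; since $\mathcal{E}_A$ attains the value $0$, an experiment is optimal if and only if $\mu_k\in S_k$ whenever $p_k>0$. Now $S_k$ is the zero set of the convex function $\mu\mapsto c(\mu)-\nabla_k\cdot\mu$, hence convex; if $c$ is strictly convex around $x_k$ then $x_k$ is the strict minimizer of that function on a neighborhood, which together with convexity of $S_k$ forces $S_k=\{x_k\}$. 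Finally, since $\boldsymbol 1\cdot x_k=\boldsymbol 1\cdot\mu_0=1$, the constraint $\sum_k p_k=1$ is already implied by $\sum_k p_k\mu_k=\mu_0$, so once the posteriors are pinned to linearly independent vectors $x_k$ the weights solving $\sum_k p_k x_k=\mu_0$ are unique and coincide with those of $\mathcal{E}_A$.

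These facts assemble into both directions. For ``if'': (1) supplies an implementing contract, (2) gives $S_k=\{x_k\}$ for every $k$, so every optimal experiment is supported on $\{x_1,\dots,x_K\}$ with matching recommendations, and (3) then pins the weights, leaving $\mathcal{E}_A$ as the unique optimum. For ``only if'': (1) is immediate; if the $x_k$ are linearly dependent, pick $v\neq\boldsymbol 0$ with $\sum_k v_k x_k=\boldsymbol 0$ (whence $\sum_k v_k=0$), and the experiment with posteriors $x_k$ and weights $p^A_k+\varepsilon v_k$ is, for small $\varepsilon$, a distinct optimum — nonnegativity survives after discarding any null realization of $\mathcal{E}_A$ — contradicting uniqueness; and if $c$ fails to be strictly convex around some $x_k$, replacing the posterior $x_k$ by a cost‑neutral mean‑preserving spread of it along a segment on which $c$ is affine, with both new posteriors still recommending $k$, again produces a distinct optimal solution of the agent's problem.

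I expect the main obstacle to be the last construction in the ``only if'' direction: turning ``$c$ not strictly convex around $x_k$'' into ``$c$ is affine on a nondegenerate segment whose closure contains $x_k$,'' so that the mean‑preserving spread is simultaneously Bayes plausible and cost neutral. This should follow from the fact that a differentiable convex function is $C^1$, so that the gradients of affine pieces accumulating at $x_k$ converge to $\nabla_k$ and hence those pieces lie in $S_k$; if one prefers to avoid this point one can instead state condition (2) as $S_k=\{x_k\}$, which is exactly what the argument uses. By contrast, the contract‑elimination step in the first paragraph and the entire ``if'' direction are routine given Lemma~\ref{lem:imp-matrix-eqn} and Theorems~\ref{thm:imp} and \ref{thm:imp-contract-decomposition}.
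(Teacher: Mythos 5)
Your contract-elimination step and your ``if'' direction via the contact sets $S_k$ are correct, and they follow the same route as the paper while being noticeably more careful: the paper's own ``if'' argument just asserts that strict convexity of $c$ makes ``the agent's objective strictly convex,'' which is not literally true (the objective is affine in the posterior distribution); your observation that each $S_k$ is a convex set that collapses to $\{x_k\}$ under local strict convexity, after which linear independence pins the weights, is the right way to make that step rigorous. The ``only if'' arguments for (1) and (3) are also fine (since every $x_k$ is induced with positive probability, the perturbed weights $p_k^A+\varepsilon v_k$ remain nonnegative for small $\varepsilon$ without any discarding).

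The genuine gap is exactly where you flagged it, and your proposed patch does not close it. Failure of strict convexity around $x_k$ yields, in every neighborhood of $x_k$, a nondegenerate segment on which $c$ is affine; by the equality case of the subgradient inequality $\nabla_c$ is constant on each such segment, and by continuity of the gradient these constants converge to $\nabla_k$ as the segments shrink toward $x_k$ --- but ``converge to $\nabla_k$'' is not ``equal $\nabla_k$.'' A segment whose gradient merely approximates $\nabla_k$ lies strictly above the supporting hyperplane of $c$ at $x_k$, hence outside $S_k$, so spreading $x_k$ onto it is not cost-neutral and does not produce a second optimum. Indeed the implication can fail outright: with $N=2$ take $c(\mu)=\int_{1/2}^{\mu}g(t)\,dt$ where $g$ is continuous, nondecreasing, vanishes only at $1/2$, and is constant on a sequence of small intervals accumulating at $1/2$; then $c$ is convex and $C^{1}$ and is not strictly convex around $x_k=1/2$, yet $S_k=\{1/2\}$ and the agent's optimum is unique. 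So under the stated definition of ``strictly convex around a point'' the ``only if'' direction of (2) is not provable, and your fallback --- restating (2) as $S_k=\{x_k\}$ for every $k$, i.e.\ the supporting hyperplane of $c$ at $x_k$ touches the graph of $c$ only at $x_k$ --- is the correct fix. You should not feel bad about this: the paper's own proof of this step has the same hole, asserting that non-strict-convexity around $x_k$ produces an entire open neighborhood on which $c$ is affine, which is a strictly stronger conclusion than the hypothesis delivers.
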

\begin{proof}
For the ``if'' direction, suppose (1)-(3) hold. (1) implies that
there exists some contract $T$ so that the first order condition
(\ref{eq:imp-matrix-eqn}) can be satisfied at posteriors $\left\{ x_{k}\right\} _{k=1}^{K}$.
(3) says that there is a unique distribution over $\left\{ x_{k}\right\} _{k=1}^{K}$
that averages back to the prior, and this has to be $\mathcal{E}_{A}$.
(2) says that the agent's objective in (\ref{eq:agent-IC}) is strictly
convex. Therefore, $\mathcal{E}_{A}$ is the unique solution to the
agent's problem. 

For the ``only if'' direction, suppose $\mathcal{E}_{A}$ is uniquely
implementable. This implies (1) since $\mathcal{E}_{A}$ must be implementable.
Let $T$ be the contract that uniquely implements $\mathcal{E}_{A}$.
Suppose (2) does not hold. There then exists another experiment $\mathcal{E}_{A}'$
that also satisfies the first order condition. One way to construct
this is to take some $x_{k}$ induced by $\mathcal{E}_{A}$ but $c$
is not strictly convex around it. There exists some open neighborhood
$U$ containing $x_{k}$ and $c$ is affine over $U$. We can then
split $x_{k}$ into some $x_{k}',x_{k}''\in U$ with $x_{k}'=x_{k}+\epsilon$
and $x_{k}''=x_{k}-\epsilon$ for some $\epsilon\in\mathbb{R}^{N}$
with $\epsilon\cdot\boldsymbol{1}=0$ and $\left\Vert \epsilon\right\Vert $
small enough. Define $\mathcal{E}_{A}'$ as the experiment that is
the same as $\mathcal{E}_{A}$, except that $\mathcal{E}_{A}'$ indices
$x_{k}'$ and $x_{k}''$ with equal probability when $\mathcal{E}_{A}$
induces $x_{k}$. Since $c$ is affine over $U$, the marginal costs
of learning at $x_{k}'$ and $x_{k}''$ are the same as that at $x_{k}$.
By reporting $x_{k}$ whenever $x_{k}'$ or $x_{k}''$ realize, $\mathcal{E}_{A}'$
also satisfies the first order condition and is hence optimal, breaking
the unique optimality of $\mathcal{E}_{A}$. Suppose (3) does not
hold. There are multiple convex combinations of $\left\{ x_{k}\right\} _{k=1}^{K}$
that averages back to the prior. All of them satisfy the first order
condition and are therefore optimal, again breaking the unique optimality
of $\mathcal{E}_{A}$.
\end{proof}

\section{Implementability under Deficient Row Rank\label{appsec:imp-deficient-rank}}

When $\mathcal{E}_{P}$ does not have full row rank, not all experiments
are implementable. From Theorem \ref{thm:imp}, the implementability
of certain $\mathcal{E}_{A}$ depends on both the column space of
$\mathcal{E}_{P}$ and the exact shape of the cost function at $\mathcal{E}_{A}$.
There is no simple characterization for the implementable set of experiments.
I illustrate this with the following example. 
\begin{example}
\label{exa:imp-eA} Three states $\Omega=\left\{ \omega_{1},\omega_{2},\omega_{3}\right\} $
with uniform prior $\mu_{0}$. The agent has an entropy reduction
cost $c(\mu):=H(\mu_{0})-H(\mu)$ with $H(\mu):=-\sum_{n=1}^{3}\mu^{n}\log\mu^{n}$
where $\mu^{n}$ is the probability of state $\omega_{n}$ at belief
$\mu$. The marginal cost vector at $\mu$ is $\nabla(\mu)=\begin{bmatrix}\log\mu^{1} & \log\mu^{2} & \log\mu^{3}\end{bmatrix}^{\intercal}$
up to a normalizing constant.\footnote{See Definition \ref{def:MC-post-sep}.}
Consider the following two contractible experiments with deficient
row ranks, each with only two realizations, 
\[
\mathcal{E}_{1}=\begin{bmatrix}\frac{3}{8} & \frac{5}{8}\\
\frac{3}{8} & \frac{5}{8}\\
\frac{3}{4} & \frac{1}{4}
\end{bmatrix};\mathcal{E}_{2}=\begin{bmatrix}\frac{3}{4} & \frac{1}{4}\\
\frac{1}{4} & \frac{3}{4}\\
\frac{1}{2} & \frac{1}{2}
\end{bmatrix}.
\]

Experiment $\mathcal{E}_{1}$ learns only about $\omega_{3}$ but
not about the relative likelihood of $\omega_{1}$ and $\omega_{2}$.
It induces posterior beliefs $\left(\frac{1}{4},\frac{1}{4},\frac{1}{2}\right)$
and $\left(\frac{5}{12},\frac{5}{12},\frac{1}{6}\right)$. Applying
Theorem \ref{thm:imp}, $\mathcal{E}_{A}$ is implementable if and
only if any induced posteriors $x_{k}$ and $x_{k'}$ satisfies $\nabla(x_{k})-\nabla(x_{k'})\in\operatorname{Col}\mathcal{E}_{P}$,
which says $\mathcal{E}_{A}$ must be supported on posteriors $x$'s
such that $\frac{x^{1}}{x^{2}}=\text{Constant}$, where $x^{n}$
is the probability of state $\omega_{n}$ at posterior $x$. Bayes
plausibility requires $\mu_{0}\in\operatorname{co}\left(\operatorname{Supp}\mathcal{E}_{A}\right)$,
which dictates that the constant must be 1. Figure \ref{fig:imp-example}
plots the set of such posteriors. Implementable $\mathcal{E}_{A}$'s
can only induce posteriors on the blue line segment which is the affine
span of the posteriors induced by $\mathcal{E}_{1}$. 

\begin{figure}[t]
\begin{centering}
\caption{Implementable $\mathcal{E}_{A}$'s in Example \ref{exa:imp-eA} \label{fig:imp-example}}
\par\end{centering}
\begin{centering}
\bigskip{}
\par\end{centering}
\begin{centering}
\begin{minipage}[t]{0.45\columnwidth}%
\begin{center}
{\footnotesize Panel A: Implementability under $\mathcal{E}_{1}$}{\footnotesize\par}
\par\end{center}
\begin{center}
\includegraphics[width=1\columnwidth]{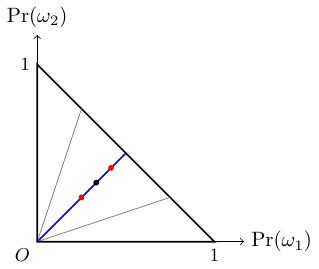}
\par\end{center}%
\end{minipage}\quad{}%
\begin{minipage}[t]{0.45\columnwidth}%
\begin{center}
{\footnotesize Panel B: Implementability under $\mathcal{E}_{2}$}{\footnotesize\par}
\par\end{center}
\begin{center}
\includegraphics[width=1\columnwidth]{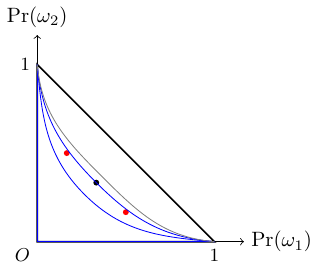}
\par\end{center}%
\end{minipage}
\par\end{centering}
\centering{}%
\begin{minipage}[t]{0.9\columnwidth}%
\begin{singlespace}
{\scriptsize Notes: This figure depicts the set of posteriors induced
by implementable $\mathcal{E}_{A}$'s. Panels A and B present implementability
under $\mathcal{E}_{1}$ and $\mathcal{E}_{2}$ in Example \ref{exa:imp-eA},
respectively. In each panel, the two axes are the probabilities of
$\omega_{1}$ and $\omega_{2}$, and the residual probability is for
$\omega_{3}$. The black dot represents the prior belief. The red
dots are the posterior beliefs induced by either $\mathcal{E}_{1}$
or $\mathcal{E}_{2}$. In Panel A, the blue line segment passing through
the origin and the prior is the set of posteriors that can be induced
by any implementable $\mathcal{E}_{A}$. Other thin grey line segments
are posteriors that also satisfy $x^{1}/x^{2}=\text{Constant}$ but
are ruled out by Bayes plausibility. In Panel B, the blue elliptic
arcs are defined by $x^{1}x^{2}/\left(1-x^{1}-x^{2}\right)^{2}=\text{Constant}$
for any constant no larger than one. There is a continuum of such
arcs, but any implementable $\mathcal{E}_{A}$'s must be supported
on the same elliptic arc. The thin grey arcs are defined by a larger-than-one
constant. Posterior distributions supported on them are clearly not
Bayes plausible.}{\scriptsize\par}
\end{singlespace}

\end{minipage}
\end{figure}

However, it is not in general true that the implementable $\mathcal{E}_{A}$'s
induce posteriors in the affine span of the posteriors induced by
the principal's experiment. $\mathcal{E}_{2}$ is such an example.
Experiment $\mathcal{E}_{2}$ learns only the relative likelihood
of $\omega_{1}$ and $\omega_{2}$ but nothing about $\omega_{3}$.
It induces posterior beliefs $\left(\frac{1}{2},\frac{1}{6},\frac{1}{3}\right)$
and $\left(\frac{1}{6},\frac{1}{2},\frac{1}{3}\right)$. Following
the same steps, an implementable $\mathcal{E}_{A}$ must be supported
on posteriors $x$'s such that $\frac{x^{1}x^{2}}{(1-x^{1}-x^{2})^{2}}=\text{Constant}$.
Given a constant, this defines an elliptic arc that $\mathcal{E}_{A}$
must be supported upon. Bayes plausibility implies that the constant
must not exceed one. More specifically, there is an additional degree
of freedom to choose the value of this constant. But any implementable
$\mathcal{E}_{A}$ must induce posteriors that are all on the same
elliptic arc. The implementable $\mathcal{E}_{A}$'s now induce posteriors
outside the affine hull of the posteriors induced by $\mathcal{E}_{2}$. 
\end{example}

\end{document}